\documentclass[a4paper,UKenglish,numberwithinsect,cleveref,thm-restate]{socg-lipics-v2021}

\usepackage[utf8]{inputenc}
\usepackage[T1]{fontenc} 

\usepackage{babel}
\usepackage{xspace}

\usepackage{algorithm}
\usepackage{algorithmic}

\usepackage{graphicx}
\usepackage{wrapfig}
\usepackage{xcolor}

\def\appendixSection{}         

\ifdefined\appendixFull
    \usepackage[bibliography=common]{apxproof}
    \def\appendixCommands{}
\else\ifdefined\appendixInline
    \usepackage[appendix=inline]{apxproof}
\else\ifdefined\appendixStrip
    \usepackage[appendix=strip]{apxproof}
\else\ifdefined\appendixSection
    \usepackage[appendix=chapterend,bibliography=common]{apxproof}
    \def\appendixCommands{}
    \makeatletter
    \newcommand\prepareAppendixSection{
        \def\axp@beginchapterappendix{\begingroup}
        \def\axp@endchapterappendix{\begingroup}
        \def\chapterappendixprelim{}
        \let\axp@section\subsection
        \let\section\subsection
        \let\subsection\subsubsection
    }
    \makeatother
\fi\fi\fi\fi

\makeatletter
\ifdefined\appendixCommands
    \newcommand{\thmToAppendix}[1]{\axp@writesection \immediate\write\axp@proofsfile{\noexpand\csname#1\endcsname*}}
    \newcommand{\appendixSubsection}[1]{\axp@writesection \immediate\write\axp@proofsfile{\noexpand\subsection{#1}}}
\else
    \newcommand{\thmToAppendix}[1]{}
    \newcommand{\appendixSubsection}[1]{}
\fi
\makeatother

\usepackage{tikz}
\usetikzlibrary{arrows.meta, positioning}

\crefname{property}{property}{properties}
\crefname{constraint}{constraint}{constraints}

\let\DeclareRobustCommandX\DeclareRobustCommand 
\newcommand{\NewMathVariable}[2]{\DeclareRobustCommandX{#1}[0]{\ensuremath{#2}\xspace}}
\newcommand{\NewProblem}[2]{\DeclareRobustCommandX{#1}[0]{\textsf{#2}\xspace}}

\NewMathVariable{\Nbb}{\mathbb{N}}
\NewMathVariable{\Qbb}{\mathbb{Q}}
\NewMathVariable{\Rbb}{\mathbb{R}}
\NewMathVariable{\Zbb}{\mathbb{Z}}
\NewMathVariable{\Rnn}{\mathbb{R}_{\geq 0}}
\NewMathVariable{\Znn}{\mathbb{Z}_{\geq 0}}
\NewMathVariable{\Zp}{\mathbb{Z}_{+}}

\usepackage{mathtools}
\DeclarePairedDelimiter{\paren}{(}{)}

\DeclarePairedDelimiter{\set}{\{}{\}}

\DeclarePairedDelimiter{\fceil}{\lceil}{\rceil}
\DeclarePairedDelimiter{\ffloor}{\lfloor}{\rfloor}

\makeatletter
\newcommand{\authorcommands}[4][blue!80!black]{%
    \expandafter\def\csname#2m\endcsname##1{\marginpar{$\leftarrow$\fbox{#4}}\footnote{$\Rightarrow$~{\sf ##1 --#3}}}
    \expandafter\def\csname #2\endcsname{\@ifstar{\csname @#2\endcsname}{\csname @@#2\endcsname}}
    \expandafter\def\csname @#2\endcsname##1{\textcolor{#1}{[#3: ##1]}}
    \expandafter\def\csname @@#2\endcsname##1{\par\noindent\textcolor{#1}{#3: ##1}\par}
}
\newcommand{\noauthorcommands}[4][blue!80!black]{%
    \expandafter\def\csname#2m\endcsname##1{}
    \expandafter\def\csname#2\endcsname##1{}
}
\makeatother

\pdfoutput=1
\hideLIPIcs
\nolinenumbers

\graphicspath{{./figures/}}

\title{Buy-at-Bulk Facility Location on Trees}

\author{Shamisa Nematollahi}
{Université Paris Cité, CNRS, IRIF, F-75013, Paris, France}
{shamisa@irif.fr}
{https://orcid.org/0009-0009-0486-8724}{}

\author{Daniel Vaz}
{LIGM, Université Gustave Eiffel, CNRS, ESIEE Paris, F-77454 Marne-la-Vallée, France}
{daniel.ramosvaz@esiee.fr}
{https://orcid.org/0000-0003-2224-2185}{}

\authorrunning{S. Nematollahi and D. Vaz} %
\Copyright{Shamisa Nematollahi and Daniel Vaz} %

\ccsdesc[500]{Theory of computation~Facility location and clustering}
\ccsdesc[500]{Theory of computation~Routing and network design problems}

\keywords{buy-at-bulk, facility location, PTAS, approximation algorithms}

\category{} %

\relatedversion{Conference version published at WAOA 2025, \url{doi: 10.1007/978-3-032-06706-7_12}}

\funding{
\textit{Shamisa Nematollahi:} This work was funded by ANR project ANR-21-CE48-0016 (COMCOPT).
\textit{Daniel Vaz:} This work was supported by ANR project ANR-21-CE48-0022 (S-EX-AP-PE-AL)}

\EventEditors{John Q. Open and Joan R. Access}
\EventNoEds{2}
\EventLongTitle{42nd Conference on Very Important Topics (CVIT 2016)}
\EventShortTitle{CVIT 2016}
\EventAcronym{CVIT}
\EventYear{2016}
\EventDate{December 24--27, 2016}
\EventLocation{Little Whinging, United Kingdom}
\EventLogo{}
\SeriesVolume{42}
\ArticleNo{23}

\renewcommand{\epsilon}{\varepsilon}
\DeclareMathOperator{\poly}{poly}
\DeclareMathOperator{\opt}{opt}
\DeclareMathOperator{\apx}{apx}
\DeclareMathOperator{\round}{round}
\DeclareMathOperator{\OPT}{OPT}

\DeclareMathOperator{\total}{total}
\DeclareMathOperator{\totalmin}{total_{\min}}

\DeclareMathOperator{\cost}{cost}

\DeclareMathOperator{\ex}{ex}
\DeclareMathOperator{\nondom}{nondom}
\DeclareMathOperator{\argmin}{argmin}

\newcommand{\ieps}[1][1]{\ensuremath{\epsilon^{-#1}}\xspace}
\newcommand{\dval}{\ensuremath{\lambda}\xspace}

\DeclareMathOperator{\len}{\ell}
\DeclareMathOperator{\ccap}{c_{\operatorname{cap}}}

\NewProblem{\bbfl}{BBFL}
\NewProblem{\kcfl}{kCFL}
\NewProblem{\icfl}{1CFL}
\NewProblem{\rakcfl}{RAkCFL}

\NewProblem{\st}{ST}
\NewProblem{\ufl}{UFL}

\NewProblem{\lp}{LP}
\NewProblem{\ptas}{PTAS}

\newcommand{\indic}{\ensuremath{\mathbf{1}}\xspace}
\newcommand{\zeroes}{\ensuremath{\mathbf{0}}\xspace}
\newcommand{\aset}{\ensuremath{\mathcal{A}}\xspace}
\newcommand{\asetplus}{\ensuremath{\mathcal{A}^+}\xspace}

\newcommand{\emu}{\ensuremath{\epsilon\mu}\xspace}

\renewcommand{\appendixsectionformat}[2]{Details of Section #1 (#2)}

\begin{document}
\maketitle
\nolinenumbers
\begin{abstract}
We consider the buy-at-bulk facility location problem (BBFL),
a problem combining the classic facility location problem with buy-at-bulk network design,
which finds motivation in telecommunication networks.
In it, we are given a graph with edge lengths, opening costs and demands for each vertex, and a monotone and subadditive capacity-cost function,
and our task is to open facilities on a subset of the vertices and route the demand from each vertex to these facilities.
The cost of a solution (which we want to minimize) is given by the opening costs of the chosen facilities, plus the cost on each edge, which is given by its length times the cost of providing enough capacity for the demands through the edge, given by the capacity-cost function.
A common variant of the problem, the $k$-cable facility location problem (kCFL),
considers the case where capacity is provided by buying copies of given cable types, each with a certain capacity and cost.

We study BBFL on tree instances and show, for the unit-demand and splittable variants, that the problem admits a PTAS (a $(1+\epsilon)$-approximation for any $\epsilon > 0$).
We also consider kCFL in the new setting of cable-unsplittable demands, where the demand of a vertex cannot be split among multiple cables.
We show that the problem is NP-hard to approximate to a factor better than $3/2$ on stars, and then provide an algorithm for tree instances that outputs a solution with optimal cost, but which exceeds the capacity on each cable by a factor of $1+\epsilon$.
As a consequence, we show that the problem has a $2$-approximation algorithm on trees.
\end{abstract}


\section{Introduction}

The facility location problem is a classic in operations research,
which concerns the placement of facilities to distribute goods or services
so that the costs of delivery and installation of the facilities remain low.
However, in cases such as telecommunication networks,
the structure of the distribution network matters as much as the actual distances,
as using common distribution trunks can vastly reduce the cost of distribution.
This has led to connection costs being considered from a network design perspective,
giving rise to a problem formulation that more closely considers its applications.

We consider one such formulation, the \emph{buy-at-bulk facility location (\bbfl)} problem, which combines the facility location and buy-at-bulk network design problems.
In it, we have \emph{demands} on the vertices of a graph, and we want to serve these demands using \emph{facilities}.
For this purpose, we must choose some vertices of the graph on which to open facilities and further install a network with enough capacity to support connecting all of the demands to an open facility.
The buy-at-bulk name comes from the fact that the network has subadditive costs:
the installation cost is a unit-length cost of supporting a certain amount of demand, which is a monotone, sub-additive function of the total demand on the edge.
For each edge, the cost is then given by the installation cost times the length of the edge.

Our goal in the buy-at-bulk facility location is to, given a graph with edge lengths, facility costs, vertex demands and an installation cost function, determine which facilities to open, and how to direct the demand of each vertex to an open facility, such that the total cost of open facilities and edges is minimized.

The most common version of the problem uses discrete cables to provide capacity, thus replacing the sub-additive installation cost function by the minimum cost of a configuration of cables that has capacity at least equal to the demand.
To distinguish this setting from the general subadditive case,
we call this version \emph{$k$-cable facility location (\kcfl)}.
The difference is formalized by adding $k$ cable types to the input, each with a capacity $\mu_i$ and a cost $c_i$.
When computing the solution, for each edge with a total demand of $\dval$, we must choose how many copies of each cable to buy, such that the total capacity is at least $\dval$; the installation cost is then the sum of the costs of each cable multiplied by the number of copies purchased.

The buy-at-bulk facility location problem was initially introduced by Meyerson et al.~\cite{MeyersonMP00}. They established that \bbfl is a specific instance of the more general Cost-Distance problem, and thus an $O(\log n)$-approximation algorithm follows (for unsplittable demands).
For \kcfl, Ravi and Sinha~\cite{RaviS02} improved the approximation factor to $O(k)$.
In the special case of a single cable type ($k=1$), which they call the capacitated-cable facility location problem,
they show an approximation factor of $\rho_{\ufl} + \rho_{\st}$ in the uniform-demand version of the problem,
and $2 \rho_{\ufl} + \rho_{\st}$ for non-uniform unsplittable demands,
where $\rho_{\ufl} = 1.488$~\cite{Li13/iandc} and $\rho_{\st} = 1.39$~\cite{byrka2011steiner} denote the best-known approximation ratios for the uncapacitated facility location and Steiner tree problems respectively.
\kcfl (for general $k$) and \bbfl are known to be NP-hard even on a single edge, as \kcfl models the knapsack problem~\cite{Lueker75}.

We focus on studying the approximability of the problem from the ground up,
starting with simple instances.
For this reason, this work focuses on tree graphs, and on optimizing the approximation ratio under different problem variants.
Trees are simple graphs, where techniques can first be studied to obtain near-optimal approximation algorithms (PTAS), and often provide techniques that generalize to other settings.
In particular, many network design and operations research problems are NP-hard on trees, with some having PTAS or good approximations, while others are hard to approximate even on trees.

We consider the \bbfl problem in the splittable demand setting,
in which the demand of each vertex can be split into parts that head to different facilities,
 and give a \ptas for the problem, which similarly implies a \ptas for \kcfl.

\begin{theorem}
\label{thm:intro:bbfl-split:ptas}
There is a \ptas for the \bbfl problem on trees with splittable demands that runs in time $f(\epsilon) \cdot \poly(n)$.
\end{theorem}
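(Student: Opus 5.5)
We plan to design a dynamic program over the tree, rooted at an arbitrary vertex, in which the state records a rounded description of the flow crossing the edge from a vertex $v$ to its parent. In the splittable setting the key structural fact is that, once we restrict to an optimal solution, the flow on each edge $(v,\mathrm{parent}(v))$ can be taken to go in a single direction with some amount $\dval_v \ge 0$: either upward (net demand leaving the subtree $T_v$) or downward (demand entering $T_v$ to reach facilities inside it). This follows because a flow going both ways on an edge can be uncrossed without raising cost, by monotonicity of the capacity-cost function $g$. Hence the cost is $\sum_e \len(e)\, g(\dval_e) + \sum_{\text{open } v} f_v$, and a subtree $T_v$ is summarized by a signed quantity $\dval_v$, so the DP table is $A[v,\dval]$ = minimum cost inside $T_v$ given net transfer $\dval$ across the parent edge. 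The only difficulty is that $\dval$ ranges over exponentially many values.

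To make the state space small, we will round flows geometrically. The plan is to show that there is a near-optimal solution in which every edge flow belongs to a set $\Lambda$ of size $f(\epsilon)\cdot\poly(n)$, or, more precisely, in which at each vertex only $O_\epsilon(1)$ distinct relative magnitudes are needed. First, scale so that flows are powers of $(1+\epsilon)$. Sending extra flow is harmless when supply is splittable and facilities are uncapacitated, since surplus can be treated as slack capacity. Rounding each $\dval_e$ up to a power of $(1+\epsilon)$ increases each edge cost by at most a factor $g((1+\epsilon)\dval)/g(\dval)\le 1+\epsilon$... which fails for subadditive $g$ in general, because $g(2\dval)\le 2g(\dval)$ gives only a factor $2$. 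We will instead use monotonicity together with $g((1+\epsilon)\dval)\le g(\dval)+g(\epsilon\dval)\le (1+\epsilon)\,g(\dval)$ up to ceiling effects; this holds via subadditivity applied to $\lceil 1/\epsilon\rceil$ parts, giving $g(\epsilon\dval)\le ?$. That inequality runs the wrong way, so the rounding must be handled more carefully. We expect this to be the main obstacle. The first approach we would try is to round downward by letting a small $\epsilon$-fraction of demand remain unrouted at each merge point, and then to charge its rerouting to the edges above, using the fact that $g(\dval)/\dval$ is nonincreasing (from subadditivity and monotonicity, up to a factor $2$, which we then refine). The demands of the vertices of $T_v$ are known exactly, so only the facility-absorption part needs rounding.

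To control error accumulation along deep paths, we would not round at every vertex. Instead we apply a standard tree decomposition: either a heavy-path decomposition, or a balanced separator recursion of depth $O(\log n)$. With that, rounding to $(1+\epsilon/\log n)$ still gives a $\poly(n)$-size table only if $f(\epsilon)$ is allowed to be quasi-polynomial. To achieve $f(\epsilon)\poly(n)$ we plan to exploit the fact that the downward flow into $T_v$ is absorbed by facilities, and to guess, for each subtree, its $O(1/\epsilon)$ largest contributions exactly while rounding the rest. Combining children is done by a knapsack-like DP over the children, one at a time, with rounded sums. The cost analysis charges the total rounding loss to $\epsilon\cdot\OPT$ using a telescoping argument over the geometric classes. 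We finish by computing $A[\text{root},0]$ and recovering the solution by backtracking, with running time polynomial in $n$ times the number of rounded values.
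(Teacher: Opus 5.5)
There is a genuine gap. Your setup (uncrossing, then a DP indexed by the signed net flow $\dval$ across the parent edge) is exactly the paper's pseudo-polynomial DP. The step that should make the table polynomial, however, is left open, and the direction you pursue cannot close it.

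You correctly note that rounding flows up fails, because $\ccap((1+\epsilon)\dval)\le(1+\epsilon)\ccap(\dval)$ is false for monotone subadditive $\ccap$. But every fallback you list relies on the same false property, namely that changing an edge's flow by a small relative amount changes its cost by a small relative amount. This covers leaving an $\epsilon$-fraction of demand unrouted and rerouting it above, heavy-path or separator recursion with finer rounding, guessing the $O(1/\epsilon)$ largest contributions exactly, and telescoping over geometric classes.

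A single cable shows why this fails. Take $\ccap(x)=c\lceil x/\mu\rceil$. An edge carrying exactly $\mu$ in the optimum doubles in cost if one extra unit passes through it. Rounding flows down instead leaves demand unserved, and that demand must then go through such saturated edges. For the same reason, the factor $2$ in ``$\ccap(\dval)/\dval$ is nonincreasing up to a factor $2$'' cannot be refined. So error accumulation along deep paths is not the real obstacle. Rounding demands naturally leads to resource augmentation, not a true $(1+\epsilon)$-approximation, which is what the paper accepts in the cable-unsplittable setting.

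The missing idea is to round costs instead of flows, as the knapsack FPTAS rounds the objective and keeps the constraint exact.
\begin{itemize}
\item Keep your exact DP on the binarized tree, reading a state as ``at most $\dval$ units leave'' or ``up to $|\dval|$ units can be absorbed.'' By monotonicity of $\ccap$, a state with smaller signed flow and no larger cost then dominates.
\item Guess $L$ with $\opt\le L\le 2\opt$ by doubling.
\item Round every edge cost $\len(e)\ccap(\dval)$, and every opening cost, to a multiple of $\delta=\Theta(\epsilon L/n)$.
\item Discard partial solutions costing more than $L$.
\end{itemize}
Each table $\total(v)$ then needs only $O(1)$ non-dominated flow values per possible cost value, hence $O(n/\epsilon)$ entries regardless of $D$. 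For the facility-open entries, the best $\dval$ for each cost value can be found by binary search on the monotone $\ccap$. Merging two children takes $O(n^2/\epsilon^2)$ time, so the total is $O(n^3/\epsilon^2)$. Since demands are never rounded, the output is exactly feasible, and the only loss is the additive $O(n\delta)=O(\epsilon)\cdot\opt$ from cost rounding.
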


\begin{corollary}
\label{cor:intro:kcfl-split:ptas}
There is a \ptas for the \kcfl problem on trees with splittable demands that runs in time $f(\epsilon) \cdot \poly(n,k)$.
\end{corollary}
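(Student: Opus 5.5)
The plan is to deduce the corollary directly from Theorem~\ref{thm:intro:bbfl-split:ptas}, by checking that \kcfl is the special case of \bbfl whose capacity-cost function comes from the cables.

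\textbf{Step 1: the cable function is a valid capacity-cost function.} Given cable types $(\mu_i,c_i)_{i=1}^k$, define
\[
g(\dval)=\min\Big\{\textstyle\sum_i c_i x_i : x\in\Znn^k,\ \sum_i \mu_i x_i\ge \dval\Big\}.
\]
Then $g$ is monotone, since any configuration feasible for a larger load is feasible for a smaller one. It is subadditive, since summing optimal configurations for $\dval_1$ and $\dval_2$ gives a feasible configuration for $\dval_1+\dval_2$. Moreover $g(0)=0$. So $g$ meets the requirements of \bbfl, and \kcfl with splittable demands is exactly \bbfl with this $g$.

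\textbf{Step 2: making $g$ available to the algorithm.} The \bbfl algorithm uses its capacity-cost function through some interface, which is the point that needs care. Evaluating $g$ exactly is an unbounded knapsack problem, which is NP-hard. I would handle this in one of two ways.

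\begin{enumerate}
\item If the \ptas of Theorem~\ref{thm:intro:bbfl-split:ptas} only ever queries $g$ at polynomially many loads $\dval$, and is robust to a $(1+\epsilon)$-approximate oracle, then it suffices to evaluate $g$ up to a factor $1+\epsilon$. An FPTAS for unbounded (covering) knapsack does this in time $\poly(k,1/\epsilon)$. Any function $\tilde g$ with $g\le\tilde g\le(1+\epsilon)g$ changes every solution's cost by at most a factor $1+\epsilon$. So the output is a $(1+\epsilon)^2$-approximation, and we rescale $\epsilon$.
\item Since the algorithm's structure depends only on $\epsilon$, one may instead compute exactly only the values it needs. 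For example, $g$ restricted to loads rounded to powers of $(1+\epsilon)$ within relevant ranges can be computed approximately by the knapsack FPTAS.
\end{enumerate}

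\textbf{Step 3: recovering an actual cable solution.} From the approximate evaluations, we store the witnessing configurations $x$. This turns the \bbfl output into an explicit choice of cables on each edge, with cost within the bound.

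\textbf{Running time and the main obstacle.} The total running time is $f(\epsilon)\cdot\poly(n)$ calls, each costing $\poly(k,1/\epsilon)$. This is $f'(\epsilon)\poly(n,k)$, as the corollary claims. The main obstacle is Step~2: confirming that the theorem's algorithm accesses the capacity-cost function only through polynomially many evaluations, and that it tolerates $(1+\epsilon)$-approximate values. I would argue the latter generically via the sandwich $g\le\tilde g\le(1+\epsilon)g$: the optimum for $\tilde g$ is at most $(1+\epsilon)\OPT$, and costs measured under $\tilde g$ upper bound true costs. It remains to check that $\tilde g$ may be taken monotone and subadditive, for instance by taking the cost of the computed configurations themselves as a subadditive upper envelope. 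If it cannot, the \bbfl analysis must be verified to use only approximate subadditivity.
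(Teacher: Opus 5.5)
Your proposal follows the paper's route: \kcfl is the special case of \bbfl whose installation cost is the monotone, subadditive minimum-cost-bag function, and the paper's knapsack-style oracle lemma supplies the $\poly(k,1/\epsilon)$-time $(1+\epsilon)$-approximate evaluation of $\ccap$ that the \bbfl model requires, so Theorem~\ref{thm:intro:bbfl-split:ptas} applies directly. The paper handles the obstacle you flag in Step~2 simply by building that oracle assumption into the \bbfl definition, and your sandwich argument already suffices without needing $\tilde g$ to be monotone or subadditive: the DP's correctness lemmas only compare against a fixed reference solution (an optimum for the true $g$, uncrossed via monotonicity of $g$), and the stored bags make the true cost of the output at most its computed cost.
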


We then move on to studying \kcfl under a setting that is, as far as we know, novel:
using the formulation of the problem as buying cables, we consider the \emph{cable-unsplittable} setting, where the demands passing through an edge must be partitioned into the cables that are installed on that edge, that is, each demand must be assigned to a single cable on each edge, and cannot be split into multiple cables.
This is a natural setting when we consider that the ``cables'' in the problem can represent trucks or other discrete containers,
and a demand, which can represent a package, is indivisible and thus confined to a single container.
In these scenarios, repacking or splitting demands at intermediate nodes is either operationally expensive, technologically infeasible, or not allowed by design.

We show that this problem is APX-hard, as it cannot be approximated to a factor better than $3/2$, even when the input graph is a star.

\begin{theorem}
\label{thm:intro:kcfl-unsplit:apx-hardness}
\kcfl is APX-hard when the input graph is a star and $k=1$.
In particular, it is NP-hard to approximate the problem to a factor of $3/2-\epsilon$ for any $\epsilon > 0$.
\end{theorem}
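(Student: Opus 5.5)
We reduce from \textsc{Partition} (or \textsc{Bin Packing} with two bins) to obtain a gap between cost $2$ and cost $3$, which yields the $3/2-\epsilon$ hardness. The key feature of the cable-unsplittable setting is that, even on a single edge, deciding whether a set of demands fits into $m$ copies of a cable of capacity $\mu$ is a bin packing question. We will construct a star whose centre is the only place where opening a facility is affordable, and put the whole difficulty on one edge.

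Concretely, take a \textsc{Partition} instance $a_1,\dots,a_n$ with $\sum_i a_i = 2B$. We build a star with centre $r$, and leaves $v_1,\dots,v_n$ with demands $a_i$, plus one additional leaf $s$ with demand $0$. All edges $rv_i$ get length $0$ (or a negligible length $\delta$), and the edge $rs$ gets length $1$. Facility opening costs are $\infty$ (or a huge value) everywhere except at $s$, where opening costs $0$. There is a single cable type ($k=1$) with capacity $\mu=B$ and cost $c_1=1$. All demand must travel through the edge $rs$ to the facility at $s$. Edges of length $0$ cost nothing, so the total cost equals the number of cables bought on $rs$. Cables on each edge must partition the demands crossing it, so the cost is exactly the minimum number of bins of size $B$ needed to pack $\{a_i\}$. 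This number is $2$ if the instance is a yes-instance of \textsc{Partition}, and at least $3$ otherwise, since the total is $2B$ and two bins would force an exact partition. Any algorithm with ratio better than $3/2$ would therefore decide \textsc{Partition}, which gives NP-hardness of approximating within $3/2-\epsilon$, and hence APX-hardness.

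The step needing care is making sure the instance is a legitimate input and that the reduction is polynomial. If lengths must be positive, we set the leaf-edge lengths to $\delta$ so small that the total extra cost is below $1/2$ even when every leaf edge carries its own cable. Each leaf edge carries only one demand $a_i\le B$, so it uses exactly one cable, for a total of $n\delta$; choosing $\delta<1/(2n)$ keeps the gap strictly between $2+1/2$ and $3$. Similarly, the opening cost $M$ at vertices other than $s$ is set above $3+n\delta$, so that opening elsewhere is never beneficial. We also require $a_i\le B$, which we can assume without loss of generality since otherwise the answer is trivially no. \textsc{Partition} is only weakly NP-hard, but the numbers appear only as demands and capacity, and the problem remains NP-hard under binary encoding, which suffices for the stated inapproximability. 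Alternatively, we can reduce from the strongly NP-hard fact that \textsc{Bin Packing} is hard to approximate below $3/2$. This is essentially what the construction encodes: $k=1$, a star, and the bin-packing gap inherited directly from the cable-unsplittable constraint on the edge $rs$.
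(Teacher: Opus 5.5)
Your reduction is correct and is essentially the paper's own: a star built from \textsc{Partition} with one length-$1$ edge to the only affordable facility, negligible-length leaf edges, and a single cable of capacity $B$, so that YES-instances cost about $2$ and NO-instances about $3$ (the paper uses leaf lengths $\epsilon/n$, $f(w)=\epsilon$ and cable cost $1-\epsilon$ to the same effect). One small slip: in your positive-length fallback, $\delta<1/(2n)$ only separates $2.5$ from $3$, so you need $n\delta=O(\epsilon)$ (e.g.\ $\delta=\epsilon/n$, as in the paper) to reach $3/2-\epsilon$, although your length-$0$ version is already legal since $\len\colon E\to\Rnn$; likewise, the $3/2$ gap for bin packing is not a strong-sense result, since it is itself derived from \textsc{Partition}.
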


We then show that, by allowing our solution to slightly overload the cables (exceeding the capacity by a factor of $1+\epsilon$), we can compute a solution with optimum cost:

\begin{theorem}
\label{thm:intro:kcfl-unsplit:ra-ptas}
For any $\epsilon>0$,
there is an algorithm for the \kcfl problem on trees with cable-unsplittable demands that computes a solution with optimum cost but exceeds cable capacities by a factor of at most $1+\epsilon$; the algorithm runs in time $n^{k\cdot f(\epsilon)}$.
\end{theorem}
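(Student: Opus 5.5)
We design a dynamic program over the tree, rooted at an arbitrary vertex and binarized in the standard way by adding zero-length edges and dummy vertices with infinite opening cost and zero demand. The difficulty with cable-unsplittable demands is that the cost on an edge $e$ depends on how the set of demands crossing $e$ can be packed into cables, i.e.\ on a bin-packing structure, and not only on the total demand. We resolve this by resource augmentation in the style of the bin-packing PTAS, with cable capacities stretched by $1+\epsilon$.

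\textbf{Step 1 (fix the cables per edge and classify demands).} In an optimal solution, on each edge $e$ we buy $m_{e,i}$ copies of cable type $i$. For a cable of type $i$ with capacity $\mu_i$, a demand is \emph{large} for it if it exceeds $\epsilon\mu_i$, and \emph{small} otherwise. We round large demands up to the next power of $(1+\epsilon)$. Only demands in $[\epsilon\mu_i,\mu_i]$ can be large for a type-$i$ cable, so relative to each type there are only $O(\epsilon^{-1}\log\epsilon^{-1})$ size classes. A cable of type $i$ holds at most $\epsilon^{-1}$ large items, so its \emph{configuration} (the multiset of large classes it holds, plus a rounded amount of small-demand space in multiples of $\epsilon\mu_i$) comes from a set of size $g(\epsilon)$.

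\textbf{Step 2 (DP state on an edge).} For the edge from $v$ to its parent, the state records two things: the direction of flow (demand sent up versus demand arriving from above to facilities inside the subtree), and, for each cable type $i$ and each configuration, how many cables of that type carry that configuration. Since the counts are bounded by $n$, the number of states is $n^{k\cdot g(\epsilon)}$. Rather than tracking demand identities, the state records how many demands of each rounded large class (per type) cross the edge and the total small volume. Consistency across a vertex $v$ is then a flow-conservation check. Large items crossing the parent edge must equal those from the children's edges, plus $v$'s own demand, minus what is served at $v$ if a facility opens there, and symmetrically for downward flows. The key point is that repacking at each vertex is free: cable-unsplittability constrains each edge independently, since a demand only has to fit into one cable per edge and may switch cables between edges.

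\textbf{Step 3 (small items and feasibility).} Small demands are handled fractionally in the DP. On each edge we verify that the total small volume fits into the reserved slack of the chosen cables. Afterwards we pack small items greedily (first-fit) into the cables' remaining space. Because each small item is at most $\epsilon\mu_i$, the greedy packing overflows each cable by at most $\epsilon\mu_i$. Rounding large items up costs at most another factor $1+\epsilon$. Rescaling $\epsilon$ then gives load at most $(1+O(\epsilon))\mu_i$ per cable. The optimum solution, with its own packing, maps to a valid DP state sequence whose cost is exactly the cable and facility cost of OPT. The DP therefore returns cost at most OPT while violating capacities by only $1+\epsilon$.

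\textbf{Main obstacle.} The subtle part is that a demand may be large for one cable type and small for another, and which cables carry which demands can differ from edge to edge. We must show that the class counts stored in the state (defined relative to all $k$ thresholds at once) are enough to reconstruct a consistent routing of actual demands. The plan is to define classes by the common refinement of all $k$ thresholds. That gives $O(k\,\epsilon^{-1}\log\epsilon^{-1})$ classes and keeps the running time at $n^{k f(\epsilon)}$. The routing itself is then extracted top-down by matching class counts, using an exchange argument: any two demands with the same rounded class are interchangeable on every edge up to the $1+\epsilon$ slack.
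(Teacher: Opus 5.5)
There is a genuine gap in Step~3. Handling small demands ``fractionally'' means your DP state carries only the aggregate small volume on each edge, and the consistency check at a vertex is conservation of that volume. This is a splittable-flow relaxation of the small demands. It does not say which individual small demands cross which edge, and a DP entry need not be realizable. For example, suppose a vertex holds small demands of sizes $0.6\,\epsilon\mu$ and $0.4\,\epsilon\mu$, and the DP sends volume $0.5\,\epsilon\mu$ toward each of two facilities; no routing of whole demands matches this. Your first-fit step only packs, edge by edge, a set of items it assumes is already known. It does not produce a routing in which every small demand follows a single path to a single facility, which cable-unsplittability requires. Repairing the mismatch vertex by vertex pushes misrouted demand into subtrees whose budgets did not account for it. It is not clear that the resulting per-edge excess stays $O(\epsilon\mu)$ rather than accumulating along paths. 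Mapping OPT to a DP state only gives ``DP cost $\le$ OPT''. The missing direction is that every DP entry yields a genuine cable-unsplittable solution of the same cost with $(1+O(\epsilon))$ overload.

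This is exactly what the paper's bundling is for. Small demands are rounded down to multiples of $\epsilon^2\mu/n$ and merged into indivisible bundles, which are promoted to large items once they reach $\epsilon\mu/2$; only one leftover bundle per direction per edge remains. At each vertex the DP enumerates which branch each incoming leftover bundle takes (the map $h$), so every DP entry is an integral routing by construction. The price is paid in \Cref{claim:unsplit:1cfl:solution}, which restructures any solution into this form with at most two extra bundles per edge, i.e.\ overload at most $2\epsilon\mu$ per cable. To complete your route you need an analogous restructuring lemma. Alternatively, you could use an unsplittable-flow rounding theorem in the spirit of Dinitz--Garg--Goemans, after adding a super-sink behind the open facilities, to bound the per-arc excess by the largest small demand. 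Two further points also need fixing. First, the small-volume coordinate of your state must be discretized (e.g.\ by rounding to multiples of $\epsilon^2\mu_1/n$ as the paper does); otherwise it ranges over exponentially many subset sums. Second, if your ``direction of flow'' is a single bit per edge, the DP can miss the optimum. Uncrossing fails for unsplittable demands (\Cref{ex:uncrossing_failure}), and the paper only uncrosses demands of equal rounded value (\Cref{lem:uncrossing:edges:unsplit}), so the state must allow upward and downward traffic on the same edge simultaneously.
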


As a consequence, we show that we can obtain a $2$-approximation for the problem:

\begin{theorem}
\label{thm:intro:kcfl-unsplit:apx}
There is a 2-approximation algorithm for the \kcfl problem on trees with cable-unsplittable demands that runs in time $n^{O(k)}$.
\end{theorem}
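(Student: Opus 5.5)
The plan is to derive this from \cref{thm:intro:kcfl-unsplit:ra-ptas} with a fixed constant, say $\epsilon = 1$, and then repair the overloaded cables by paying at most a factor $2$ in cable cost.

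\textbf{Step 1: run the resource-augmented algorithm.} With $\epsilon=1$ (or any fixed $\epsilon\le 1$), \cref{thm:intro:kcfl-unsplit:ra-ptas} runs in time $n^{k\cdot f(1)} = n^{O(k)}$. It returns a solution of cost at most $\OPT$: the same open facilities, the same routing, and on each edge a multiset of cables. Each cable of type $i$ carries a set of whole demands of total size at most $(1+\epsilon)\mu_i \le 2\mu_i$.

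\textbf{Step 2: repair each overloaded cable.} Fix an edge and a cable of type $i$ whose assigned demand set $D$ has total size $d(D) > \mu_i$. Since demands are unsplittable, no single demand routed through a type-$i$ cable can exceed $\mu_i$. I would justify this by arranging that the augmented algorithm only ever places a demand $d_v \le \mu_i$ on a cable of type $i$; a demand larger than $\mu_i$ can never fit a type-$i$ cable in any feasible solution, so we may restrict to such assignments from the start. I then partition $D$ greedily: scan the demands in any order and open a new bin whenever adding the next one would exceed $\mu_i$. Since every item has size at most $\mu_i$ and the total is at most $2\mu_i$, I want two bins to suffice.

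Plain next-fit does not give this directly: sizes $0.6,0.6,0.6$ with $\mu_i=1$ total $1.8$ and already need three bins. So I sort the demands in decreasing order and use the following argument. If some demand exceeds $\mu_i/2$, put the largest one, $a$, in bin 1. The remaining demands total at most $2\mu_i - a < 1.5\mu_i$, which alone still does not fit a single bin. Decreasing-order first-fit into two bins also fails on $0.6,0.6,0.6$. So the right choice is three cables in general: a $(1+\epsilon)$-overloaded cable can always be split into at most $2$ cables only when $\epsilon$ is small.

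Concretely, take $\epsilon$ small and argue as follows. Let $L$ be the number of demands in $D$ larger than $\mu_i/2$. Each such demand exceeds $\mu_i/2$ while the total is at most $(1+\epsilon)\mu_i<1.5\mu_i$, so $L\le 2$. If $L=2$, the two large demands already total more than $\mu_i$, so the small demands total less than $\epsilon\mu_i$, which is at most $\mu_i/2$ when $\epsilon\le 1/2$. Put one large demand together with all the small ones in one cable (size less than $\mu_i/2+\mu_i/2$? no — a large demand can be up to $\mu_i$). Instead use first-fit decreasing into two bins: place the two large demands in separate bins, then add each small demand to the bin with less load. A small item fails only if both bins exceed $\mu_i - \mu_i/2 = \mu_i/2$, and in that case the total would exceed $\mu_i + (\text{item})$. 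Closing this case properly needs slightly more care.

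The cleaner route is a general lemma: a set of items, each of size at most $\mu$ and with total size at most $(1+\epsilon)\mu$ for $\epsilon\le 1/2$, can be packed into two bins of capacity $\mu$. For the proof, sort in decreasing order, put item $1$ in bin A, and put everything else in bin B until an item does not fit. Items $2,3,\dots$ together total at most $(1+\epsilon)\mu - s_1$. If $s_1\ge \epsilon\mu$ this is at most $\mu$, so bin B holds them all. Otherwise every item is smaller than $\epsilon\mu\le\mu/2$, and next-fit with items of size at most $\mu/2$ leaves each closed bin more than half full, so a total of at most $1.5\mu$ needs at most two bins. This is the step I expect to be the main obstacle, and the lemma is how I would close it.

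\textbf{Step 3: conclude.} Replace each type-$i$ cable by at most two type-$i$ cables. The cost on every edge at most doubles, facility costs are unchanged, and every demand still sits in a single cable on each edge. The total cost is at most $2\,\OPT$, and the running time is $n^{O(k)}$ with $\epsilon=1/2$ fixed.
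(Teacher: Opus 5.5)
Your final argument is correct and is essentially the paper's proof: run the resource-augmented algorithm of \cref{thm:intro:kcfl-unsplit:ra-ptas} with a fixed $\epsilon\le 1/2$ (the paper takes $\epsilon=1/10$), then replace every overloaded cable by two copies of the same type and repack its demands, which at most doubles the cable cost (this is the paper's \cref{lem:unsplit:ra-transform}). Drop the initial choice $\epsilon=1$ from Step~1, since your own $0.6,0.6,0.6$ example rules it out; your two-bin lemma (case split on whether the largest demand is at least $\epsilon\mu$, otherwise next-fit, where the first bin closes above $\mu/2$ and the rest totals less than $\mu$) is a valid and slightly more explicit version of the paper's sorted-greedy repacking argument.
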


We start by presenting the formal problem definitions, as well as the connection between \bbfl and \kcfl in \Cref{sec:prelim}.
We then present the results for the splittable demand setting in \Cref{sec:split}, starting with a warm-up on paths to better explain the ideas of the algorithm (\Cref{sec:split:paths}).
Finally, we present the results for the cable-unsplittable setting of \kcfl in \Cref{sec:unsplit}, which for simplicity we consider in the single-cable setting (\Cref{sec:unsplit:k1}); then we give the details on multiple cable types in \cref{sec:unsplit:k}.

We present a short technical overview at the beginning of each section, in particular \Cref{sec:split,sec:split:paths} for the first setting and \Cref{sec:unsplit,sec:unsplit:k1} for the second.
We defer technical proofs to \Cref{sec:proofs}, though we try to give a sketch of the arguments when important.

\subsection{Related Work}

Buy-at-bulk connected facility location is a closely related variant of the problem where open facilities are required to be connected. In this setting, we are given an extra type of cable, called core cable, with a large enough installation cost of $M$ and capacity $\mu = +\infty$, which is used to connect the open facilities.
Friggstad et al.\ gave an \lp-based constant factor approximation algorithm~\cite{FriggstadRSS19}.

Another closely related problem is the airport and railway (AR) problem, corresponding to a variant of the problem where the solution is a forest and facilities have capacity limits (i.e., each facility can serve at most $k$ units of demand).
The AR problem has been studied both with splittable demands \cite{AdamaszekA0M18,AdamaszekAM16,SalavatipourT24} and unsplittable demands~\cite{JowhariN25} in various settings.
Salavatipour et al.~\cite{SalavatipourT24} gives an $O(\log n)$-approximation to the problem, which improves to a $2$-approximation with uniform facility costs.

Buy-at-bulk costs are ubiquitous in our society, and were first introduced
to network design by Salman et al.~\cite{SalmanCRS97}.
Since then, different variants of buy-at-bulk network design have received considerable attention~\cite{Andrews04,AwerbuchA97,ChekuriHKS10,ChekuriKN01,GrandoniI06,GuhaMM01,GuptaKR03,Talwar02}, with an $O(\log n)$-approximation for the multi-commodity setting being the most well-known result~\cite{AwerbuchA97,FakcharoenpholRT04}.

There is a long line of work on the metric facility location problem, both in the capacitated~\cite{AggarwalLBGGGJ13,AnSS17,BansalGG12,
BehsazSS16,ChudakW05,KorupoluPR00,PalTW01} and uncapacitated variants~\cite{Byrka07,GuhaK98,JainMMSV03,JainMS02,Li13/iandc,MahdianYZ06,ShmoysTA97,Sviridenko02}.
For the capacitated variant, the best approximation has factor $5$~\cite{BansalGG12}, and an improved $3$-approximation exists when capacities are uniform~\cite{AggarwalLBGGGJ13}.
For the uncapacitated variant, a $1.488$-approximation algorithm by Li~\cite{Li13/iandc} nearly matches the inapproximability of $1.463$~\cite{GuhaK98,Sviridenko02}.
For trees, the problem is known to be polynomial-time solvable (see e.g.~\cite{Shaw99/anor}) in both capacitated and uncapacitated settings.


\newcommand{\bidir}[1]{\ensuremath{\tilde{#1}}}

\section{Preliminaries}
\label{sec:prelim}

Throughout our work, we use standard graph notation (see e.g.\ \cite{Diestel/0030488/daglib}).
We consider every tree $T$ as having a root $r$, and define parent and child of a vertex $v$ with respect to $r$.
For a vertex $v$, the subtree rooted in $v$ is denoted $T_v$.

For an edge set $E$, we denote by $\bidir{E}$ the set of arcs (directed edges) containing both directions of an edge, that is,
\(
    \bidir{E} = \bigcup_{uv \in E} \set{(u,v), (v,u)},
\)
and denote by $\delta^-(v)$ and $\delta^+(v)$ the incoming and outgoing arcs, respectively, of a vertex $v$ in $\bidir{E}$.

For $x, y \in \Zbb$, let $[x, y] = \set{z \in \Zbb :x \leq z \leq y}$, $[x] = [1,x]$.
We use $\indic^n(S)$ to refer to the indicator vector of $S\subseteq[n]$ in $\set{0,1}^n$, that is, a vector $x \in \set{0,1}^n$ such that $x_i = 1$ if $i \in S$ and $x_i = 0$ otherwise; we write $\indic^n(v)$ instead of $\indic^n({\set{v}})$ and omit the superscript if clear from context.
We use $\zeroes^n \in \set{0}^n$ to be the all-zeroes vector, and write $\zeroes$ if $n$ is clear from context.
Given sets $Y \subseteq X$ and a function $f\colon X \to \Rnn$, we denote $f(Y) = \sum_{y \in Y} f(y)$.

In our algorithms, we sometimes say that one solution \emph{dominates} another to mean that the first is at least as good as the second.
For instance, let $(\mu,c)$ represent a cable, where $\mu$ is the capacity and $c$ is the cost.
We might say that a cable $(\mu, c)$ dominates another cable $(\mu', c')$ if $\mu\geq \mu'$ and $c \leq c'$, as it has at least as much capacity and costs at most as much.
Though the definition of ``domination'' depends on the context and will be introduced as needed, we always say that a solution is \emph{non-dominated} within a set $A$ if no different solution dominates it,
and we denote the \emph{set of non-dominated solutions} in $A$ as $\nondom(A)$.

\subsection{Problem Definitions}
\label{sec:prelim:problem}

We consider two problems,
buy-at-bulk facility location (\bbfl) and $k$-cable facility location (\kcfl),
which blend buy-at-bulk network design and facility location~\cite[Sec.~4.5, Sec.~8.6]{WilliamsonS11/0030297/daglib}.
This section first explains these two settings, how they combine into \bbfl, and how the concept of cables leads to \kcfl.

\subparagraph{Buy-at-bulk network design.}

Network design problems include many well-known combinatorial optimization problems, such as minimum spanning tree and TSP.
At its core, the goal is to find a minimum-cost subgraph that satisfies given demand constraints, where the simplest version is to provide sufficient capacity to serve point-to-point demands.

The buy-at-bulk network design problem takes as input a graph $G=(V,E)$ with edge lengths $\len\colon E \to \Rnn$, a non-decreasing subadditive capacity cost function $\ccap\colon \Znn \to \Rnn$ and source-sink pairs $(s_i, t_i, d_i)$ with demand $d_i$.
The goal is to choose an $s_i$-$t_i$-path $P_i$ for each $i$, such that the total cost on the edges is minimized, where the cost on an edge depends on the total demand through it, $\dval_e = \sum_{P_i \ni e} d_i$, and is given as the product of the length and capacity cost for the demand, for a total of
\(
    \sum_{e \in E} \len(e) \cdot \ccap(\dval_e).
\)

The assumption that $\ccap$ is non-decreasing and subadditive captures the \emph{buy-at-bulk} principle: as demand grows, the cost of installing capacity increases, but the cost per unit demand decreases.
Formally, subadditivity means that $\ccap(x + y) \leq \ccap(x) + \ccap(y)$ for all $x, y \in \mathbb{Z}_{\geq 0}$, reflecting the fact that it is more economical to install a single larger capacity to serve combined demand than to install multiple smaller ones separately.

In our problems, we consider \emph{set-to-set} demands of the form $(X_i, Y_i, d_i)$, where the solution must simultaneously support a flow of value $d_i$ from $X_i$ to $Y_i$ for each demand.

\begin{definition}
Let $G$ be a graph and $\mu\colon E \to \Rnn$ be edge capacities.

A flow $g\colon \bidir{E} \to \Rnn$ is a function that satisfies edge capacities, that is, $g(e) \leq \mu(e)$ for every edge $e \in E$.

The \emph{excess} of $g$ at a vertex $v \in V(G)$ is defined as
$
\ex_g(v) := \sum_{e \in \delta^-(v)}g(e) - \sum_{e \in \delta^+(v)}g(e).
$

We say that a flow is an \emph{$S$-$T$-flow}, or \emph{from $S$ to $T$}, if $\ex_g(v) \geq 0$ for $v \in T$, $\ex_g(v) \leq 0$ for $v \in S$, and $\ex_g(v) = 0$ otherwise.
Its \emph{value} is defined as $|g| := \sum_{v \in T} \ex_g(v)$.
\end{definition}

\subparagraph{Facility location.}

In the facility location problem, we are given a set of clients $V$ and a set of facilities $F$, as well as connection costs $c_{ij}$ of assigning a client $j \in V$ to a facility $i \in F$, and opening costs $f\colon F \to \Rnn$ for each facility.
The goal of this problem is to pick a subset of the facilities $I \subseteq F$ and an assignment $\sigma\colon V \to I$ such that we minimize the total opening and connecting costs,
\(
	\sum_{i \in I} f(i) + \sum_{j \in V} c_{\sigma(j)j}.
\)

\subsubsection{The Buy-at-Bulk Facility Location Problem}

The buy-at-bulk facility location problem (\bbfl) is obtained by combining the above as follows:
we start with a facility location problem on a graph $G$, where both facilities and clients correspond to the vertices of $G$;
each vertex also has a demand, which must be served by one or more facilities, and thus the solution is a subgraph that supports flows from each vertex to the opened facilities with value equal to the demand;
for this purpose, capacities on the edges are not fixed but can be purchased as in a buy-at-bulk setting, with an installation cost function dictating the unit-length cost of providing capacity for a certain demand.

Formally, the input is a tuple $(G,\len,f,d,\ccap)$ with the following parts:
\begin{itemize}
    \item a graph $G=(V,E)$ with edge lengths $\len\colon E\to \Rnn$;
    \item opening costs $f\colon V\to \Rnn$;
    \item demands $d\colon V \to \Znn$;
    \item a monotone and sub-additive installation cost function $\ccap\colon \Znn \to \Rnn$.
\end{itemize}

\noindent
A solution is a tuple $(I,\mathbf{g})$ consisting of:
\begin{itemize}
	\item A set $I \subseteq V$ of facilities to open;
	\item For each vertex $v \in V$, a flow $g_v$ of value $d(v)$ from $\set{v}$ to $I$ .
\end{itemize}

\noindent
The goal of the problem is to find a solution with total minimum cost, given by
\[
\sum_{i\in I} f(i) + \sum_{e\in E}\len(e)\cdot \ccap(g'(e)),
\]
where $g'(uw) := \paren[\big]{\sum_v g_v(u,w)+g_v(w,u)}$ is the \emph{demand on the edge} $uw$.

It is particularly important for our algorithms that a $(1+\epsilon)$-approximation $y$ for $\ccap(x)$ ($\ccap(x) \leq y \leq (1+\epsilon)\ccap(x))$ be computable in polynomial-time.

Let $(I, \set{g_v}_v)$ be a solution.
We say that a demand of $\delta$ \emph{passes} an edge $uw$ (in the direction $(u,w)$) if for some $v \in V$, $g_v(u,w) = \delta$, that is, the flow of $v$ on $(u,w)$ is $\delta$.
Similarly, we say that a demand of $\delta$ \emph{leaves} $u$ if for some $v \in V$, a flow of $\delta$ leaves $u$, $g_v(\delta^+(u)) = \delta$.
We denote by $\cost(uv, \dval) =  \len(uv) \cdot \ccap(\dval)$ the cost of sending $\dval$ units of demand through edge $uv$.

\subsubsection{The k-Cable Facility Location Problem}

The $k$-cable facility location problem is a special case of \bbfl where capacity is obtained by buying a set of cables, each of which has a cost and a capacity.
In particular, we have $k$ available cable types, each with a capacity $\mu_i$ and cost $c_i$, and for each edge we buy copies of each cable so that the total capacity is sufficient to cover the demand.

In the literature, it is generally assumed that the cables satisfy {economies of scale}, meaning that as capacity increases, cost increases, but the cost-per-capacity ratio decreases.
While this assumption is reasonable, it is not required in our algorithms.

We remark that \kcfl is a special case of \bbfl, since the function is
monotone, as covering a larger total demand cannot be cheaper, and
subadditive, as covering the sum of two demands can be done by putting together their cable sets.

An instance of \kcfl is a tuple $(G,\len,f,d,\mathbf{q})$, where:
\begin{itemize}
    \item a graph $G=(V,E)$ with edge lengths $\len\colon E\to \Rnn$;
    \item opening costs $f\colon V\to \Rnn$;
    \item demands $d\colon V \to \Znn$;
    \item $k$ cable types $\mathbf{q}=\{(\mu_1, c_1), (\mu_2, c_2),\ldots , (\mu_k, c_k)\}$, each with capacity $\mu_i$ and cost $c_i$.
\end{itemize}

\noindent
A solution is a tuple $(I,\mathbf{g},\mathbf{Q})$, composed of:
\begin{itemize}
    \item A set $I \subseteq V$ of facilities to open;
    \item For each vertex $v \in V$, a flow $g_v$ of value $d(v)$ from $\set{v}$ to $I$;
    \item For each edge $uw \in E$, a bag $Q_{uw}$ that has sufficient capacity to support the flows $\set{g_v}_v$, i.e.\ $\mu(Q_{uw}) \geq g'(uv) := \paren[\big]{\sum_v g_v(u,w)+g_v(w,u)}$.
\end{itemize}

\noindent
The objective value of a solution is:
\[
\sum_{i\in I} f(i) + \sum_{e\in E}\len(e)\cdot c(Q_e).
\]

Given a set $U$, we denote by $Q \subseteq U^*$ a multi-set containing elements of $U$ with arbitrary repetition, and refer to a multi-set of cables $Q \subseteq \mathbf{q}^*$ as a \emph{bag}.
The capacity of a bag $Q$ is $\mu(Q) = \sum_{i \in Q} \mu_i$ and its cost is $c(Q) = \sum_{i \in Q} c_i$.

The lemma below shows that we can
compute a $(1+\epsilon)$-approximation to the minimum-cost bag supporting a given demand $\dval$, as required by the definition of \bbfl.
The proof follows standard techniques for the knapsack problem~\cite[Sec.~3.1]{WilliamsonS11/0030297/daglib}.

\begin{lemma}[restate=lemCableOracle,name=*]
Let $\{(\mu_i, c_i)\}_{i \in [k]}$ be a set of $k$ cables and $\dval$ be a target demand.

For $\epsilon > 0$, there exists an $(1+\epsilon)$-approximation algorithm to find a minimum-cost bag that has total capacity at least $\dval$, and which runs in time $\poly(k, 1/\epsilon)$.
\end{lemma}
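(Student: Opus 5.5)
This is an unbounded min-cost covering knapsack, so I would use the standard profit-rounding dynamic program. The running time must be $\poly(k,1/\epsilon)$ and independent of $\dval$, so the DP cannot index by capacity.

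First, reduce to the case of $\dval>0$ and cables with $\mu_i>0$; if $\dval=0$ the empty bag is optimal. Then guess an upper bound on the optimum. A single cable type suffices: buying $\lceil \dval/\mu_i\rceil$ copies of cable $i$ covers $\dval$. Moreover, since every bag in an optimal solution contains some cable, $\OPT \ge c_{\min}$ over the used types.

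Rather than bound $\OPT$ globally, I would guess the most expensive cable type $j$ used in an optimal bag; there are $k$ choices. Given the guess, restrict to types with $c_i \le c_j$, and note $c_j \le \OPT$. Set $\theta = \epsilon c_j / N$, where $N$ is to be chosen, and round costs to $c'_i = \lfloor c_i/\theta \rfloor$.

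Next, bound the number of copies of expensive and cheap types. Cheap cables with $c_i \le \epsilon c_j/k$ can only be used in a number of copies bounded by $\OPT/c_i$, which is not bounded in terms of $k$ and $\epsilon$. This is the main obstacle, since rounding costs to zero loses control.

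To handle it, I would avoid rounding and do the DP over cost budget in units of $\theta$. Let $M[t]$ be the maximum capacity achievable with rounded cost exactly at most $t$, for $t \in [0, T]$ with $T = \lceil (1+\epsilon)^{-1}\ldots\rceil$. Concretely, I take $\theta = \epsilon c_j/k'$ and round each $c_i$ up to a multiple of $\theta$. Using $c_i \ge$ something is not required: an unbounded knapsack DP $M[t] = \max_i (M[t-c'_i]+\mu_i)$ works for integer $c'_i \ge 1$. Zero-cost rounded items need care, so I round up with $c'_i = \lceil c_i/\theta\rceil \ge 1$.

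For the error analysis, the optimal bag $Q^*$ has rounded cost at most $\OPT/\theta + |Q^*|$. The quantity $|Q^*|$ is the obstacle: the overhead from rounding each item up is $\theta|Q^*|$. To fix this, I would group copies of the same type. Any optimal bag can be assumed to use each type $i$ in some multiplicity $m_i$. I would treat "blocks" of $2^a$ copies of type $i$ as items, with cost $2^a c_i$ and capacity $2^a\mu_i$. Every multiplicity decomposes into $O(\log)$ blocks, but the blocks' count is still not bounded by $k/\epsilon$.

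The cleaner route is as follows. In an optimal bag, at most one type is used with many copies beyond those whose cost is at least $\epsilon\OPT/k$. Alternatively, since rounding up by factor $(1+\epsilon)$ multiplicatively on block costs keeps per-block error relative, I would round each block cost up to a power of $(1+\epsilon)$. The DP is then run over the $O(k\log(\cdot)/\epsilon)$ distinct costs, keyed on the number of blocks of each cost class. Only blocks of cost at least $\epsilon\OPT/(k\log)$ matter individually, and cheaper total contributions are handled by buying one extra block of the cheapest-density small type.

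I expect this bookkeeping, namely removing the dependence on $\dval$ and the number of copies while keeping total error $\epsilon\OPT$, to be the hardest step. I would first try the simplest fix: take the cheapest-density type for the residual demand, losing at most one extra cable of cost at most $\epsilon c_j$.
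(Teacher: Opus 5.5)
Your proposal does not yet contain an algorithm with a proved guarantee. It sketches several routes and abandons each, and what is missing throughout is an \emph{upper} bound on $\OPT$ relative to your rounding unit. Guessing the most expensive type $j$ of an optimal bag only gives $c_j \le \OPT$. Nothing bounds $\OPT/c_j$, because the optimal bag may consist of $m \approx \dval/\mu_j$ copies of type $j$, with $m$ arbitrarily large compared to $k/\epsilon$. The budget table $M[t]$ in units of $\theta = \epsilon c_j/N$ then needs about $\OPT/\theta \approx mN/\epsilon$ entries, which is not $\poly(k,1/\epsilon)$; this is exactly where your $T$ is left undefined. The same happens in your final fix, where cheap means $c_i \le \epsilon c_j$: the expensive part of the bag can still occupy about $\OPT/\theta$ units of budget. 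Your other routes do not repair this. Binary blocks introduce a $\log(\dval/\mu_i)$ dependence, the claim that at most one type is used with many copies is not proved, and keying a DP on the number of blocks in each cost class is exponential in the number of classes.

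The paper supplies the missing bound in $O(k)$ time. It takes $c_{\apx}$ to be the cheaper of the cheapest single cable with $\mu_i \ge \dval$, and $\lceil \dval/\mu_i \rceil$ copies of the minimum-ratio cable among those with $\mu_i \le \dval$; then $\OPT \le c_{\apx} \le 2\OPT$. Do not, however, copy the rest of the paper's argument. It rounds costs up to multiples of $\delta = \epsilon c_{\apx}/(2k)$ and bounds the error by $k\delta$, which tacitly assumes the optimal bag has at most $k$ cables. With repetitions this fails: for the cables $(1,1)$ and $(\dval, 3\dval/2)$ with $\dval > 6/\epsilon + 1$, the rounded costs favour the single big cable, of cost $1.5\,\OPT$. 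So the obstacle you identified is genuine.

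Your cheapest-density idea resolves it once $c_{\apx}$ is known. Call type $i$ expensive if $c_i \ge \epsilon c_{\apx}$ and cheap otherwise. An optimal bag has at most $1/\epsilon$ expensive copies, so rounding their costs up to multiples of $\theta = \epsilon^2 c_{\apx}$ adds at most $\epsilon c_{\apx}$. The unbounded-knapsack table $M[t]$ (maximum capacity of a multiset of expensive cables of rounded cost at most $t\theta$) is then needed only for $t \le (1+\epsilon)/\epsilon^2$, which takes $O(k/\epsilon^2)$ time. The cheap part of any bag, of capacity $X$ and cost $Y$, can be replaced by $\lceil X/\mu^* \rceil$ copies of the cheap type of minimum ratio $c^*/\mu^*$, at cost at most $Y + c^* < Y + \epsilon c_{\apx}$. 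Return the best of $t\theta + \lceil \max(0, \dval - M[t])/\mu^* \rceil c^*$ over all $t$ (only $t$ with $M[t] \ge \dval$ if no type is cheap). This costs at most $\OPT + 2\epsilon c_{\apx} \le (1+4\epsilon)\OPT$, and rescaling $\epsilon$ gives the lemma.
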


\thmToAppendix{lemCableOracle}

\begin{appendixproof}
We leverage a dynamic programming approach inspired by the PTAS for the 0-1 minimum knapsack problem to find a near-optimal solution for this cable selection problem.

We use the standard framework in the book by Williamson and Shmoys~\cite[Sec.~3.1]{WilliamsonS11/0030297/daglib}.
Given two pairs $(\mu, c)$ and $(\mu', c')$, where $\mu$ and $\mu'$ represent capacity, and $c$ and $c'$ represent cost, we say that $(\mu, c)$ \emph{dominates} $(\mu', c')$ if $\mu \geq \mu'$ and $c \leq c'$,

We will maintain a list of solutions $A[i]$ for each $i \in [k]$,
corresponding to the best trade-offs between capacity and cost that can be achieved using the first $i$ cables.
Each entry of $A[i]$ is a pair $(\mu, c)$, corresponding to the capacity $\mu$ and cost $c$ of a bag using the first $i$ cables.
As in the mentioned framework~\cite[Sec.~3.1]{WilliamsonS11/0030297/daglib}, we keep only the entries of $A[i]$ that are non-dominated.

$A[i]$ is computed as follows:
\begin{itemize}
    \item  $A[1] = \{(0, 0),(\mu_1, c_1)\},$
    \item  $A[i] = \nondom\set[\big]{(\mu + j \cdot \mu_i , c + j\cdot c_i):(\mu, c)\in A[i-1], 0\leq j\leq \fceil*{\paren{\lambda-\mu}/{\mu_i}} }$
\end{itemize}

After computing $A[k]$, find the pair $(\mu, c)$ in $A[k]$ such that $\mu \ge \lambda$ and $c$ is minimized. This pair represents the $(1+\epsilon)$-approximate minimum-cost cable selection.

We represent a solution as a vertex $x\in \mathbb{Z}^K_{\geq 0}$ counting the number of copies of each cable.

The algorithm above runs in time $O(k\cdot C)$, where $C$ is the number of different possible values for the cost.
To obtain an algorithm that runs in time $\poly(k, 1/\epsilon)$, we round the costs up to multiples of  $\delta:=\frac{\epsilon}{2k}c_{\apx}$,
i.e.~$c'_i = \fceil{c_i/\delta} \cdot \delta$,
where $c_{\apx}$ is a $2$-approximation to the optimum cost, and solve the solution using the rounded costs.

To obtain such a $2$-approximation, we construct (at most) two solutions and take the cheapest:
the first is a minimum-cost cable with capacity at least $\dval$,
and for the second, we consider the cable with capacity at most $\dval$ that minimizes the ratio $c_i/\mu_i$ and take the cost of $j:= \fceil*{\dval/\mu_i}$ copies of the cable, for a cost of $j \cdot c_i$.

An optimum solution of cost $c_{\opt}$ either uses a cable of capacity at least $\dval$, in which case it costs at least as much as the first solution,
or otherwise $c_{\opt}/\dval \geq c_i/\mu_i$, as the solution uses cables with ratio at least equal to $c_i/\mu_i$ (by minimality of $i$).
In this second case, we can bound the cost as
\[
j \cdot c_i
= \fceil*{\frac{\dval}{\mu_i}} \cdot c_i
\leq 2\frac{\dval}{\mu_i} \cdot c_i
\leq 2 \dval \frac{c_{\opt}}{\dval} = 2c_{\opt},
\]
where we use in the first inequality that $\mu_i \leq 2\dval$ and $\fceil{x} \leq 2x$ for $x \geq 1/2$,
and in the second inequality the fact that $c_{\opt}/\dval \geq c_i/\mu_i$.

\subparagraph{Approximation Factor Analysis.} Let $c_{\opt}$ be the optimal cost using the original costs, and let $c_{\round}$ be the cost of the solution found by the dynamic program using the rounded costs. We want to show that $c_{\round} \le (1+\epsilon)c_{\opt}$.

For each cable $i$, the rounded cost $c_i'$ is at most $c_i + \delta$. Therefore, $c_i' \le c_i + \delta$.

Let $S$ be the set of cables chosen by the algorithm, and  $S_{\opt}$ be the set of cables in the optimal solution. The cost of the solution with rounded costs is $c_{\round} = \sum_{i \in S} c_i'$, and the same solution has cost $c = \sum_{i \in S} c_i$ with original costs. We remark that $c \leq c_{\round}$, since $c_i\leq c_i'$ for each item.

Also, since the solution found is optimal for the rounded costs, and the optimal solution with original costs has a corresponding solution with rounded costs, we conclude that:
\[
c \leq c_{\round} \leq \sum_{i \in S_{\opt}} c_i' \leq \sum_{i \in S_{\opt}} (c_i + \delta) \leq c_{\opt} + k \cdot \delta \leq c_{\opt} + k \cdot \frac{\epsilon}{2k} c_{apx} \leq c_{\opt} + \epsilon \cdot c_{\opt}\,.
\]
Since $c_{apx} \leq 2c_{\opt}$, the solution with rounded costs is a $(1+\epsilon)$-approximation of the optimal solution.
\end{appendixproof}

\subsubsection{Variants of \bbfl and \kcfl}

Both of the problems above can be considered in different variants, depending on whether we allow the demands to be split along different paths.
The following classic variants are usually considered:
\begin{itemize}
	\item {Unit demand:} $d(v) = 1$ for each $v \in V$;
	\item {Unsplittable:} $g_v$ is supported on a single path for each $v \in V$;
	\item {Splittable:} No restrictions.
\end{itemize}

Notice that in the unit-demand case, any flow is supported on a path w.l.o.g.\ by the integral flow theorem, since capacities are integers (see e.g.~\cite[Corollary 8.7]{KorteV2018}).

Depending on the setting, we might omit some parts of the solution if there is a clear, optimal choice.
For instance, in the unit demand case when the input graph is a tree, the flow $g_v$ is always of value $1$ and determined by the facility that serves the demand of $v$.

\subparagraph{Cable-unsplittable demands.}
We further introduce the \emph{cable-unsplittable} variant of \kcfl, where the demand of a single vertex, besides being supported on a single path, has to be assigned to a specific cable on each edge.
In other words, for each edge $e$, we ask for a bag $Q_e$ such that the demands passing through $e$ are partitioned onto the cables of $Q_e$, with the total demand assigned to each cable not exceeding its capacity.
We consider that cables are uni-directional, that is, the demands assigned to each cable flow in the same direction.

Formally, $Q_{uw}$ is partitioned (according to direction) into $Q_{(u,w)}$ and $Q_{(w,u)}$,
and we have $h_{(u,w)}\colon V \to Q_{(u,w)}$, $h_{(w,u)}\colon V \to Q_{(w,u)}$,
such that for each $q=(c,\mu) \in Q_{uw}$, $\sum_{v:h_{(u,w)}(v)=q} g_v(u,w) \leq \mu$ and $\sum_{v:h_{(w,u)}(v)=q} g_v(w,u) \leq \mu$.
We do not explicitly state the partition of $Q_{uw}$ or the assignments $h_{(u,w)}$, $h_{(w,u)}$ as part of the solution unless needed.

As far as we know, this is the first use of this variant, and also the only variant of \kcfl that does not easily reduce to \bbfl.

\section{Splittable-Demand \bbfl}
\label{sec:split}

In this section, we show a PTAS for the \bbfl problem on trees with splittable demands.
For a simpler introduction to the core ideas of our dynamic program, \Cref{sec:split:paths} describes the case of path instances and polynomial demands.

The following lemmas are useful when designing dynamic programs for the problem:

\begin{lemma}[restate=lemUncrossing,name=*]
\label{lem:uncrossing:edges}
In the context of splittable (or unit) demands and a monotone cable cost function, there is an optimal solution without crossings; that is, for any edge $uw$, demand cannot pass in the directions $(u,w)$ and $(w,u)$ simultaneously.
\end{lemma}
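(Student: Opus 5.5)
Starting from an optimal solution $(I,\{g_v\}_v)$, I will repeatedly cancel opposite flow on each edge and show that neither feasibility nor cost gets worse. Since the flows are defined on arcs of $\bidir{E}$ and the demand on edge $uw$ is $g'(uw)=\sum_v g_v(u,w)+g_v(w,u)$, the whole question is whether cancellation can be done so that $g'$ never increases. Monotonicity of $\ccap$ then gives the claim.

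Fix an edge $uw$ with crossing flow: some $g_a(u,w)>0$ and some $g_b(w,u)>0$, where $a\neq b$ may hold. Cancelling within a single commodity is trivial. If $g_v$ uses both directions of $uw$, lower both by $\min\{g_v(u,w),g_v(w,u)\}$. This keeps $\ex_{g_v}$ unchanged at every vertex, so $g_v$ is still a flow of value $d(v)$ from $\set{v}$ to $I$, and $g'$ drops on $uw$ only.

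The real work is with two different commodities. I will use an exchange (uncrossing) argument built on flow decomposition. First, decompose $g_a$ and $g_b$ into paths from their source to $I$ plus cycles; cycles are removed first, since dropping them only lowers $g'$. Let $P$ be a flow path of $a$ that uses $(u,w)$ with weight $\alpha>0$, and $R$ a flow path of $b$ that uses $(w,u)$ with weight $\beta>0$. Set $\delta=\min\{\alpha,\beta\}$. Write $P=P_1\,(u,w)\,P_2$ and $R=R_1\,(w,u)\,R_2$, where $P_1$ runs from $a$ to $u$, $P_2$ from $w$ to some facility in $I$, $R_1$ from $b$ to $w$, and $R_2$ from $u$ to some facility in $I$. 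Now reroute $\delta$ units of $a$ along the walk $P_1R_2$ and $\delta$ units of $b$ along $R_1P_2$. Both walks still start at the correct source and end in $I$, so each commodity remains a valid flow of value $d(\cdot)$ to $I$.

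Next I check the load on each arc, summed over commodities. The multiset of arcs used is the old one with one copy of $(u,w)$ and one of $(w,u)$ removed. So $g'(uw)$ goes down by $2\delta$ and no other edge gains anything. If a new walk repeats an arc or uses both directions of some edge, the resulting within-commodity cycles or crossings are cancelled as before, which only reduces loads further. Because $\ccap$ is monotone, the cost does not increase, and $I$ is untouched.

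The main obstacle is termination, that is, making sure the process ends with no crossing anywhere. My plan is a potential argument using $\Phi=\sum_{e}g'(e)$. Every exchange lowers $\Phi$ by at least $2\delta>0$, and a crossing exists only while some edge carries flow both ways. If demands are rational (integer) and we take integral path decompositions, $\delta$ is bounded below, so the process stops after finitely many steps. For real-valued flows I would switch to a compactness argument: among optimal solutions, choose one minimizing $\Phi$. This minimum exists because we may restrict to bounded flows with the facility set $I$ fixed, and $\Phi$ is continuous. Such a solution has no crossing, since otherwise one exchange would strictly lower $\Phi$ while keeping the cost optimal. In the unit-demand case the same argument applies, with paths of weight $1$ swapped integrally, so solutions stay path-supported.
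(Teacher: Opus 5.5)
Your proof is correct and uses the same exchange as the paper: send $\delta$ units of $a$ along $P_1R_2$ and $\delta$ units of $b$ along $R_1P_2$. This leaves the total load on every edge other than $uw$ unchanged and lowers the load on $uw$ by $2\delta$. The only real difference is termination: you take an optimal solution minimizing the total load $\Phi=\sum_e g'(e)$ (or descend on the integer $\Phi$ in the integral case), whereas the paper takes an optimal solution with the fewest crossing pairs. Yours is the sounder choice, since rerouting $a$ onto $R_2$ can create new crossing pairs with any commodity that already opposed $b$ there, while $\Phi$ strictly decreases with every exchange.
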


\thmToAppendix{lemUncrossing}

\begin{appendixproof}
Suppose no optimal solution exists with no crossing edges,
and let $(I, \set{g_v}_v)$ be an optimum solution with the least crossings,
where $I$ is the set of open facilities, and each $g_v$ is a flow from $\set{v}$ to $I$ with value $d_v$.
Formally, we consider a crossing as an edge $uw \in E$ together with a pair of flows $g_a$, $g_b$ that have positive value on $(u,w)$ and $(w,u)$, respectively, and we take $(I, \set{g_v}_v)$ that has the minimum number of crossings.

Let $uw \in E$, $g_a$, $g_b$ be a crossing.
We will uncross the flows $g_a$, $g_b$ as long as there are paths using $e$ with positive flow to an open facility,
namely path $P_a$ from $a$ (to $I$) containing $(u,w)$,
and path $P_b$ from $b$ (to $I$) containing $(w,u)$.

Let $\delta_a = \min\set{g_a(e): e \in P_a}$ and $\delta_b = \min\set{g_b(e): e \in P_b}$ be the flow along those paths, and let $\delta_b \leq \delta_a$ w.l.o.g.
We modify the solution as follows:
\begin{itemize}
    \item Decrease $g_a(u,w)$ by $\delta_b$ to $g_a(u,w):=\delta_a-\delta_b$;
    \item Decrease $g_b(w,u)$ by $\delta_b$ to $g_a(u,w):=0$;
    \item For every edge $e$ of $P_a$ after $w$, decrease $g_a(e)$ by $\delta_b$ and increase $g_b(e)$ by $\delta_b$;
    \item For every edge $e$ of $P_b$ after $u$, decrease $g_b(e)$ and increase $g_a(e)$ by $\delta_b$.
\end{itemize}

Notice that we decrease flow $g_a$ along a path from $u$ to an open facility ($P_a$), and increase it by the same amount along a different path from $u$ to an open facility ($P_b$).
Similarly for $g_b$, we decrease and increase the flow along paths from $w$ to an open facility.
Thus, flow conservation is preserved, and excess remains positive at the open facilities.
The total flow and capacity do not increase on any edge, as the increases on $g_a$ are compensated by decreases on $g_b$ (or vice-versa) in every edge,
with the exception of $uw$, where the flow decreases by $2\delta_b$ overall.

After sufficient repetition over different paths in the support of $g_a$, $g_b$, the solution no longer has a crossing on $uw$ for these flows, and thus the number of crossings of the new solution is one fewer than the original, contradicting the minimality of the number of crossings.
Thus, an optimal solution with no crossings must exist.
\end{appendixproof}

\begin{lemma}[restate=lemFacilities,name=*]
\label{lem:uncrossing:facilities}
In the setting of uncapacitated facilities,
there is an optimal solution without facility-crossings; that is, no demand leaves an open facility.
\end{lemma}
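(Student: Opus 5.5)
We prove this with the same minimal-counterexample scheme as \Cref{lem:uncrossing:edges}, rerouting flow instead of uncrossing it. Among all optimal solutions $(I,\set{g_v}_v)$, we pick one that minimizes a potential counting facility-crossings. A suitable potential is the total flow leaving open facilities, $\Phi=\sum_{v}\sum_{i\in I} g_v(\delta^+(i))$, or the number of pairs $(v,i)$ with $g_v(\delta^+(i))>0$. If $\Phi>0$, we will build another solution of no larger cost with smaller $\Phi$, which is a contradiction. If the infimum of $\Phi$ over optimal solutions is not attained, we use a lexicographic/number-of-pairs potential instead.

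The key step is as follows. Suppose some flow $g_v$ leaves an open facility $i\in I$. We decompose $g_v$ into paths from $v$ to facilities in $I$, which is possible since $g_v$ is a $\set{v}$-$I$-flow. Some path $P$ of this decomposition, carrying value $\delta>0$, passes through $i$ and then continues to a facility $i'$. Facilities are uncapacitated, so nothing prevents $i$ from absorbing this flow. We therefore truncate $P$ at its first visit of an open facility: we decrease $g_v$ by $\delta$ on every arc of $P$ after $i$. Flow conservation is preserved at intermediate vertices. The excess at $i$ increases by $\delta$ and at $i'$ decreases by $\delta$, and both stay nonnegative because $i'$ received at least $\delta$ from this path. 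The value $\sum_{u\in I}\ex_{g_v}(u)=d(v)$ is unchanged, so $g_v$ remains a valid flow of value $d(v)$ from $\set{v}$ to $I$.

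For the cost, the opening set $I$ is untouched. On every edge the demand $g'(e)$ weakly decreases, so by monotonicity of $\ccap$ the edge costs do not increase. The new solution is therefore still optimal. Repeating the truncation over all paths of the decomposition makes every path of every $g_v$ stop at its first open facility, and then no flow leaves an open facility.

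The main obstacle is the subtle case $v\in I$. There the path may start at an open facility, so we must interpret ``demand leaving a facility'' correctly: if $v$ is itself open, we set $g_v\equiv 0$ and let $v$ serve its own demand, which again only lowers costs. A second technicality is that the path decomposition may contain cycles. These can be removed first, again using monotonicity, and termination of the process follows since each truncation strictly reduces the support or the total flow of a finite decomposition.
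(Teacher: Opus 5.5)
Your proposal is correct and is essentially the paper's argument: the paper reroutes any demand that passes an open facility so that it is served there, noting that this adds no opening cost and only decreases edge loads, so monotonicity of $\ccap$ preserves optimality. Your path-decomposition truncation, cycle removal and separate treatment of $v\in I$ make that short argument precise. Since truncating every path of a finite decomposition already terminates, you can drop the minimal-potential framing.
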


\thmToAppendix{lemFacilities}

\begin{appendixproof}
Consider an optimal solution. For any demand passing an open facility $v$, we modify the solution so that it is served by $v$ instead. Since the facility $v$ is already open, this incurs no additional cost, and the demand on edges only decreases. Thus, the modified solution is optimal.
\end{appendixproof}

\begin{observation}[restate=obs:uncrossing:unsplit, name=]
\Cref{lem:uncrossing:edges} does not hold for unsplittable demands (see \Cref{ex:uncrossing_failure}).
\end{observation}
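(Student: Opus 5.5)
The observation asks for a counterexample: an instance with unsplittable demands in which every optimal solution has some edge carrying demand in both directions. I will build a small tree instance (a path suffices) where crossing is forced. The idea is that two vertices with large demands each want to reach a facility on the far side of the other. Sending both to one facility, or opening both cheaply, is excluded by the opening costs. Splitting would let us uncross as in the proof of \Cref{lem:uncrossing:edges}, but without splitting the swap changes which facility each demand reaches, and this can overload some edge.

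\textbf{Concrete plan.} Take the path $x - a - b - y$ with facilities allowed only at $x$ and $y$: set $f(x)=f(y)=0$ and make $f(a), f(b)$ huge.

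\textbf{Demands.} Put demand $d(a)=2$ and $d(b)=1$, and add a third unsplittable demand whose presence makes the crossing strictly better. I will choose a capacity function with a threshold, e.g.\ $\ccap(\lambda)=1$ for $\lambda\le 2$ and $\ccap(\lambda)=M$ for $\lambda\ge 3$, which is monotone and subadditive once $M\le 2$. The aim is that some edge must carry total load at most $2$ in each direction. The crossing on edge $ab$ then spends load $3$ on $ab$, but in exchange the edges $xa$ and $by$ each carry only $\le 2$.

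\textbf{Key step.} Compare the cost of every non-crossing assignment (each demand goes to $x$ or to $y$ without opposite flows on a shared edge) against one crossing assignment, and choose lengths so that the crossing one is strictly cheaper. For instance, give $ab$ tiny length and $xa$, $by$ large lengths. I also need the non-crossing alternatives to push load $\ge 3$ onto a long edge. For that I will likely add a pendant vertex or further vertices with demands $2$ at $x$'s side and $1$ at $y$'s side, so that the leftover capacity on each long edge fits only the crossing demand.

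\textbf{Main obstacle.} The main obstacle is tuning numbers so that, with unsplittable demands, every non-crossing solution overloads a long edge. I would first try demands $2$ at $a$ and $1$ at $b$, plus a demand of $1$ at $x$'s neighbour routed toward $x$. I would then enumerate the $2^m$ assignments by hand and check that each loses. It is the subadditive threshold cost, not ordinary capacities, that makes this work. The check is a finite case analysis, which I would present as the referenced example.
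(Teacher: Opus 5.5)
Your strategy is the right kind of argument, and it is what the paper does: build a small path instance and show that one crossing assignment is strictly cheaper than every crossing-free one. The paper uses a five-vertex path with cable types $(3,4)$ and $(5,5)$, demands $2,5,3$, lengths $101,1,3,100$, and a crossing solution of cost $1028$.

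But your proposal stops where the proof begins. The observation is proved only by a concrete instance together with the cost comparison, and you give neither; the ``main obstacle'' you name is the entire content. The tentative choices you list also do not work as stated. You are right that a third demand is needed. With only $d(a)=2$, $d(b)=1$ on $x-a-b-y$, no monotone subadditive $\ccap$ forces a crossing: if $\ell(by)\ge\ell(xa)$ the swapped assignment $a\to x$, $b\to y$ is no worse, and otherwise sending both demands to $y$ is no worse, by subadditivity.

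Where the third demand goes matters. Suppose the extra unit demand sits at $x$'s neighbour $a$, necessarily on a new pendant vertex $p$, since $a$ already has demand $2$. Then the crossing-free assignment $a\to x$, $p\to y$, $b\to y$ puts loads $(2,1,2)$ on $(xa,ab,by)$, against $(2,3,2)$ for your crossing assignment, so it dominates. Likewise, ``$ab$ short, outer edges long'' is not enough. The left and right sides must be balanced against each other, or sending everything to one side wins.

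The plan can be completed by inserting a new vertex $c$ between $x$ and $a$ and putting the extra unit demand there. Take the path $x-c-a-b-y$ with:
\begin{itemize}
\item demands $d(c)=1$, $d(a)=2$, $d(b)=1$;
\item opening costs $f(x)=f(y)=0$ and $f(c)=f(a)=f(b)=100$;
\item lengths $\ell(xc)=10$, $\ell(ca)=5$, $\ell(ab)=1$, $\ell(by)=12$;
\item your $\ccap$ with $M=2$.
\end{itemize}
The assignment $c\to x$, $a\to y$, $b\to x$ puts loads $(2,1,3,2)$ on $(xc,ca,ab,by)$ and costs $10+5+2+12=29$. Any solution opening a facility other than $x$ or $y$ costs at least $100$. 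A crossing-free solution opening only $x$ and $y$ sends a prefix of $c,a,b$ to $x$ and the rest to $y$. The four such solutions have loads $(0,1,3,4)$, $(1,0,2,3)$, $(3,2,0,1)$, $(4,3,1,0)$ and costs $31$, $35$, $37$, $31$. Hence every optimal solution has a crossing on $ab$.

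For $M=2$, the comparison in general requires $\ell(xc)<\ell(by)<\ell(xc)+\ell(ca)-\ell(ab)$, $\ell(ca)+\ell(ab)<\ell(by)$ and $2\ell(ab)<\ell(xc)$. This is the tuning your sketch leaves open.

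For loads up to $4$, your $\ccap$ with $M=2$ equals the cheapest-bag cost for cable types $(2,1)$ and $(4,2)$. So this is even a \kcfl instance, like the paper's. It is smaller, and its verification is a four-case enumeration rather than the paper's lower-bound argument.
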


\thmToAppendix{obs:uncrossing:unsplit}

\begin{toappendix}
\begin{example}
\label{ex:uncrossing_failure}
Consider the following instance of the facility location problem with buy-at-bulk costs on a path graph \(P = (v_1, v_2, v_3, v_4, v_5)\) as represented in Figure~\ref{fig:optimal_solution_example}, where:
\begin{itemize}
    \item Facility opening costs are \(f(v_1) = f(v_5) = 1\), and \(f(v_2) = f(v_3) = f(v_4) = + \infty\);
    \item Vertex demands are \(d(v_2) = 2\), \(d(v_3) = 5\), \(d(v_4) = 3\), and \(d(v_1) = d(v_5) = + \infty\) (used to ensure that facilities must be opened at both ends);
    \item Edge lengths are:\quad
    $
    \ell(v_1,v_2) = 101,\quad \ell(v_2,v_3) = 1,\quad \ell(v_3,v_4) = 3,\quad \ell(v_4,v_5) = 100;
    $
    \item There are two cable types:\quad
    $
    \text{Type 1: } (\mu_1, c_1) = (3, 4), \quad
    \text{Type 2: } (\mu_2, c_2) = (5, 5).
    $
\end{itemize}

We show that in the optimal solution, demand must traverse the edge \((v_2, v_3)\) in \emph{both directions}, violating the condition in the uncrossing lemma.

Open facilities at \(v_1\) and \(v_5\). Then:
\begin{itemize}
    \item Route the demand of \(v_2\) to \(v_5\) using:
         cables of type 1 on edges \((v_2,v_3)\) and \((v_3,v_4)\),
         and of type 2 on edge \((v_4,v_5)\),
    for a total cost of \(4 \cdot 1 + 4 \cdot 3 + 5 \cdot 100 = 516\).

    \item Send the demand of \(v_4\) to \(v_5\) without additional cost, since the existing cable on \((v_4,v_5)\) has sufficient remaining capacity.

    \item Send the demand of \(v_3\) to \(v_1\) by installing:
        cables of type 2 on edges \((v_2,v_3)\) and \((v_1,v_2)\),
    for a cost of \(5 \cdot 1 + 5 \cdot 101 = 510\).
\end{itemize}

The total cost of the solution is:
\[
f(v_1) + f(v_5) + \text{installation cost} = 1 + 1 + 516 + 510 = 1028.
\]

Note that in this optimal solution, the edge \((v_2, v_3)\) carries demand in both directions simultaneously, as
    the demand of \(v_2\) travels rightwards from \(v_2\) to \(v_5\), and
    the demand of \(v_3\) travels leftwards from \(v_3\) to \(v_1\).

To see that this solution is optimal, notice that the total demand is 10, so any solution needs cost at least 1000 to reach $v_1$ or $v_5$ (with length at least 100).
Furthermore, an optimal solution must use two cables of type 2 on the edges incident to $v_1$ and $v_5$ (in total), as otherwise the cost will be higher.
Indeed, since the cost per capacity for type 1 is higher than for type 2, any other configuration of cables will have cost at least 11, which at length 100 gives cost at least 1100, which cannot be optimal.
As the demand of $v_3$ completely uses the capacity of a cable of type 2, it can be routed completely independently to the cheapest facility, $v_1$ (cost $1+5\cdot 102$ vs $5 \cdot 103$ for $v_5$).
The remaining demands are then routed as cheaply as possible, which must be to $v_5$.

This shows that the uncrossing lemma does not extend to unsplittable demands.

\end{example}
\begin{figure}[t]
	\centering
	\begin{tikzpicture}
		\tikzset{
			tealrect/.style={
				rectangle,
				draw=yellow,
				fill=yellow,
				thick,
				minimum size=0.2cm,
				inner sep=2pt,
				font=\small,
				align=center
			},
			vertex/.style={
				circle, draw=black, fill=black, minimum size=0.2cm, inner sep=0pt
			}
		}

		\node[tealrect,label=below:$v_1$] (v1) {};
		\node[vertex, draw=white , fill= white, above=0.05cm of v1] (v11) {};
		\node[vertex,right=3cm of v1,label=below:$v_2$] (v2) {};
		\node[vertex, draw=white , fill= white, right=3cm of v1, above=0.05cm of v2] (v22) {};
		\node[vertex, draw=white , fill= white, above=0.05cm of v22] (v222) {};
		\node[vertex,right=2cm of v2,label=below:$v_3$] (v3) {};
		\node[vertex,draw=white , fill= white, right=2cm of v2, above=0.05cm of v3] (v33) {};
		\node[vertex, draw=white , fill= white, above=0.05cm of v33] (v333) {};
		\node[vertex,right=2cm of v3,label=below:$v_4$] (v4) {};
		\node[tealrect,draw=white , fill= white, above=0.05cm of v4] (v44) {};
		\node[vertex, draw=white , fill= white, above=0.05cm of v44] (v444) {};
		\node[tealrect,right=3cm of v4,label=below:$v_5$] (v5) {};
		\node[tealrect,draw=white , fill= white, above=0.05cm of v5] (v55) {};

		\node[below=5mm of v1] (f1) {\(1\)};
		\node[left=3mm of f1] {\(f(v):\)};
		\node[below=5mm of v2] {\(+\infty\)};
		\node[below=5mm of v3] {\(+\infty\)};
		\node[below=5mm of v4] {\(+\infty\)};
		\node[below=5mm of v5] {\(1\)};

		\node[above=5mm of v1] (d1) {\(0\)};
		\node[left=3mm of d1] {\(d(v):\)};
		\node[above=5mm of v2] {\(2\)};
		\node[above=5mm of v3] {\(5\)};
		\node[above=5mm of v4] {\(3\)};
		\node[above=5mm of v5] {\(0\)};

		\draw (v1) -- node[below, midway] {\(101\)} (v2);
		\draw (v2) -- node[below, midway] {\(1\)} (v3);
		\draw (v3) -- node[below, midway] {\(3\)} (v4);
		\draw (v4) -- node[below, midway] {\(100\)} (v5);


		\draw[-, line width=2pt, orange] (v222) -- (v333);
		\draw[-, line width=2pt, orange] (v333) -- (v444);
		\draw[-, line width=4pt, teal] (v44) -- (v55);

		\draw[-, line width=4pt, teal] (v33) -- (v22);
		\draw[-, line width=4pt, teal] (v22) -- (v11);

		\begin{scope}[xshift=9cm,yshift=3cm]
			\draw[-, line width=2pt, orange] (0,0) -- (0.7,0) node[right] {cable type 1};
			\draw[-, line width=4pt, teal] (0,-0.7) -- (0.7,-0.7) node[right] {cable type 2};
		\end{scope}

	\end{tikzpicture}
	\caption{\nolinenumbers Optimal solution of Example~\ref{ex:uncrossing_failure}.}
	\label{fig:optimal_solution_example}
\end{figure}

\end{toappendix}

\subsection{Splittable-Demand \bbfl on Paths}
\label{sec:split:paths}
\appendixSubsection{Splittable-Demand BBFL on Paths}

Our algorithm is a dynamic program that computes the cost of a solution in a subpath under certain demand constraints.
Although our formulation is not the most natural for paths, it can generalize to trees without too much effort.
The algorithm in this section runs in time $O(n \cdot D)$, where $D = \sum_{v\in V}d(v)$.

Let $(P, \len, f, d, \ccap)$ represent an instance of the problem as defined in \Cref{sec:prelim:problem}, where $P$ is a path.
We consider $P$ as a tree and the two ends as the root $r$ and the leaf $w$.

Let $v$ be a vertex and $\dval \in [-D,D]$.
We use our dynamic program to compute the minimum cost of serving the
sub-path $P_{vw}$, subject to the demand constraint \dval as follows:
\begin{itemize}
    \item If $\dval\geq 0$, a demand of at most $\dval$ leaves node $v$ to its parent.
    \item If $\dval< 0$, a demand of at most $|\dval|$ arrives at node $v$ from its parent.
\end{itemize}

\subparagraph{Dynamic program.}
We define $\total(v)$ to be a set of pairs $(\dval, c)$,
where $c$ is the cost of a solution in $P_{vw}$ assuming that at most $|\dval|$ units of demand leave $v$ towards its parent (or arrive if $\dval < 0$),
including opening costs and cable costs for all vertices and edges in $P_{vw}$ and the edge from $v$ to its parent.

The values of the dynamic program for a vertex $v \in P$, $\total(v)$, are computed recursively by considering how the demand can be distributed among its parent and child.

If $v$ is the leaf of the path, we either send the demand of $v$ to the parent,
or we open a facility and allow the parent $p$ to send any amount of demand $\dval$ to $v$ at cost $f(v) + \cost(uv,\dval)$:
\[
\total(w) = \set{(d(w), \cost(pw,d(w)))} \cup \set{(-\dval, f(w) + \cost(pw, \dval)) : \dval \in [0, D]},
\]

To compute $\total(v)$ for an internal node $v$, we use the values of $\total(v')$ for its child node $v'$ and consider whether a facility is opened at $v$.

If a facility is opened at $v$, we consider $\dval \in [-D,0]$, as we can receive any amount of demand from the parent,
and take the minimum cost of $\total(v')$, representing the minimum cost for serving the subpath rooted at $v'$, while allowing to send any demand to $v$.
We obtain
\[
\total_A(v) = \set*{ (-\dval, f(v) + \cost(pv, \dval) + \totalmin(v') ) : \dval \in [0, D] },
\]
with $\totalmin(v') = \min \{ c : (\dval', c) \in \total(v') \}$.

If no facility is opened at $v$, the flow of demand through node $v$ must be conserved, including demand from $v$ itself, as well as demand from or to $p$ and $v'$. Formally, we have
\[
\total_B(v) = \set*{ (\dval, c + \cost(pv, |\dval|)) : (\dval', c) \in \total(v'), \dval = \dval' + d(v) }.
\]
The values of $\total(v)$ are given by taking the non-dominated solutions out of the two sets above, $\total(v) = \nondom(\total_A(v) \cup \total_B(v))$,
where $(\dval,c)$ dominates $(\dval', c')$ if $|\dval'| \geq |\dval|$, $c' \leq c$ and $\dval, \dval'$ have the same sign ($\dval \cdot \dval' \geq 0$).

\subparagraph*{Correctness of the DP.}
Let \(\OPT\) be an optimum solution. For a given vertex \(v\), let \(\dval^*(v))\) be the demand leaving \(v\) towards its parent (if \(\dval^* > 0\)) or arriving at \(v\) from its parent (if \(\dval^* \le 0\)) in the optimum solution \(\OPT\), and \(c^*(v)\) be the total cost of opening facilities and routing demand in \(P_{vw}\) in the optimum solution \(\OPT\), including the cost \(c^*\) associated with the demand \(\dval^*\) arriving/leaving from \(v\) from/to its parent.

The following lemmas imply the correctness of the algorithm.
\begin{lemma}[restate=lempath1,name=*]
    For every \(v \in V\), there is a pair \((\dval^*(v), c) \in \total(v)\) for some cost $c$.
\end{lemma}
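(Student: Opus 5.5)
The plan is an induction on $v$ from the leaf $w$ towards the root $r$. The claim at each step is that the pair $(\dval^*(v),c^*(v))$, or at least a pair with first coordinate exactly $\dval^*(v)$, is generated in $\total_A(v)\cup\total_B(v)$ and then survives the $\nondom$ filter.

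Generation is a case analysis that mirrors the recurrences; by \Cref{lem:uncrossing:edges} we may assume $\OPT$ has no crossings, so $\dval^*(v)$ is well defined with a sign.
\begin{itemize}
\item \emph{Base case, $v=w$.} If $\OPT$ opens no facility at $w$, then $\dval^*(w)=d(w)$ and the pair $(d(w),\cost(pw,d(w)))$ is listed. If $\OPT$ opens $w$, then by \Cref{lem:uncrossing:facilities} nothing leaves $w$, so $\dval^*(w)=-\dval$ for some $\dval\in[0,D]$, and the pair $(-\dval,f(w)+\cost(pw,\dval))$ is listed.
\item \emph{Inductive step, facility opened at $v$.} Again by \Cref{lem:uncrossing:facilities}, $\dval^*(v)\le 0$. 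Since $\total_A(v)$ enumerates every $\dval\in[0,D]$, the value $\dval^*(v)$ appears there. Its cost is $f(v)+\cost(pv,|\dval^*(v)|)+\totalmin(v')$, which is at most $c^*(v)$ because $\totalmin(v')\le c^*(v')$. Here $c^*(v')$ is attained (up to domination) by the induction hypothesis.
\item \emph{Inductive step, no facility at $v$.} Flow conservation at $v$ in $\OPT$ gives $\dval^*(v)=\dval^*(v')+d(v)$. By the induction hypothesis some $(\dval^*(v'),c)\in\total(v')$, so $\total_B(v)$ contains $(\dval^*(v),\,c+\cost(pv,|\dval^*(v)|))$.
\end{itemize}

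Survival under $\nondom$ is the real issue and the step I expect to be the main obstacle. A pair $(\dval',c')$ of the same sign with $|\dval'|\ge|\dval^*(v)|$ and $c'\le c$ would remove our pair, leaving only a different first coordinate. The statement says ``for some cost $c$'', so the cost is free but the first coordinate is not.

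I would first reduce the danger using structure. In $\total_A(v)$ the cost is non-decreasing in $\dval$ by monotonicity of $\ccap$, so domination inside $\total_A(v)$ only occurs between pairs of equal cost. In $\total_B(v)$ the map $\dval'\mapsto\dval'+d(v)$ is injective, so distinct pairs from $\total_B(v)$ have distinct first coordinates.

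I would then choose $\OPT$ canonically. Among all optimal crossing-free solutions, take one for which the vector $(|\dval^*(v)|)_v$ is lexicographically maximal, processing vertices from the leaf upward; here ``at most $|\dval|$'' is read as slack capacity that may be filled. Suppose our pair at $v$ is removed by some $(\dval',c')$ with $|\dval'|>|\dval^*(v)|$. I would realize $(\dval',c')$ by a sub-solution on $P_{vw}$ and splice it into $\OPT$, keeping $\OPT$ unchanged outside $P_{vw}$.

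The splice has to be consistent at the parent: the parent side must absorb $\dval'$ instead of $\dval^*(v)$. I would argue this using the ``at most'' semantics of the DP entry, charging only the cost of routing the actual $\dval^*(v)$ units, so that the total cost does not increase. The spliced solution is then optimal and lexicographically larger, contradicting the choice of $\OPT$. Pairs that dominate only through equal cost and equal magnitude are handled by breaking ties in favour of $\dval^*(v)$.

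If this splice cannot be made consistent, I would fall back to reading $\nondom$ as retaining every first coordinate that lies on the lower envelope. The explicit enumeration in $\total_A$ suggests this is what the recurrences are meant to keep. Under that reading, survival is immediate from the generation step.
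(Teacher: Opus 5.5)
Your generation step is exactly the paper's proof. That covers the leaf case, the facility case via the full enumeration in $\total_A(v)$, and the no-facility case via $\dval^*(v)=\dval^*(v')+d(v)$. The paper stops there: it asserts the pair lies in $\total(v)$ and never examines the $\nondom$ filter.

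The survival argument you add on top is where the proposal breaks, and it cannot be repaired. Take $\dval^*(v)>0$ and a dominating pair $(\dval',c')$ with $\dval'>\dval^*(v)$. That pair comes from $\total_B(v)$, so it records a sub-solution that pushes exactly $\dval'$ units to the parent. The ``at most'' reading gives you no slack here. The extra $\dval'-\dval^*(v)$ units are not served inside $P_{vw}$, so they must be paid for above $v$. Hence the splice can increase the cost, the spliced solution need not be optimal, and the lexicographic choice of $\OPT$ yields no contradiction.

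In fact, under the literal dominance rule the statement is false. Take the path $r,u,v,w$ (root $r$, leaf $w$) with the following data:
\begin{itemize}
\item demands $d(v)=d(w)=1$ and $d(r)=d(u)=0$;
\item opening costs $f(r)=0$, $f(w)=10$, and $f(u),f(v)$ huge;
\item lengths $\len(ru)=5$, $\len(uv)=0$, $\len(vw)=9$, and $\ccap(x)=x$.
\end{itemize}
The unique optimum opens $r$ and $w$ and sends $v$'s unit to $r$. Its cost is $15$ and $\dval^*(v)=1$. The only generated pair with first coordinate $1$ is $(1,10)$. It is dominated by $(2,9)\in\total_B(v)$, which sends both units up. Splicing $(2,9)$ into $\OPT$ puts a second unit on $ru$ and costs $19$, and the pruned DP indeed returns $19$. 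The convention $\dval\cdot\dval'\ge 0$ also lets positive pairs eliminate pairs with first coordinate $0$.

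So your fallback is the only reading under which the lemma as stated holds. Under it, $\nondom$ discards a pair only in favour of a cheaper pair with the same first coordinate. The example also shows the DP itself breaks under the literal rule. This is the reading the paper tacitly uses, and with it your generation step already is the complete proof, identical to the paper's. Commit to it and drop the splice argument.

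One small point: in the facility case, the bound $\totalmin(v')\le c^*(v')$ is not needed here. It is also not supplied by this lemma's induction hypothesis, which only gives some pair with first coordinate $\dval^*(v')$. That bound is the content of the next lemma.
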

\thmToAppendix{lempath1}

\begin{appendixproof}
    We prove this by induction on the path structure from leaf $w$ to any vertex $v$.

    For the leaf $w$, consider the optimal solution.
    \begin{itemize}
	\item If the demand $d(w)$ is sent to the parent, then $\dval^*(w) = d(w)$, and thus we have that $(d(w), \cost(pw,d(w))) \in \total(w)$, as desired.
	\item If a facility is opened, let $\dval^*(w) \le 0$ be the demand arriving from the parent.
    Then, $(\dval^*(w), f(w) + \cost(pw, |\dval^*(w)|)) \in \total(w)$, and thus the base case holds.
    \end{itemize}

    We now show $(\dval^*(v), c_v) \in \total(v)$ for some $c_v$.
    By induction, $(\dval^*(v'), c_{v'}) \in \total(v')$ for child $v'$ of $v$.
    \begin{itemize}
    \item
    If a facility is opened at $v$, $\total(v)$ considers all $\dval$, including $\dval^*(v)$.
    Since by induction $(\dval^*(v'), c_{v'}) \in \total(v')$,
    $\total(v)$ contains $(\dval^*(v), \total_{\min}(v') + f(v) + \cost(pv, \dval^*(v)))$, satisfying the claim.
    \item
    If no facility is opened at $v$, demand flow is conserved, so $\dval^*(v) = \dval^*(v') + d(v)$,
    and $\total(v)$ includes pairs $(\dval, c + \cost(pv, \dval))$ for $(\dval', c) \in \total(v')$ and $\dval = \dval' + d(v)$.
    Since $(\dval^*(v'), c_{v'}) \in \total(v')$, $\total(v)$ contains $(\dval^*(v' ) + d(v), c_{v'} + \cost(pv, \dval^*(v' ) + d(v))) = (\dval^*(v), c_v)$, as required.
    \end{itemize}
    In both scenarios (facility opened or not at $v$), we have shown that $(\dval^*(v), c_v) \in \total(v)$. By induction, this holds for all $v \in V$.
\end{appendixproof}

\begin{lemma}[restate=lempath2,name=*]
    For every $v \in V$, let $(\dval^*(v), c) \in \total(v)$ be the pair corresponding to the optimal solution. Then, $c \leq c^*(v)$.
\end{lemma}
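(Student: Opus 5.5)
We argue by induction along the path, from the leaf $w$ up to any vertex $v$, mirroring the proof of the previous lemma. We strengthen the statement slightly: for every $v$, $\total(v)$ contains a pair $(\dval, c)$ with the same sign as $\dval^*(v)$, $|\dval| \le |\dval^*(v)|$ and $c \le c^*(v)$. Working with this form is necessary because $\total(v)$ keeps only non-dominated pairs, so the specific pair built in the previous lemma may have been discarded in favour of one that dominates it. Since a dominating pair has no larger demand in absolute value and no larger cost, passing to it preserves the bound we want.

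For the base case at the leaf $w$, we split on whether $\OPT$ opens a facility at $w$. If it does not, then $\dval^*(w)=d(w)$ and $c^*(w)=\cost(pw,d(w))$, which is exactly the cost of the corresponding pair. If it does, then $c^*(w) = f(w)+\cost(pw,|\dval^*(w)|)$, which again matches the corresponding pair.

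For the inductive step at an internal vertex $v$ with child $v'$, we decompose $c^*(v)$ as the cost of $\OPT$ inside $P_{v'w}$ together with the edge $vv'$ (this is $c^*(v')$), plus $f(v)$ if $v$ is open, plus $\cost(pv,|\dval^*(v)|)$.
\begin{itemize}
\item If $v$ is open, the entry built in $\total_A(v)$ has cost $f(v)+\cost(pv,|\dval^*(v)|)+\totalmin(v')$. By the induction hypothesis, $\totalmin(v') \le c^*(v')$, which gives the bound.
\item If $v$ is closed, flow conservation gives $\dval^*(v)=\dval^*(v')+d(v)$. The induction hypothesis supplies a pair $(\dval',c')$ with $c'\le c^*(v')$, and this yields an entry of $\total_B(v)$ with cost $c'+\cost(pv,|\dval'+d(v)|)$.
\end{itemize}

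The main obstacle is the closed case once dominated pairs have been pruned. If $\dval'$ differs from $\dval^*(v')$, the shifted value $\dval'+d(v)$ need not satisfy $|\dval'+d(v)|\le|\dval^*(v)|$ when the signs are negative: a smaller incoming demand plus $d(v)$ could be larger in absolute value. Our first attempt would be to argue directly using the definition of $\total$ as an "at most $|\dval|$" relaxation. A solution that tolerates receiving $|\dval|$ units from the parent also tolerates receiving fewer, since facilities are uncapacitated and cable cost is monotone. So each entry certifies a feasible cost for every smaller magnitude of the same sign, and we can choose the entry of $\total_B(v)$ corresponding to the exact value $\dval^*(v)$, whose cost is bounded by monotonicity of $\ccap$. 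Failing that, we would simply prove the claim for the unpruned sets $\total_A(v)\cup\total_B(v)$, where the exact pair from the previous lemma exists and the cost comparison is immediate. We would then observe that pruning only keeps pairs with at most that cost.
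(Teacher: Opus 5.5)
There is a genuine gap: the closed case of your inductive step is never proved. Your skeleton (induction from the leaf $w$, the open/closed split, using $\totalmin(v')\le c^*(v')$ in the open case) matches the paper's. You are also right that pruning by $\nondom$ needs attention; the paper's proof passes over it silently. Your invariant (some pair of the same sign with $|\dval|\le|\dval^*(v)|$ and $c\le c^*(v)$) is a weakening of the lemma, not a strengthening: it says nothing about the entry of $\total(v)$ at exactly $\dval^*(v)$.

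For $\dval^*(v)<0$ it is the wrong invariant. A pair with smaller $|\dval|$ describes a subtree that absorbs \emph{less} demand from its parent, so it cannot stand in for the optimal pair. For example, suppose the leaf $w$ is open in $\OPT$ with $\dval^*(w)=-5$. Then $(0, f(w)+\cost(pw,0))$ satisfies your invariant, and under your reading of domination it dominates every $(-\dval, f(w)+\cost(pw,\dval))$, since $\ccap$ is monotone. If only that pair survives, no demand can ever be routed to the facility at $w$. So the sign flip you noticed in $\dval'+d(v)$ shows this invariant cannot propagate; it is not a technicality to patch.

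Neither fallback repairs this:
\begin{itemize}
\item In (a) the monotonicity runs the wrong way. An entry certifying that the subtree can take $|\dval'|\le|\dval^*(v')|$ units, with cable cost paid only for $|\dval'|$, says nothing about taking $|\dval^*(v')|$ units. Also, $\total_B(v)$ only contains values $\dval'+d(v)$ for $\dval'$ actually stored in $\total(v')$, so there need not be an entry at $\dval^*(v)$ to choose.
\item In (b), $\total_A(v)\cup\total_B(v)$ is built from the \emph{pruned} set $\total(v')$. The exact pair at $\dval^*(v)$ is present only if the exact pair at $\dval^*(v')$ survived pruning, which is the point at issue. Moreover, under your cross-value reading of domination the previous lemma you cite is false: the leaf example above deletes the pair at $\dval^*(w)$.
\end{itemize}

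The missing step is to fix the meaning of $\nondom$ so that the exact-value invariant propagates. Pruning should only compare entries with the same value of $\dval$ and keep a cheapest one. This is how domination is defined explicitly for the dynamic program in the cable-unsplittable section, and the proof here tacitly assumes it; the definition given for the path is garbled as written. With that reading no strengthening is needed, and the argument is the paper's:
\begin{itemize}
\item By induction, the entry of $\total(v')$ at $\dval^*(v')$ has cost $c_{v'}\le c^*(v')$.
\item If $v$ is closed, $\total_B(v)$ contains $(\dval^*(v), c_{v'}+\cost(pv,|\dval^*(v)|))$.
\item If $v$ is open, $\total_A(v)$ contains an entry at $\dval^*(v)$ of cost $f(v)+\cost(pv,|\dval^*(v)|)+\totalmin(v')$.
\item In both cases the entry at $\dval^*(v)$ that survives pruning costs at most this, hence at most $c^*(v)$.
\end{itemize}
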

\thmToAppendix{lempath2}
\begin{appendixproof}
    We prove this by induction on the path structure from the leaf $w$ to any vertex $v$.
    For the leaf $w$ with parent $p$, consider the optimal solution.
    \begin{itemize}
        \item If the demand $d(w)$ is sent to the parent, the cost of doing so is $\cost(pw,d(w))$, which matches the dynamic program.
        \item If a facility is opened at $w$, let $\dval^*(w) \leq 0$ be the demand arriving from the parent. Then, the cost of the optimum solution is $f(w) + \cost(pw, |\dval^*(w)|)$, which corresponds to the entry $(-\dval^*(w), f(w) + \cost(pw, |\dval^*(w)|))$ added to $\total(w)$.
    \end{itemize}
    In both cases, the cost of the solution is exactly $c^*(v)$, and thus the base case holds.

    We now show by induction that for $(\dval^*(v), c_v) \in \total(v)$, $c_v \leq c^*(v)$.
    Let $v'$ be the child of $v$ and $p$ the parent of $v$.
    By induction, $(\dval^*(v'), c_{v'}) \in \total(v')$ for the child $v'$ of $v$, with $c_{v'} \leq c^*(v')$.
    \begin{itemize}
        \item If a facility is opened at $v$,
        $c_v
        \leq f(v) + c_{v'} + \cost(uv, |\dval^*(v)|))
        \leq f(v) + c^*(v') + \cost(uv, |\dval^*(v)|))
        \leq c^*(v)$, as the optimum solution, besides paying $c^*(v')$, must also open the facility at $v$ and pay for demand $|\dval^*(v)|$ to arrive from the parent $u$.

        \item If no facility is opened at $v$, demand flow is conserved, so $\dval^*(v) = \dval^*(v') + d(v)$.
        The cost of the optimum solution is $c^*(v') + \cost(pv, |\dval^*(v)|)$, and the cost by the DP is
        $c_v = c_{v'} +  \cost(pv, |\dval^*(v)|) \leq c^*(v') + \cost(pv, |\dval^*(v)|) \leq c^*(v)$, as desired.
    \end{itemize}
\end{appendixproof}

\begin{lemma}[restate=lempath3,name=*]
	The values $\total$ can be calculated in time $O(n\cdot D)$, where $D = \sum_v d_v$.
\end{lemma}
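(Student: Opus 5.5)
The plan is a direct accounting argument. First, I would fix how each set $\total(v)$ is stored: as an array indexed by $\dval \in [-D,D]$, where entry $\dval$ holds the smallest cost seen so far for a pair with first coordinate $\dval$, or $+\infty$ if there is none. Keeping only the minimum cost per $\dval$ is compatible with taking non-dominated solutions. A pair with the same $\dval$ and larger cost is dominated, so discarding it never removes a non-dominated pair. Under this representation, every $\total(v)$ has $O(D)$ entries, and it suffices to show that each vertex is processed in $O(D)$ time.

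Next, I would go through the cases of the recurrence.
\begin{itemize}
\item For the leaf $w$, the set $\total(w)$ consists of one pair for sending $d(w)$ upward and $D+1$ pairs for $\dval \in [0,D]$. Each pair needs one evaluation of $\cost(pw,\cdot)$, so this takes $O(D)$ time, assuming each evaluation of $\ccap$ (or of its approximation oracle) counts as unit time.
\item For an internal vertex $v$, we first compute $\totalmin(v')$ by one scan over the $O(D)$ entries of $\total(v')$. Then $\total_A(v)$ is filled with $D+1$ entries, each using a constant number of additions and one evaluation of $\cost(pv,\cdot)$.
\item For $\total_B(v)$, each entry $(\dval',c)$ of $\total(v')$ is shifted to $\dval'+d(v)$ and its cost is increased by $\cost(pv,|\dval'+d(v)|)$. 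That is one operation per entry, so $O(D)$ in total.
\end{itemize}

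The remaining step is to compute $\nondom(\total_A(v)\cup\total_B(v))$ in $O(D)$ time. I would first merge the two arrays pointwise by taking the minimum cost at each index. The positive and negative sides must then be handled separately, because domination only compares pairs of the same sign. On each side, a pair dominates every pair of the same sign with larger $|\dval|$ and cost at least as large; this is the ordering consistent with the path lemma and the intended meaning of ``at most $|\dval|$''. Hence one linear sweep over $|\dval|$ in increasing order, keeping the running minimum cost, discards every entry whose cost is not strictly below that running minimum. Each side costs $O(D)$.

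I also need the shifted values to stay inside the index range $[-D,D]$. On the positive side, $\dval$ is at most the total demand in the subpath, which is at most $D$. On the negative side, $|\dval| \le D$ by construction. So no entry falls outside the array.

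Summing $O(D)$ over the $n$ vertices, processed from the leaf to the root, gives $O(n\cdot D)$.

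The only real obstacle is this non-dominance filtering, together with the sign and monotonicity conventions: a naive pairwise comparison would cost $O(D^2)$ per vertex. The sorted-by-index sweep with a running minimum avoids that.
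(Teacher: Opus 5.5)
Your proposal is correct and follows essentially the same argument as the paper: each table has $O(D)$ entries indexed by $\lambda\in[-D,D]$, $\totalmin(v')$ is precomputed with one scan, and both the facility case and the no-facility case take $O(1)$ work per entry, which gives $O(D)$ per vertex and $O(n\cdot D)$ overall. You are more careful than the paper, which never accounts for the $\nondom$ step or the index range. Your per-sign linear sweep with a running minimum handles the $\nondom$ step in $O(D)$ time, and it works under either direction of the domination order, so the convention you chose does not affect the bound.
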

\thmToAppendix{lempath3}

\begin{appendixproof}
The dynamic program computes the values of $\total(v)$ for each vertex $v$ in the path.
For each vertex $v$, $\total(v)$ stores a set of pairs $(\dval, c)$, where $\dval$ represents a demand constraint, and $c$ is the minimum cost of serving the subpath $P_{vw}$ under that constraint.

The possible demand values are $\dval \in [-D,D]$, soat most $O(D)$ values.
This means that the size of $\total(v)$ is $O(D)$.

To compute $\total(v)$, we iterate over all values of $\dval \in [-D, D]$.
Since $v$ lies on a path, it has at most one child, and we compute each entry in $\total(v)$ using the child’s $\total$ table.
For each $\dval$, we consider two cases:

\begin{itemize}
    \item \emph{A facility is opened at $v$}: The entire demand $\dval$ must be served locally at $v$, so the cost is computed directly from the opening cost and the cost of serving $\dval$. To support efficient computation across all $\dval$, we preprocess the child’s $\total$ table into a $\total_{\min}$ array in $O(D)$ time, allowing each relevant cost lookup to be performed in $O(1)$ time.

    \item \emph{No facility is opened at $v$}: We compute the cost by looking up $\total$ of its child and adding the corresponding cost. This also takes $O(1)$ time per entry.
\end{itemize}

Each value of $\dval$ is handled in $O(1)$ time using preprocessed data, so computing $\total(v)$ takes $O(D)$ time. Over all $n$ vertices, the total runtime is $O(n \cdot D)$.
\end{appendixproof}


\subsection{Splittable-Demand \bbfl on Trees}
\label{sec:split:trees}

\appendixSubsection{Splittable-Demand BBFL on Trees}

In this section, we design an algorithm for splittable demand \bbfl on trees.
Let $T$ represent the tree with a root node $r$.
We assume that $T$ is binary by the lemma below.

\begin{lemma}[restate=lemBinary,name=*]
Any tree $T$ can be transformed into a binary tree $T'$ without changing the optimum or the cost of solutions.
\end{lemma}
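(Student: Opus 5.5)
The plan is the standard binarization: replace each high-degree vertex by a chain of dummy vertices joined by zero-length edges, and make the dummies unable to host facilities or demand. Concretely, let $v$ have children $u_1,\dots,u_m$ with $m\ge 3$. We replace $v$'s child edges by a path $v=v_0, v_1,\dots,v_{m-2}$ of new vertices joined by edges of length $0$. We attach $u_i$ to $v_{i-1}$ for $i\le m-2$, and attach both $u_{m-1}$ and $u_m$ to $v_{m-2}$, keeping each original length $\len(vu_i)$. Each new vertex $v_j$ ($j\ge1$) gets $d(v_j)=0$ and $f(v_j)=+\infty$, or a sufficiently large finite value if infinite costs are disallowed. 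The original $v$ keeps its demand and opening cost. Repeating this top-down over all vertices yields a binary tree $T'$ with $O(n)$ vertices, computable in linear time.

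Next we show that solutions correspond with equal cost in both directions. Given a solution $(I,\set{g_v}_v)$ on $T$, we keep the same $I$ (no dummy is opened). For each flow we route the value on arc $(u_i,v)$ or $(v,u_i)$ along the unique path between $u_i$ and $v$ in $T'$. This path consists of the original-length edge $u_iv_{i-1}$ followed by zero-length edges. Flow conservation is preserved because dummies have no demand and are not facilities. Each edge $u_iv_{i-1}$ carries exactly the load $g'(vu_i)$ and has the same length, so it costs the same. The zero-length edges contribute $0\cdot\ccap(\cdot)=0$ regardless of their load.

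Conversely, take a solution on $T'$ of finite cost (or any optimal one). It opens no dummy facility. Contracting the zero-length chain back into $v$ gives a flow on $T$: the flow on $u_iv_{i-1}$ becomes flow on $u_iv$, and conservation at $v$ follows by summing conservation over $v_0,\dots,v_{m-2}$. The opening costs are equal and the costs of the original-length edges are equal, so the costs coincide. Hence the optimum values agree and solutions map back and forth without changing cost.

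The only delicate point is the use of $f=+\infty$ on dummies. If the model requires finite costs, I would set $f(v_j)$ larger than the cost of any fixed feasible solution, for example opening every vertex. Then no optimal or near-optimal solution opens a dummy. Alternatively, one can observe that opening a dummy is never needed, since opening $v$ instead is equivalent via zero-length edges up to $f(v)$. The finite-value setting then preserves approximation guarantees for the algorithms that follow. The same construction works for the unit-demand and unsplittable variants, since paths map to paths.
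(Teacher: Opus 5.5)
Your proposal is correct and takes essentially the same approach as the paper: insert dummy vertices with zero demand and opening cost $+\infty$, joined to their parents by zero-length edges, while the original child edges keep their lengths. The only difference is that you arrange the dummies in a chain rather than the paper's repeated pairing of children under new vertices, which is immaterial here (both add $O(n)$ vertices), and your argument for the correspondence of solutions (conservation at $v$ obtained by summing over the chain) is more explicit than the paper's.
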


\thmToAppendix{lemBinary}

\begin{appendixproof}
To transform $T$ into a binary tree $T'$, we perform the following steps:
\begin{itemize}
    \item Node conversion:
        \begin{itemize}
            \item For each node $v$ in $T$ with more than two children, we create a new level of nodes between $v$ and its children as follows (see Figure~\ref{fig:binary_tree_transformation}).
            \item If $v$ has an odd number of children, add a new child node and connect it to $v$.
            \item Until $v$ no longer has children, take two children of $v$, disconnect them from $v$ and make them children of a new vertex.
            \item Once $v$ no longer has children, take all of the newly created vertices that do not currently have a parent, and make them children of $v$. Repeat from the start if $v$ still has more than two children.
        \end{itemize}
    \item Cost assignment:
        \begin{itemize}
            \item We assign an opening cost of infinity to all newly created vertices.
            \item We assign a cost of zero to every edge connecting a newly created vertex to its parent.
        \end{itemize}
\end{itemize}
By performing these transformations, we ensure that the optimal solution for $T'$ directly corresponds to an optimal solution for $T$ and vice-versa.
This is because any optimal solution for $T$ can be mapped to a solution for $T'$ by adding the necessary intermediate nodes without affecting the total cost.
\begin{figure}[t]
	\centering
	\begin{minipage}{0.45\textwidth}
		\centering
		\begin{tikzpicture}[scale=0.7, every node/.style={circle,draw,draw=teal, fill=teal,minimum size=0.1mm}]
			\node (v) at (0,0) {};
			\node (c1) at (-3,-2) {};
			\node (c2) at (-1,-2) {};
			\node (c3) at (1,-2) {};
			\node (c4) at (3,-2) {};

			\draw[thick] (v)--(c1);
			\draw[thick] (v)--(c2);
			\draw[thick] (v)--(c3);
			\draw[thick] (v)--(c4);

		\end{tikzpicture}

	\end{minipage}
	\hfill
	\begin{minipage}{0.45\textwidth}
		\centering
		\begin{tikzpicture}[scale=0.7, every node/.style={circle,draw,draw=teal, fill=teal,minimum size=0.1mm}]
			\node (v) at (0,0) {};
			\node[draw=teal, dashed, fill=white,thick] (x1) at (-2,-2) {};
			\node[draw=teal, dashed, fill=white,thick] (x2) at (2,-2) {};
			\node (c1) at (-3,-4) {};
			\node (c2) at (-1,-4) {};
			\node (c3) at (1,-4) {};
			\node (c4) at (3,-4) {};

			\draw[dashed, teal, thick] (v)--(x1);
			\draw[dashed, teal, thick] (v)--(x2);
			\draw[thick] (x1)--(c1);
			\draw[thick] (x1)--(c2);
			\draw[thick] (x2)--(c3);
			\draw[thick] (x2)--(c4);

			\node[draw=white, fill=white, left, xshift=-6pt] at (x1) {$+\infty$};
			\node[draw=white, fill=white, right, xshift=6pt] at (x2) {$+\infty$};
			\node[draw=white, fill=white, below left, xshift=-20pt,yshift=-5pt] at (v) {$0$};
			\node[draw=white, fill=white, below right, xshift=20pt,  yshift=-5pt] at (v) {$0$};
		\end{tikzpicture}

	\end{minipage}

	\caption[Binary transformation of a tree]{Transformation of an arbitrary tree $T$ into a binary tree $T'$. Newly intermediate nodes have opening cost $+\infty$, and added edges have cost $0$, preserving the cost of solutions.}

	\label{fig:binary_tree_transformation}
\end{figure}
\end{appendixproof}

For each vertex $v \in T$, we define $\total(v)$ as a set of pairs $(\dval, c)$, where $c$ is the minimum cost to serve the subtree $T_v$ with a demand constraint of $\dval$ as follows:
\begin{itemize}
    \item If $\dval>0$, a demand of $\dval$ leaves node $v$ to its parent.
    \item If $\dval\leq 0$, a demand of $|\dval|$ arrives at node $v$ from its parent.
\end{itemize}

In a later step, we will round the costs to multiples of $\epsilon \cdot \opt / n$ and compute the minimum-cost solution using the dynamic program on the modified instance.
As we will show, this results in a $(1+\epsilon)$-approximation to the problem.

\subparagraph{Dynamic program.} The values $\total(v)$ are defined recursively, using the previously computed values for the children.

The base case corresponds to a leaf node $w$ with parent $p$.
We either send the demand to the parent without opening a facility, or we open a facility at $w$.
Formally,
\[
\total(w) = \{ (d(w), \cost(pw, d(w))), (-\dval, f(w) + \cost(pw, \dval)) : \dval \in [0, D] \}
\]

For internal nodes, the DP considers two cases.
If a facility is opened at $v$, we can serve $d(v)$ locally and allow both children ($v'$ and $v''$) and the parent to send demand to $v$. The total cost includes the optimal costs of the subtrees $T_{v'}$ and $T_{v''}$, routing demand from children to $v$, facility opening cost, and routing demand from the parent to $v$.
Formally,
\[
\total_A(v) = \set{
    (-\dval, f(v) +\cost(pv, \dval) + \totalmin(v') + \totalmin(v''))
    : \dval \in [0, D] }
\]
with $\totalmin(v') = \min \{ c : (\dval', c) \in \total(v') \}$.

If no facility is opened at $v$, demand cannot be served there; in this case, all flow at $v$ must pass through it. That means the total demand leaving $v$ toward the parent equals the sum of the flows from both children plus $d(v)$.
The algorithm thus computes the minimum cost for each valid combination of these flows:
\begin{multline*}
\total_B(v) = \set{
    (\dval, \cost(pv, |\dval|) + c' + c'' : \\
    (\dval', c') \in \total(v'), (\dval'', c'') \in \total(v''),
        \dval = \dval' + \dval'' + d(v) }.
\end{multline*}

We then take
$\total(v) = \nondom (\total_A(v) \cup \total_B(v))$,
the non-dominated solutions among those computed.

Let $\operatorname{OPT}$ be an optimum solution and let $\dval^*(v)$ be the demand in the optimum solution, leaving $v$ to its parent or arriving from its parent as above.
Let also $\total^*(v)$ represent the total cost of opening facilities and routing demand in $P_{vw}$, including the demand of value $\dval^*(v)$ arriving/leaving from $v$ from/to its parent.

\begin{lemma}[restate=lemma:tree:dp-correctness-exists, name=*]
\label{lemma:tree:dp_correctness_exists}
    For every \(v \in V\), there is a pair \((\dval^*(v), c) \in \total(v)\) for some cost $c$.
\end{lemma}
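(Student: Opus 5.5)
The plan is to argue by induction on the height of $v$ in the binary tree $T$, mirroring the path argument for the analogous path lemma. We fix $\OPT$ and choose it, by \Cref{lem:uncrossing:edges} and \Cref{lem:uncrossing:facilities}, to have no edge crossings and no demand leaving an open facility. This makes $\dval^*(v)$ well defined as a single signed quantity: the net demand on the edge from $v$ to its parent, positive if it leaves $v$ and nonpositive if it arrives. It also gives $|\dval^*(v)| \le D$.

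For the base case, $v = w$ is a leaf. If no facility is opened at $w$, then no demand can arrive at $w$, since it would have nowhere to go. So exactly $d(w)$ leaves, and $(d(w), \cost(pw,d(w))) \in \total(w)$. If a facility is opened at $w$, then by \Cref{lem:uncrossing:facilities} nothing leaves $w$, so $\dval^*(w) = -\dval$ for some $\dval \in [0,D]$, and the corresponding pair is present in $\total(w)$.

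For the inductive step, let $v$ have children $v', v''$. (If $v$ has only one child, we treat the missing one as contributing the pair $(0,0)$, or we pad it as in \Cref{lem:binary}.)

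\emph{Case 1: a facility is opened at $v$ in $\OPT$.} Again by \Cref{lem:uncrossing:facilities}, $\dval^*(v) \le 0$. The set $\total_A(v)$ contains a pair with first coordinate $-|\dval^*(v)| = \dval^*(v)$, because $\total_A(v)$ ranges over all $\dval \in [0,D]$. This case needs no induction.

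\emph{Case 2: no facility is opened at $v$.} Flow conservation at $v$, combined with the fact that the flows on the three incident edges are net quantities, gives
\[
\dval^*(v) = \dval^*(v') + \dval^*(v'') + d(v).
\]
By induction, $(\dval^*(v'), c') \in \total(v')$ and $(\dval^*(v''), c'') \in \total(v'')$ for some $c', c''$. Hence $\total_B(v)$ contains $(\dval^*(v), \cost(pv,|\dval^*(v)|) + c' + c'')$.

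The main obstacle is the final pruning step $\total(v) = \nondom(\total_A(v) \cup \total_B(v))$, which could remove the pair carrying $\dval^*(v)$. We would handle it in one of two ways.

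The first option is to read the statement modulo domination. The domination relation only compares pairs of the same sign whose first coordinates satisfy $|\dval| \le |\dval'|$; as written in the path section it reads $|\dval'| \ge |\dval|$ with $c' \le c$. Since every $\dval$ in the range is generated, for each value we retain the cheapest pair with that first coordinate. Equivalently, within a fixed first coordinate $\nondom$ keeps a minimum-cost pair, so some pair with first coordinate exactly $\dval^*(v)$ survives.

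If the intended domination can discard a pair in favour of one with a different $\dval$, we would instead carry the induction with the weaker invariant that $\total(v)$ contains a pair $(\dval, c)$ dominating $(\dval^*(v), \cdot)$, meaning the same sign and a constraint that is at least as permissive. We would then check that $\total_B$ and $\total_A$ preserve this invariant under the ``at most'' interpretation of the demand constraint.

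We expect this bookkeeping around $\nondom$ to be the only delicate point; the structural cases above are routine.
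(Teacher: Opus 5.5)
Your induction follows the paper's proof case for case: the same two leaf cases, the observation that $\total_A(v)$ already contains every first coordinate in $[-D,0]$, and the identity $\lambda^*(v)=\lambda^*(v')+\lambda^*(v'')+d(v)$ realised in $\total_B(v)$ from the inductive pairs. Fixing $\OPT$ via \Cref{lem:uncrossing:edges} and \Cref{lem:uncrossing:facilities} makes explicit what the paper assumes tacitly when it speaks of \emph{the} demand on $pv$ and takes $\lambda^*(v)\le 0$ at open facilities. You only need to add why both properties hold at once: apply the rerouting of \Cref{lem:uncrossing:facilities} to a crossing-free optimum, which only removes flow and so creates no crossing.

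The weak point is pruning. The paper merely asserts that the pair survives (``after pruning''), and your first option does not settle it either, because the fact that every value of $\lambda$ is \emph{generated} says nothing about which pairs survive $\nondom$. Under any domination that compares pairs with different first coordinates, as the one written in the path section does, the pair at $\lambda^*(v)$ can be discarded in favour of a pair with another first coordinate. For example, at an internal vertex $v$ with a facility, the pairs of $\total_A(v)$ whose $\lambda$ lies in a range where $\ccap$ is constant (as happens for cables) all have the same cost, so all but one of them are removed. The lemma as stated can therefore fail.

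It holds exactly when $\nondom$ keeps a cheapest pair for each fixed value of $\lambda$, which is also what the exact-sum rule in $\total_B$ needs. State that reading explicitly and your first option is a complete proof. Your fallback invariant does not rescue the other reading. Besides proving a different statement, it is not preserved by $\total_B$: when $\lambda^*(v')$ and $\lambda^*(v'')$ have opposite signs, replacing one of them by a dominating value of the same sign shifts $\lambda'+\lambda''+d(v)$ in a direction that can destroy domination of $\lambda^*(v)$, or even flip its sign. Finally, the running-time argument in the proof of \Cref{thm:intro:bbfl-split:ptas} bounds the size of $\total(v)$ by the number of rounded cost values, which tacitly uses the cross-$\lambda$ reading. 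So the tension you sensed is genuinely present in the paper.
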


\thmToAppendix{lemma:tree:dp-correctness-exists}

\begin{appendixproof}
    We prove this by induction on the height of the subtree $T_v$. For a leaf node $w$ (height 0), in the optimal solution, either the demand $d(w)$ is sent to its parent, resulting in $\dval^*(w) = d(w)$ and $(d(w), \cost(pw, d(w))) \in \total(w)$, or a facility is opened at $w$ with incoming demand $\dval^*(w) \le 0$, leading to $(\dval^*(w), f(w) + \cost(pw, |\dval^*(w)|)) \in \total(w)$. Thus, the base case holds.

    Assume that for children $v'$ and $v''$ of an internal node $v$, there exist pairs $(\dval^*(v'), c') \in \total(v')$ and $(\dval^*(v''), c'') \in \total(v'')$ corresponding to the optimal solution within their subtrees. Consider the optimal solution at $v$. If a facility is opened at $v$, $\total_A(v)$ considers $\dval^*(v)$ and, based on the inductive hypothesis, will contain a cost $f(v) + \totalmin(v')  + \totalmin(v'')  + \cost(pv, \dval^*(v))$, ensuring $(\dval^*(v), c) \in \total(v)$ after pruning. If no facility is opened at $v$, by flow conservation, $\dval^*(v) = \dval^*(v') + \dval^*(v'') + d(v)$. $\total_B(v)$ constructs exactly such combinations of demands and costs from its children, so $(\dval^*(v), c) \in \total(v)$ after pruning.

    In both cases, a pair $(\dval^*(v), c) \in \total(v)$ exists. By induction, this holds for all $v \in V$.
\end{appendixproof}

\begin{lemma}[restate=lemma:tree:dp_correctness_optimal, name=*]
\label{lemma:tree:dp_correctness_optimal}
    For every $v \in V$, let $(\dval^*(v), c) \in \total(v)$ be the pair corresponding to the optimal solution. Then, $c \leq c^*(v)$.
\end{lemma}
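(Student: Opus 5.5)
We argue by induction on the height of $T_v$, following the proof of the path lemma but now combining two children. We take the optimum solution $\OPT$ to be uncrossed in the sense of \Cref{lem:uncrossing:edges} and \Cref{lem:uncrossing:facilities}. Then on every edge the demand flows in one direction only, and no demand leaves an open facility. Consequently $\dval^*(v)$ is well defined, and the cost $c^*(v)$ splits as follows: $c^*(v'')$ and $c^*(v')$ for the two child subtrees, plus $\cost(pv,|\dval^*(v)|)$, plus $f(v)$ if a facility is open at $v$. The child edges $vv'$ and $vv''$ are already charged inside $c^*(v')$ and $c^*(v'')$, by the convention that $\total(\cdot)$ includes the edge to the parent.

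\emph{Base case.} Let $w$ be a leaf with parent $p$. If $\OPT$ sends $d(w)$ upward, then $\total(w)$ contains $(d(w),\cost(pw,d(w)))$, whose cost equals $c^*(w)$. If $\OPT$ opens $w$ and receives $|\dval^*(w)|$, then the entry $(-|\dval^*(w)|, f(w)+\cost(pw,|\dval^*(w)|))$ has cost exactly $c^*(w)$.

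\emph{Inductive step.} Let $v$ have children $v',v''$. By \Cref{lemma:tree:dp_correctness_exists} and the induction hypothesis, there are pairs $(\dval^*(v'),c')\in\total(v')$ and $(\dval^*(v''),c'')\in\total(v'')$ with $c'\le c^*(v')$ and $c''\le c^*(v'')$.
\begin{itemize}
\item If $v$ is open in $\OPT$, the entry of $\total_A(v)$ for $\dval=|\dval^*(v)|$ has cost
\[
f(v)+\cost(pv,|\dval^*(v)|)+\totalmin(v')+\totalmin(v'') .
\]
This is at most $f(v)+\cost(pv,|\dval^*(v)|)+c^*(v')+c^*(v'')\le c^*(v)$, since $\totalmin(v')\le c'$ and $\totalmin(v'')\le c''$.
\item If $v$ is closed in $\OPT$, flow conservation gives $\dval^*(v)=\dval^*(v')+\dval^*(v'')+d(v)$, with the sign conventions making this an identity of net flows. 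So $\total_B(v)$ contains the entry with cost $\cost(pv,|\dval^*(v)|)+c'+c''\le c^*(v)$.
\end{itemize}

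The main obstacle is the pruning step $\total(v)=\nondom(\total_A(v)\cup\total_B(v))$. The entry just built might be discarded in favour of one that dominates it. A dominating entry has a demand value $\dval$ of the same sign with $|\dval|\le|\dval^*(v)|$ and no larger cost. So the pair with value exactly $\dval^*(v)$ may no longer be present, and the ``pair corresponding to the optimal solution'' must be read as the dominating survivor.

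I would resolve this by strengthening the induction hypothesis: $\total(v)$ contains a pair $(\dval,c)$ that dominates $(\dval^*(v),c^*(v))$. To propagate this, the recurrences must be monotone under domination.
\begin{itemize}
\item For $\total_A$, monotonicity is immediate, since $\totalmin$ is unaffected.
\item For $\total_B$ and for the parent edge, replacing the child pairs by dominating ones must yield a parent entry that still dominates. This uses that $\ccap$ is non-decreasing, so a smaller $|\dval|$ on $pv$ costs no more. It also uses that the constraints are of the form ``at most''. Sending less upward, or requesting less from the parent, is consistent with some feasible completion: extra arriving demand can be absorbed at an open facility below, and surplus capacity is harmless.
\end{itemize}
I would verify the sign cases carefully, in particular when $\dval'$ and $\dval''$ have opposite signs. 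There I would appeal to the ``at most'' semantics so that a dominating child pair still yields a parent pair that dominates $(\dval^*(v),\cdot)$. Having done this, the bound $c\le c^*(v)$ follows for the surviving pair.
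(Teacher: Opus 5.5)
Your base case and two-case inductive step are the paper's argument: $\totalmin(v')\le c^*(v')$ when $v$ is open, and the $\total_B(v)$ entry built from the children's $\dval^*$-pairs when $v$ is closed. Taking $\OPT$ uncrossed via \Cref{lem:uncrossing:edges} and \Cref{lem:uncrossing:facilities}, so that $\dval^*(v)$ is well defined, is a sensible addition. The gap is in how you handle pruning.

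You adopt the cross-value order suggested by \Cref{sec:split:paths}: same sign, $|\dval|\le|\dval^*(v)|$, no larger cost. You then claim that $\total_B$ and the parent edge are monotone for it, and defer the opposite-sign case to the ``at most'' semantics. That step fails. For $\dval<0$ the order points the wrong way: a child that absorbs less demand from $v$ forces the unabsorbed demand up through $pv$. So $\dval'+\dval''+d(v)$ increases, and no reading of the constraints makes that free.

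Here is a concrete counterexample. Let $v$ have parent $p$ and leaf children $v',v''$. Set $d(v')=4$ and all other demands $0$; $f(v'')=1$ and all other opening costs $1000$; $\len(vv')=\len(vv'')=1$, $\len(pv)=100$, and $\ccap(x)=x$. In $\OPT$ we have $\dval^*(v')=4$, $\dval^*(v'')=-4$, $\dval^*(v)=0$ and $c^*(v)=9$. Under your order, the entry $(-4,5)$ of $\total(v'')$ is dominated by $(-1,2)$ and discarded. One then checks that every entry of $\total(v)$ with value $0$ costs at least $1000$, so no pair of $\total(v)$ dominates $(0,9)$. The strengthened invariant is therefore not merely unverified but false, and with that order the DP would miss $\OPT$ altogether.

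The paper's one-line treatment (``after the $\nondom$ operation, $\total(v)$ will contain a pair $(\dval^*(v),c)$ with $c\le c^*(v)$'') is correct because pruning only compares entries with the same value $\dval$, keeping the cheapest cost for each value. This is the order stated explicitly in \Cref{sec:unsplit}. It is also the only one compatible with \Cref{lemma:tree:dp_correctness_exists}, which the statement presupposes. With it there is no obstacle: the entry you built for $\dval^*(v)$ can only be replaced by one with the same value and smaller cost. So your argument, stopped before the ``main obstacle'' paragraph and with the plain induction hypothesis, is already complete and coincides with the paper's proof.
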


\thmToAppendix{lemma:tree:dp_correctness_optimal}

\begin{appendixproof}
    We prove this by induction on the height of the subtree $T_v$. For a leaf node $w$, consider the optimal cost $c^*(w)$. If the optimal solution sends the demand $d(w)$ to the parent, $c^*(w) = \cost(pw, d(w))$, and $\total(w)$ contains $(d(w), \cost(pw, d(w)))$. If the optimal solution opens a facility at $w$ with incoming demand $\dval^*(w)$, $c^*(w) = f(w) + \cost(pw, |\dval^*(w)|)$, and $\total(w)$ contains $(-\dval^*(w), f(w) + \cost(pw, |\dval^*(w)|))$. Thus, the base case holds.

    Assume for all children $v'$ and $v''$ of an internal node $v$, the optimal cost for their subtrees $c^*(v')$ and $c^*(v'')$ is such that $\total(v')$ contains a pair $(\dval^*(v'),c')$ with cost $c'\leq c^*(v')$ in $\dval^*(v')$, and similarly for $v''$. Consider the optimal solution at $v$ with cost $c^*(v)$ and demand $\dval^*(v)$ from the parent.

    If a facility is opened at $v$ in the optimal solution, $c^*(v) = f(v) + c^*(v') + c^*(v'') + \cost(pv, \dval^*(v))$. The DP considers this case in $\total_A(v)$ using $\totalmin(v') \leq c^*(v')$ and $\totalmin(v'') \leq c^*(v'')$, thus finding a cost of at most $c^*(v)$ for $\dval^*(v)$.

    If no facility is opened at $v$ in the optimal solution, $c^*(v) = c^*(v') + c^*(v'') + \cost(pv, \dval^*(v))$ where $\dval^*(v) = \dval^*(v') + \dval^*(v'') + d(v)$. The DP constructs such a cost in $\total_B(v)$ based on the costs of $\total(v')$ and $\total(v'')$ at the corresponding demand values.

    After the $\nondom$ operation, $\total(v)$ will contain a pair $(\dval^*(v), c)$ with $c \leq c^*(v)$. By induction, this holds for all $v \in V$.
\end{appendixproof}

\begin{lemma}[restate=lemma:tree:dp_complexity, name=*]
\label{lemma:tree:dp_complexity}
	The values $\total$ can be calculated in time $O(n\cdot D^2)$, where $D = \sum_v d_v$.
\end{lemma}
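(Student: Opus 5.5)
The plan is to bound the work at each vertex separately and then sum over the $n$ vertices. Since every demand value $\dval$ stored in $\total(v)$ is an integer in $[-D,D]$, and $\nondom$ keeps at most one cost per demand value, each table has size $O(D)$. I will store $\total(v)$ as an array indexed by $\dval \in [-D,D]$, holding the minimum cost found for that value (or $+\infty$). The non-dominated filtering can then be done by one linear scan on each sign side, in $O(D)$ time per vertex. By \Cref{lemBinary} we may assume $T$ is binary, so each internal vertex has at most two children $v',v''$. The transformation adds only $O(n)$ vertices, so $n$ changes by at most a constant factor.

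For a leaf $w$, $\total(w)$ has $O(D)$ entries. Each needs one evaluation of $\cost(pw,\cdot)$, which I treat as a unit-time oracle call (as assumed for $\ccap$), so leaves take $O(D)$ time. For an internal vertex $v$, I first compute $\totalmin(v')$ and $\totalmin(v'')$ by scanning the children's arrays in $O(D)$. Then $\total_A(v)$ is filled in $O(D)$ time, one entry per $\dval \in [0,D]$.

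The dominant step is $\total_B(v)$. For each pair $(\dval',c') \in \total(v')$ and $(\dval'',c'') \in \total(v'')$ I set $\dval = \dval'+\dval''+d(v)$. If $|\dval| \le D$, I update the array entry for $\dval$ with $c'+c''$, keeping the minimum; entries with $|\dval| > D$ are discarded. There are $O(D)\cdot O(D) = O(D^2)$ pairs, each handled in $O(1)$. I add $\cost(pv,|\dval|)$ only afterwards, once per resulting value $\dval$, which costs $O(D)$ in total. Merging $\total_A(v)$ and $\total_B(v)$ and applying $\nondom$ is another $O(D)$.

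Hence each vertex costs $O(D^2)$, and summing over all vertices gives $O(n\cdot D^2)$. The only point needing care is that we may restrict to $|\dval|\le D$. This holds because the demand crossing any edge never exceeds the total demand $D$. In the splittable case, \Cref{lem:uncrossing:edges} also lets us assume flow on each edge goes in only one direction, so a single signed value $\dval$ describes it. Discarding out-of-range combinations therefore loses no candidate that the correctness lemmas (\Cref{lemma:tree:dp_correctness_exists,lemma:tree:dp_correctness_optimal}) rely on.
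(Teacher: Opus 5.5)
Your proposal is correct and follows the paper's argument. Each table has $O(D)$ entries indexed by $\lambda\in[-D,D]$, combining the two children's tables costs $O(D^2)$ per vertex, and summing over the $O(n)$ vertices of the binarized tree gives $O(n\cdot D^2)$; enumerating pairs of child entries instead of the paper's per-$\lambda$ splits is the same count, and your handling of out-of-range $\lambda$ and of when $\cost(pv,\cdot)$ is added only makes explicit what the paper leaves implicit.
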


\thmToAppendix{lemma:tree:dp_complexity}

\begin{appendixproof}
The dynamic program computes the values of $\total(v)$ for each vertex $v$ in the tree.
For each vertex $v$, $\total(v)$ stores a set of pairs $(\dval, c)$, where $\dval$ represents a demand constraint, and $c$ is the minimum cost to serve the subtree $T_v$ under that constraint.
The possible values for $\dval \in [-D,D]$. Therefore, there are at most $O(D)$ different values for $\dval$. This means that the size of $\total(v)$ is $O(D)$.
To compute $\total(v)$, we need to consider all possible values of $\dval$ and, for each $\dval$, calculate the corresponding cost $c$. In the worst case, calculating the cost for a given $\dval$ could involve examining all possible combinations of how demand can be routed through the children of $v$, which can take time $O(D)$. Thus, computing a single $\total(v)$ takes time $O(D^2)$.
Since we need to compute $\total(v)$ for each of the $n$ vertices in the tree, the total time complexity of the dynamic program is $O(n \cdot D^2)$. Moreover, since converting each tree to a binary tree takes time $O(n)$, this preprocessing step does not change the overall complexity, which remains $O(n \cdot D^2)$.
\end{appendixproof}

The following proof demonstrates the \ptas for Theorem~\ref{thm:intro:bbfl-split:ptas} by detailing the cost-rounding and dynamic programming approach.

\begin{proof}[Proof of Theorem~\ref{thm:intro:bbfl-split:ptas}]
We design a \ptas using a combination of cost rounding and a dynamic programming algorithm on binary trees.
Let $\opt$ be the cost of the optimal solution.
We will do a binary search on a bound $\opt \leq L \leq 2\opt$, starting with $L=\min_v f(v)$, and doubling each time that the algorithm does not find a solution of cost at most $(1+\epsilon)L$.
Notice that, since we start with $L \leq 2\opt$, and the algorithm is sure to stop if $\opt \leq L$, then  $L \leq 2\opt$ for every value of $L$ we consider.

For each value of $L$, we start by rounding the cable installation costs $\ell(e) \cdot \ccap(\dval)$ to the nearest multiple of $\delta=\varepsilon \cdot L / 2n$, for every edge $e$ and demand $\dval$.
Each edge cost changes by at most $\delta$, so the total rounding error across all $n$ vertices is at most $n \cdot \delta = \varepsilon \cdot L/2 \le \varepsilon \cdot \opt$. Hence, rounding introduces at most an additive $\varepsilon \cdot \opt$ error, and thus a solution with rounded costs at most $\opt$ has cost at most $(1 + \varepsilon) \cdot \opt$ using original costs.

We then use a slightly modified version of the dynamic program of Section~\ref{sec:split:trees}, which computes, for each subtree, the minimum cost of satisfying demands under flow constraints.
The algorithm is executed using the rounded-down costs and, throughout the algorithm, we discard any entries that cost more than $L$.
At the end of the algorithm, we return the best solution if it costs at most $L$ (using the rounded costs), or double $L$ and restart.

Lemmas~\ref{lemma:tree:dp_correctness_exists} and~\ref{lemma:tree:dp_correctness_optimal} show that the DP correctly finds the minimum-cost solution under the rounded costs.
Notice that there is always a solution of cost at most $\opt$ using rounded costs, as the costs only decrease.
Therefore, we conclude that if no solution is found, $\opt > L$.
On the other hand, if a solution is found, it must be optimal and thus have rounded cost at most $\opt$, which implies original cost at most $(1+\epsilon)\opt$ as we argued above.

We now remark that Lemma~\ref{lemma:tree:dp_complexity} is true for any $D$ such that, for every $v \in V$, $\total(v)$ contains at most $D$ entries.
Since the costs are rounded to multiples of $\delta$, and we only consider solutions of cost at most $L$, we have at most $D=2n/\epsilon$ possible cost values, and thus the running time is $O(n^3 / \epsilon^2)$.
Thus, the algorithm returns a $(1 + \varepsilon)$-approximate solution in time $O(\epsilon^{-2} \cdot \poly(n))$.
\end{proof}


\section{Cable-Unsplittable-Demand \kcfl on Trees}
\label{sec:unsplit}

This section addresses the cable-unsplittable demand version of \kcfl, where demands on an edge cannot be split over multiple cables, and each cable can only carry demand in a single direction.
Unlike the splittable-demand case handled in the previous sections, we show that cable-unsplittable \kcfl is APX-hard even in very simple instances.
Thus, our goal is to approach the best achievable approximation factor on trees, so as to introduce new techniques to handle this problem.

Our approach is to solve a resource-augmented variant of the problem, where we are allowed to slightly violate the capacities by a factor of $1+\epsilon$.
To clarify, we allow our solution to put total demand $(1+\epsilon)\mu$ on a cable of capacity $\mu$, but importantly, the optimum solution has total demand at most the capacity on each cable it uses.
For more details on resource-augmentation settings, we refer the reader to the book chapter by Roughgarden~\cite{Roughgarden20a}.

We refer to this resource-augmented variant as \rakcfl, and after giving a PTAS for the problem (\Cref{thm:unsplit:ptas-ra}), we show how it can be used to give a $2$-approximation to the original problem (\Cref{lem:unsplit:ra-transform}).
The factor of $2$ is significantly larger than the factor for splittable-demand, but is close to the hardness of $3/2-\epsilon$ given in \Cref{thm:unsplit:apx-hardness}.

We start by stating the APX-hardness result.

\begin{theorem}[restate=thmApxHardness,name=*]
\label{thm:unsplit:apx-hardness}
\kcfl is APX-hard when the input graph is a star and $k=1$.
In particular, it is NP-hard to approximate the problem to a factor of $3/2-\epsilon$ for any $\epsilon > 0$.
\end{theorem}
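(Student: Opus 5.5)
We reduce from \textsc{Partition} (or a bin-packing variant), in the style of the classic $3/2$-hardness of bin packing: it is NP-hard to decide whether a set of items packs into two bins. The target is a star instance with a single cable type ($k=1$) whose optimum cost is $2$ in ``yes'' instances and at least $3$ in ``no'' instances. Since costs are integral multiples of a single unit, this gap yields the $3/2-\epsilon$ bound.

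\textbf{Construction.} Take a Partition instance $a_1,\dots,a_m$ with $\sum_i a_i = 2B$. Build a star with center $r$ and leaves $v_1,\dots,v_m$, with $d(v_i)=a_i$ and $d(r)=0$. Use one cable type $(\mu,c)=(B,1)$.

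\textbf{Edge lengths and opening costs.} Set the edge lengths $\len(rv_i)=0$ and give every leaf opening cost $+\infty$ (or a huge value $M$). The natural choice is to place the facility at $r$ with $f(r)=0$, but then the connection edges have length $0$ and everything is free. So this placement does not work directly, and we must modify the star.

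\textbf{Fixing the star.} Attach a single facility leaf $t$ to $r$ with $\len(rt)=1$ and $f(t)=0$. Make all other vertices prohibitively expensive to open: $f(r)=f(v_i)=M$ with $M$ large, say $M>m\cdot$(any sensible cost). Every demand must then travel from $v_i$ through $r$ to $t$.

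\textbf{Cost of a solution.} On each zero-length edge $rv_i$ the cables are free. On $rt$, cable-unsplittability forces the demands to be partitioned into cables of capacity $B$. The cost of a solution is therefore exactly the number of cables on $rt$, which equals the minimum number of bins of size $B$ that pack the items $a_i$.

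\textbf{Gap.} If a partition exists, the cost is $2$. Otherwise no two bins suffice, so the cost is at least $3$. A $(3/2-\epsilon)$-approximation would output a solution of cost strictly less than $3$ whenever the optimum is $2$, and would therefore decide Partition. This gives NP-hardness and APX-hardness.

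\textbf{Main obstacle.} The key step is making sure the model cannot circumvent the packing constraint. Three things need checking:
\begin{itemize}
\item Demands cannot be split. This follows from the unsplittable path requirement together with cable-unsplittability.
\item Opening other facilities is never worth it. This is handled by choosing $M>m$ and noting that the trivial solution with one cable per item costs at most $m$.
\item Cables are one-directional. This is irrelevant here, since all flow on $rt$ goes toward $t$.
\end{itemize}
If one prefers to avoid $+\infty$ opening costs, the same calculation shows a large finite $M$ suffices.
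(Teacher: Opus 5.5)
Your proposal is correct and is essentially the paper's reduction. Both encode Partition as a star whose only affordable facility is a leaf behind a unit-length edge, use a single cable of capacity $B$, and exploit cable-unsplittability to get a gap of $2$ versus $3$ cables on that edge. The differences are cosmetic: the paper uses edge lengths $\epsilon/n$, opening cost $\epsilon$ and cable cost $1-\epsilon$ (a $2$ versus $3-\epsilon$ gap), where your zero lengths and zero opening cost are also allowed by the problem definition. Like the paper, you should state that w.l.o.g.\ $a_i \le B$, since your trivial one-cable-per-item solution needs it.
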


\thmToAppendix{thmApxHardness}

\begin{appendixproof}
The proof is by a standard reduction from the \emph{partition problem}, and follows a similar proof for \emph{bin packing}~\cite[Sec.~3.3]{WilliamsonS11/0030297/daglib}.
Let $a_1, a_2, \ldots, a_n$ be positive integers with sum $\sum_i a_i = 2b$ for some integer $b$.
The partition problem asks if it is possible to find a set $I \subseteq [n]$ with sum that is half the total,
that is, $\sum_{i \in I} a_i = b$.
This problem is known to be NP-hard, as it belongs to Karp's list of 21 NP-hard problems~\cite{Karp72}.

We reduce an instance of partition to \icfl as follows:
Let $G$ be a star with center $u$ and $n+1$ leaves labeled $w, v_1, v_2, \ldots, v_n$ with:
\begin{itemize}
    \item Edge lengths: $\len(uw) = 1$ and $\len(uv_i) = \epsilon/n$ for every $i \in [n]$;
    \item Demands: $d(v_i) = a_i$, $d(u) = d(w) = 0$.
    \item Facility costs: $f(w) = \epsilon$, $f(u) = f(v_i) = 100$.
    \item Cable type: single cable $(\mu, c) = (b, 1-\epsilon)$.
\end{itemize}

We show that if the partition instance is a YES-instance, then the cost of the optimum solution is at most $2$,
and otherwise the optimum cost is at least $3-\epsilon$.
Thus, no algorithm can distinguish the case of optimum cost at most $2$ from at least $3-\epsilon$, and thus there is no $(3/2-\epsilon)$-approximation algorithm unless P=NP.

For a YES-instance such that $\sum_{i \in I} a_i = b$ for $I \subseteq [n]$, we build a solution as follows:
we open a facility at $w$ with opening cost $\epsilon$;
then transport all of the demands from each $v_i$ to $u$ using 1 cable each for a cost of $\cost(uv_i, a_i) = \epsilon/n \cdot (1-\epsilon )\leq \epsilon/n$ and a total cost over all $n$ demands of at most $\epsilon$;
finally, we use 2 cables to transport demand from $u$ to $w$, one carrying the demand corresponding to $I$, the other to $[n] \setminus I$, for a cost of $2-2\epsilon$;
the total cost is at most $2$.
Notice that the capacity $b$ of the cable is sufficient to carry any individual demand, but also to carry the demands in $I$ or $[n]\setminus I$, as $\sum_{i \in I} a_i = b$ and $\sum_{i \in [n]\setminus I} a_i = 2b - b = b$.

On the other hand, let us consider a NO-instance and assume for contradiction that its cost is less than $3-\epsilon$.
If it has cost less than $3-\epsilon$, then it cannot open a facility on $u$ or any of the $v_i$, as that would incur cost $100$;
thus, it opens a facility on $w$, and all of the demands must be carried to that facility.
W.l.o.g.\ $a_i \leq b$ for every $i\in[n]$, and thus we can transport the demands to $u$ by using a single cable on each edge $uv_i$, for a cost of $\epsilon/n \cdot (1-\epsilon)$ per edge, and in total at most $\epsilon$.
However, as this is a NO-instance, we know that for any set $I \subseteq [n]$, either $\sum_{i \in I} a_i > b$ or $\sum_{i \in [n]\setminus I} a_i > b$,
and thus it is not possible to transport all the demand on 2 cables.
Thus, the solution must use at least $3$ cables, for a total cost of at least $3-3\epsilon + 2\epsilon = 3-\epsilon$, which contradicts the assumption.

This completes the proof of the theorem.
\end{appendixproof}

The main result of this section is a PTAS for \kcfl in the resource augmented setting.

\begin{definition}
Let $(G,\len,f,d,\mathbf{q})$ be an instance of the \kcfl problem, and let $\epsilon > 0$.

A tuple $(I, \mathbf{g}, \mathbf{Q})$ is a \emph{solution with $\epsilon$-resource augmentation} if:
\begin{enumerate}
    \item Each $g_v$ is a flow from $\{v\}$ to $I$ of value $d_v$.
    \item For every edge $uw$, flows $\{g_v(u,w)\}$ and $\{g_v(w,u)\}$ can be partitioned into cables so that:
    \begin{itemize}
        \item Each cable carries total flow at most $(1+\epsilon)\mu$, where $\mu$ is its capacity.
        \item Each flow is assigned to a cable with sufficient capacity.
        \item All flows on the same cable go in the same direction.
    \end{itemize}
\end{enumerate}
\end{definition}

We denote the \kcfl problem in the $\epsilon$-resource augmented setting as $\epsilon$-\rakcfl, for a given $\epsilon > 0$.
Before proving the main result, we show that an approximation for \rakcfl can be leveraged to obtain an approximate solution for the original \kcfl problem.
We remark that the reduction described below applies to general graphs.

\begin{lemma}[restate=lemRaTransform,name=*]
\label{lem:unsplit:ra-transform}
Let $(G,\len,f,d,\mathbf{q})$ be an instance of the \kcfl problem, and $(I, \{g_v\}_v, \{Q_e\}_e)$ be an $\alpha$-approximate solution in the $\epsilon$-resource augmented setting for some $0 < \epsilon < 1/2$ and $\alpha \geq 1$.
Then there exists a $2\alpha$-approximate solution for the \kcfl problem.
\end{lemma}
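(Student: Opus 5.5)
We keep the facility set $I$ and the flows $\{g_v\}_v$ unchanged and only repair the cables. The plan is to replace each overloaded cable by at most two genuine cables of the same type. The facility costs are unchanged, so it suffices to show that on every edge the new bag costs at most twice the old bag. The new solution then costs at most twice the cost of the given solution, which in turn is at most $\alpha\cdot\opt$, where $\opt$ is the optimum of the original (non-augmented) problem. Note that the optimum of the unaugmented problem is feasible for the augmented one, so $\opt$ is also an upper bound on the augmented optimum and the $\alpha$ guarantee compares against it correctly.

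Fix an edge $uw$ and a direction, and fix one cable $q=(\mu,c)\in Q_{uw}$ together with the set $S$ of demands assigned to it. By definition, each single demand $g_v(u,w)$ assigned to $q$ satisfies $g_v(u,w)\le\mu$, because every flow is assigned to a cable with sufficient capacity. The total satisfies $\sum_{v\in S} g_v(u,w)\le(1+\epsilon)\mu<\tfrac32\mu$, using $\epsilon<1/2$.

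We split $S$ into two parts, each of total at most $\mu$, by a greedy sweep. Order the demands of $S$ arbitrarily and add them to a first cable while its load stays at most $\mu$. At the first demand that would overflow, put that demand and all remaining ones into a second copy of $q$. It remains to check that the second part has load at most $\mu$. Let $x$ be the load of the first part and $y$ the overflowing demand, so $x+y>\mu$. The second part then has load at most
\[
(1+\epsilon)\mu - x < (1+\epsilon)\mu - (\mu - y) = \epsilon\mu + y,
\]
and this bound can exceed $\mu$, so the naive argument does not close.

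The fix is to change the split. Put the overflowing demand $y$ alone on its own cable, which is fine since $y\le\mu$. Then place all other demands of $S$ together on a second cable; their total is $(1+\epsilon)\mu - y$ at most. This still fails when $y$ is small.

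The better move is a two-case split on whether $S$ contains a large demand.

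\textbf{Case 1: some demand $y\in S$ has $y>\mu/2$.} Put $y$ alone on one cable. The rest of $S$ has total less than $(1+\epsilon)\mu-\mu/2<\mu$, using $\epsilon<1/2$, so it fits on a second cable.

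\textbf{Case 2: every demand in $S$ is at most $\mu/2$.} Run the greedy sweep above. When it stops at overflow, we have $y\le\mu/2$, so the first cable has load $x>\mu-y\ge\mu/2$. The remainder then has load less than $(1+\epsilon)\mu-\mu/2<\mu$, so it fits on the second cable.

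In both cases two copies of $q$ suffice, all carrying demand in the same direction. Replacing every cable of $Q_{uw}$ in this way at most doubles $c(Q_{uw})$, and the flows and facilities are unchanged, which gives the $2\alpha$ bound.

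The main obstacle is this packing step, namely choosing the split so that both parts fit. The case distinction on demands larger than $\mu/2$ is exactly where the hypothesis $\epsilon<1/2$ is used.
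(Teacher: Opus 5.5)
Your proof is correct and follows the same route as the paper: you keep $I$ and the flows, add a second copy of each overloaded cable, and split its demands so that each copy carries at most $\mu$, which at most doubles the cable cost. The only difference is the packing step: the paper sorts the demands in decreasing order and fills the first copy greedily, so the first copy gets more than $\mu/2$ whenever it overflows, while your case distinction on whether some demand exceeds $\mu/2$ reaches the same bound $(1+\epsilon)\mu - \mu/2 < \mu$ for the second copy.
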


\thmToAppendix{lemRaTransform}

\begin{appendixproof}
We can modify the $\epsilon$-resource augmented solution to make it feasible for the \kcfl problem as follows:

For each edge $e \in E$ and each cable of capacity $\mu$ carrying a total flow exceeding $\mu$, we add a copy of the same cable to $Q_e$.
This operation increases the cost of the solution by at most a factor of $2$.

We then partition the flows assigned to the original cable between the two copies: sort the flows in decreasing order of value, assign as many as possible to the first copy without exceeding its capacity, and place the remaining flows on the second copy.
Since the first copy receives at least half of the total flow, the second copy carries at most $\mu$, ensuring feasibility.

After applying this procedure to all edges, we obtain a solution that respects the original capacities and has cost at most twice that of the $\alpha$-approximate $\epsilon$-resource augmented solution.
Hence, it is a $2\alpha$-approximate solution for \kcfl.
\end{appendixproof}
The main result of this section can be restated in a slightly different form as follows, equivalent to \Cref{thm:intro:kcfl-unsplit:ra-ptas}:

\begin{theorem}
\label{thm:unsplit:ptas-ra}
Let $(G, \len, f, d, \mathbf{q})$ be an instance of \kcfl on a tree, and let $0 < \epsilon \le 1/10$.

There exists an algorithm for $\epsilon$-\rakcfl that computes an $\epsilon$-resource augmented solution of optimal cost in time $n^{k \cdot f(\epsilon)}$, for some computable function $f$.
\end{theorem}

\begin{proof}[Proof of \Cref{thm:intro:kcfl-unsplit:apx}]
We first compute an $\epsilon$-resource augmented optimal solution for $\epsilon$-\rakcfl on the given tree instance using \Cref{thm:unsplit:ptas-ra}.
By applying the transformation described in \Cref{lem:unsplit:ra-transform}, this solution can be converted into a feasible solution for the original \kcfl problem.
Choosing $\epsilon = 1/10$ ensures that the resulting solution achieves a $2$-approximation of the optimal cost.
\end{proof}

In the rest, we focus on proving \Cref{thm:unsplit:ptas-ra}.
For simplicity and clarity, we first consider instances with a single cable type in Section~\ref{sec:unsplit:k1}, and then in Section~\ref{sec:unsplit:k} we generalize the results to a constant number of cable types.

\subsection{\rakcfl on Trees for a Single Cable Type (k=1)}
\label{sec:unsplit:k1}
\appendixSubsection{RAkCFL on Trees for a Single Cable Type (k=1)}

Our approach relies on two main ideas: first, rounding demands so that similar values can be treated as identical, reducing the number of distinct values; second, imposing structure on the solution by grouping small demands into bundles along the same paths, limiting the combinatorial explosion caused by many small demands. While the first idea contributes to underestimating the value of demands, the second causes demands to be served along sub-optimal paths, and thus both lead to resource augmentation.

Let $(\mu, c)$ denote the single cable type.
For each demand of at least $\epsilon \mu/2$, we round it down to the nearest multiple of $\epsilon^2 \mu$;
as we will show, these demands shrink by at most $\epsilon^2 \mu$, and thus contribute to an excess of at most $2\epsilon \mu$ in the demand of each cable.
Small demands must be handled differently:
they are grouped into \emph{bundles} totaling at least $\epsilon \mu/2$, with an additional leftover bundle of total demand less than $\epsilon \mu / 2$.
Both small demands and leftover bundles are rounded more finely, to multiples of $\epsilon^2 \mu / n$, and become non-leftover bundles once their combined demand reaches at least $\epsilon \mu/2$.
The demands in each bundle cannot be separated once joined and must be served along the same path by the same facility.
We later show that any solution can be modified to follow this structure, at the price of $2\epsilon$-resource augmentation.

In order to use dynamic programming,
we must then show that any solution to the original problem can be modified so that small demands are grouped, which overloads cables by a small fraction of the capacity.
This, together with the increased demand of the original values instead of the rounding, leads to an excess of $\emu$ on each cable,
but allows us to write a dynamic programming to obtain a solution with optimum cost.

The dynamic program works by considering, for each vertex $v$, tuples that specify the number of large demands and bundles of each rounded size, the demand in the leftover bundle,
both going up to the parent and down from the parent, as well as the minimum cost to serve these demands in the subtree $G_v$.
The demand on each cable is allowed to be at most $(1+2\epsilon)\mu$.

Solving the dynamic program yields a $2\epsilon$-resource augmented optimal solution,
and converting the solution  back to the original demands incurs a factor of $(1+2\epsilon)$, for a total demand of at most $(1+5\epsilon)\mu$ on each cable.

Most of the technical work in this section is to ensure that rounding and bundling preserve sufficient demand to guarantee an optimal-cost, resource-augmented solution.
For simplicity, we present the algorithm achieving $5\epsilon$-resource augmentation; the theorem follows by rescaling $\epsilon$.

\subsubsection{Rounding demands.}
We begin by describing the demand rounding procedure.
Assume that $\ieps$ is an even integer, and define the rounding units
\[
\delta_b = \epsilon^2 \mu, \qquad \delta_s = \frac{\epsilon^2 \mu}{n},
\]
for large and small demands, respectively.

For each vertex $v \in V$, we set
\[
(b_v, s_v) :=
\begin{cases}
\bigl(\lfloor d_v / \delta_b \rfloor, 0\bigr) & \text{if } d_v \geq \epsilon \mu / 2,\\[2mm]
\bigl(0, \lfloor d_v / \delta_s \rfloor\bigr) & \text{if } d_v < \epsilon \mu / 2.
\end{cases}
\]
Observe that $b_v \in \{0\} \cup [\ieps/2, \ieps[2]]$ and $s_v \in [0, \ieps n/2-1]$, and that exactly one of $b_v$ or $s_v$ is non-zero for each vertex $v$.

Two observations will be useful when comparing the solution computed on rounded demands to the original demands:

\begin{observation}\leavevmode
\label{obs:unsplit:rounding}
Let $(b_v, s_v)$ be the rounded values defined above. Then:
\begin{enumerate}
\item If $d_v \geq \epsilon \mu / 2$, then $b_v \delta_b$ is a $(1+2\epsilon)$-approximation of $d_v$.
\item For any subset $S \subseteq V$ of small demands, the total of the original demands and the total of the rounded demands differ by at most $\epsilon^2 \mu$.
\end{enumerate}
\end{observation}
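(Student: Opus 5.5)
Both parts follow from the floor-rounding error bounds, so the plan is to write down those bounds and check them against the thresholds in the definition of $(b_v,s_v)$.

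For part 1, fix $v$ with $d_v \geq \epsilon\mu/2$. By definition of the floor, $b_v\delta_b \le d_v < b_v\delta_b + \delta_b$, so the rounded value never exceeds the original and loses less than $\delta_b = \epsilon^2\mu$. It remains to show $d_v \le (1+2\epsilon)\, b_v\delta_b$. Since $\ieps$ is an even integer, $\epsilon\mu/2 = (\ieps/2)\cdot\delta_b$ is an exact multiple of $\delta_b$. Hence $d_v \ge \epsilon\mu/2$ gives $b_v\delta_b \ge \epsilon\mu/2$, which also yields the stated range $b_v \ge \ieps/2$. Therefore
\[
d_v < b_v\delta_b + \epsilon^2\mu = b_v\delta_b + 2\epsilon\cdot\frac{\epsilon\mu}{2} \le b_v\delta_b + 2\epsilon\, b_v\delta_b = (1+2\epsilon)\,b_v\delta_b .
\]
This is the required $(1+2\epsilon)$-approximation, with the rounded value as the lower estimate.

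For part 2, take any set $S$ of vertices with small demand, i.e.\ $d_v < \epsilon\mu/2$. For each $v\in S$ the floor gives $0 \le d_v - s_v\delta_s < \delta_s = \epsilon^2\mu/n$. Summing over $S$ and using $|S|\le n$,
\[
0 \le \sum_{v\in S} d_v - \sum_{v\in S} s_v\delta_s < |S|\,\frac{\epsilon^2\mu}{n} \le \epsilon^2\mu .
\]
So the two totals differ by at most $\epsilon^2\mu$, and the rounded total is always the smaller one. This one-sidedness should be recorded because it is what the later resource-augmentation argument needs.

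Neither part has a real obstacle. The only point needing care is in part 1: the lower bound $b_v\delta_b \ge \epsilon\mu/2$ must hold after rounding, not merely $d_v \ge \epsilon\mu/2$ before it. That is exactly where the assumption that $\ieps$ is an even integer is used, since it makes the threshold $\epsilon\mu/2$ a multiple of $\delta_b$.
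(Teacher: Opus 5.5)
Your proof is correct and follows the same route as the paper. For part 1 you use the chain $b_v\delta_b \le d_v \le (b_v+1)\delta_b \le (1+2\epsilon)\,b_v\delta_b$, and for part 2 you sum the per-vertex floor error $\delta_s$ over at most $n$ vertices. You also make explicit a step the paper leaves implicit: $b_v \ge \epsilon^{-1}/2$ because $\epsilon\mu/2$ is an integer multiple of $\delta_b$ when $\epsilon^{-1}$ is even, and this is exactly what justifies the last inequality in part 1.
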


\begin{proof}
For the first point, note that
\[
b_v \delta_b \leq d_v \leq (b_v+1)\delta_b \leq b_v \delta_b (1 + 2\epsilon).
\]

For the second point, for any subset $S \subseteq V$ of small demands, we have
\[
\sum_{v \in S} s_v \delta_s \leq \sum_{v \in S} d_v \leq \sum_{v \in S} (s_v + 1) \delta_s
= \sum_{v \in S} s_v \delta_s + |S| \delta_s \leq \sum_{v \in S} s_v \delta_s + n \delta_s = \sum_{v \in S} s_v \delta_s + \epsilon^2 \mu.
\]
\end{proof}

\subsubsection{Dynamic program.}
The dynamic program will compute, for every possible set of demands rounded as indicated above, the minimum cost needed to serve the demands in the subtree $G_v$, by opening facilities in the subtree and paying for cable installation in the edges of $G_v$ and the edge from $v$ to its parent, given the demands going to or coming from the parent.

Formally, for each vertex $v \in V$ with parent $u \in V$, we compute a set of non-dominated solutions $\total(v)$ consisting of tuples of the form $(a, \dval_s, \bar{\dval}_s, c)$, where
\[
\dval_s, \bar{\dval}_s \in [0, \ieps n/2-1], \quad
a \in \aset := \set{0}^{\ieps/2} \times [-n,n]^{\ieps[2]-\ieps/2},
\]
i.e., $a$ has $\ieps[2]$ values between $-n$ and $n$, but the first $\ieps/2$ are fixed to $0$ for convenience,
and $\aset$ is the set of possible values for $a$.
We also define the non-negative subset of $\aset$,
\[
\asetplus := \set{0}^{\ieps/2} \times [0,n]^{\ieps[2]-\ieps/2}.
\]

A tuple $(a, \dval_s, \bar{\dval}_s, c)$ represents a solution where:
\begin{itemize}
    \item $a_i \in [-n,n]$ is the number of large demands of value $\delta_b i$ with $i \in [\ieps/2, \ieps[2]]$ traveling up to $u$ if $a_i > 0$, or coming down from $u$ if $a_i < 0$,
    \item $\delta_s \dval_s$ (resp.\ $\delta_s \bar{\dval}_s$) is the total of small demands traveling up to (resp.\ coming down from) $u$,
    \item $c$ is the cost of serving all demands in $G_v$, plus the demands coming from the parent as specified by $(a, \bar{\dval}_s)$, except the demands going to the parent as specified by $(a, \dval_s)$; this cost includes facility opening and cable installation on all edges of $G_v$ and the edge $uv$.
\end{itemize}

One solution dominates another if and only if the values of $a, \dval_s, \bar{\dval}_s$ are identical and the cost of the first solution is lower.

We also note that it suffices to consider demands of value $\delta_b i$ in a single direction for each edge: if an edge carries demands in both directions, they can be ``uncrossed'' without increasing cost (\Cref{lem:uncrossing:edges:unsplit}).

    \begin{lemma}
    \label{lem:uncrossing:edges:unsplit}
    In the context of unsplittable or cable-unsplittable demands and monotone installation costs, there is an optimal solution where demands with the same value do not cross;
    that is, for any edge $uw$, two demands of value $\dval$ cannot pass one in the direction $(u,w)$ and the other in direction $(w,u)$ simultaneously.
    \end{lemma}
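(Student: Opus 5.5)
Take an optimal solution minimizing a potential, namely the number of \emph{same-value crossings}: triples $(uw,a,b)$ with $d(a)=d(b)=\dval$, where $a$'s path uses $(u,w)$ and $b$'s path uses $(w,u)$. We show that any such crossing can be removed without increasing the cost and while strictly decreasing the potential, which contradicts minimality. The idea is a \emph{swap of tails}, as in the proof of \Cref{lem:uncrossing:edges}. Because the two demands have identical value, we can exchange them wholesale instead of transferring partial flow, so unsplittability survives the exchange.

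\textbf{The exchange.} Write $a$'s path as $P_a = a \leadsto u \to w \leadsto i_a$ and $b$'s path as $P_b = b \leadsto w \to u \leadsto i_b$, where $i_a, i_b \in I$.
\begin{itemize}
\item Reroute $a$ along $a \leadsto u \leadsto i_b$, using the tail of $P_b$ after $u$.
\item Reroute $b$ along $b \leadsto w \leadsto i_a$, using the tail of $P_a$ after $w$.
\end{itemize}
Both new walks run from the demand's vertex to an open facility. If a walk repeats a vertex, shortcut its cycle; this only removes flow from edges.

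\textbf{Cable assignment.} On every edge of the two tails, the demand units are simply relabelled. In the cable-unsplittable setting, $a$ takes over the cable slot $h_e(b)$ that $b$ held on each edge of $b$'s old tail, and symmetrically for $b$. Since $d(a)=d(b)$, every cable keeps exactly the same load and direction. So every bag $Q_e$ stays feasible, and the multiset of demands on each arc is unchanged except on $uw$ (and on shortcut cycles). On $uw$, both demands leave, so the load only decreases. By monotonicity, keeping the same bags, or a cheaper one after removal, costs no more. Facilities are unchanged. Hence the new solution is still optimal.

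\textbf{Potential decrease.} This is the step needing care, because the swap moves demands onto new edges where they might create new same-value crossings. I would argue that the arc-multisets of value $\dval$ demands are unchanged on every edge other than $uw$ and the shortcut edges, where they only decrease. Crossings involving only demands of other values are unaffected.

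To make this precise, I would measure the potential \emph{per edge} as $\min(n^{\dval}_{(u,w)}, n^{\dval}_{(w,u)})$, summed over edges and values, where $n^{\dval}_{(u,w)}$ is the number of value-$\dval$ demands on arc $(u,w)$. This quantity depends only on the arc counts, not on the identity of the demands. It strictly drops on $uw$ and does not increase elsewhere. Zero potential means no same-value crossings, so a minimizer exists and satisfies the lemma.

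For unsplittable demands without cables the argument is the same, skipping the cable relabelling. I expect the main obstacle to be exactly this bookkeeping, namely choosing a potential invariant under relabelling, rather than the exchange itself.
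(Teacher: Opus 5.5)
Your proof is correct and follows the paper's argument: take an optimum that minimizes same-value crossings, and exchange the tails at $u$ and $w$ of two equal-valued demands crossing on $uw$. This leaves every cable's load unchanged except on $uw$, where it drops. Your extra bookkeeping (shortcutting repeated vertices, relabelling cable slots, and noting that per-arc counts of value-$\lambda$ demands are preserved, so no new crossings appear elsewhere) supplies details that the paper's ``one fewer crossings'' leaves implicit.
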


    \begin{proof}
    Assume by contradiction no such optimal solution exists, and consider the optimal solution with the fewest crossings.
    Similarly to the proof of \Cref{lem:uncrossing:edges}, we take an edge $uv$ that has a crossing and two demands of value $\dval$ that cross at that edge.
    We can switch the paths at $u$ and $v$, so the first demand follows its path up to $u$, then the path of the other demand after $u$, while the second follows its path up to $v$, and then the first after $v$.
    This preserves feasibility of the solution, since both demands have the same value, and does not increase the cost, as the demand on uv decreases.
    This solution is thus optimal and has one fewer crossings, which contradicts the assumption.
    \end{proof}

The set $\total(v)$ of tuples $(a, \dval_s, \bar{\dval}_s, c)$ can be computed recursively starting from the leaves and moving up in the tree as follows.
Recall that we use $\indic(b_v)$ to refer to the vector that is all zeros except a $1$ at position $b_v$;
we consider $\indic(0)=\zeroes$ for simplicity.
We write $\cost(uv, a, \dval_s)$ to be the minimum cost of installing cables for the demands specified by $\max(a,\zeroes) \in \asetplus$ and $\dval_s \in [0,\ieps n/2-1]$
so that each cable gets total demand at most $(1+2\epsilon)\mu$;
we later show how this value can be computed in polynomial time.
The cost of an edge is then represented as $\cost(uv, a, \dval_s) + \cost(uv, -a, \bar\dval_s)$.

\subparagraph*{Leaf nodes.} Let \(v\) be a leaf with parent \(u\). We distinguish two cases.
\begin{description}
    \item[No facility is opened at \(v\).]
The demand of \(v\) is routed toward its parent, so we add the tuple
\(
    \bigl(\indic(b_v),s_v,0,
    \cost(uv,\indic(b_v),s_v)\bigr)
\)
to \(\total(v)\).

\item[A facility is opened at \(v\).]
For every \(a\in\asetplus\) and
\(\bar\dval_s\in[0,\ieps n/2-1]\), we add the tuple
\(
    \bigl(-a,0,\bar\dval_s,
    f(v)+\cost(uv,a,\bar\dval_s)\bigr)
\)
to \(\total(v)\).
\end{description}

\subparagraph*{Internal nodes.} Now let \(v\) be an internal node with parent \(u\) and children \(v'\) and \(v''\). We again distinguish two cases.

\begin{description}
\item[No facility is opened at \(v\).]
Fix two tuples
    $(a',\dval'_s,\bar\dval'_s,c')\in\total(v')$
    and
    $(a'',\dval''_s,\bar\dval''_s,c'')\in\total(v'')$.

We describe how to combine these tuples when no facility is opened at \(v\). The large-demand vectors satisfy a flow-balance constraint. The only additional complication is that leftover bundles of small demands may be merged at \(v\); whenever their combined size reaches \(\emu/2\), they are promoted to a large demand.

Let $B_v:=\set{v',v'',u}$ and $D_v:=B_v\cup\set{v}$.
For a candidate value \(\bar\dval_s\in[0,\ieps n/2-1]\), define the small-demand amounts arriving at \(v\) by
    $\alpha_{v'}:=\dval'_s$,
    $\alpha_{v''}:=\dval''_s$,
    $\alpha_v:=s_v$,
    $\alpha_u:=\bar\dval_s$.

We enumerate every assignment
\(
    h\colon D_v\to B_v
\)
such that \(h(x)\neq x\) for every \(x\in B_v\). Thus, \(h\) specifies the branch along which each incoming leftover bundle is routed, and prevents a bundle from immediately returning along the edge from which it arrived.
For each \(y\in B_v\), let
\(
    T_y:=\sum_{x\in h^{-1}(y)}\alpha_x
\)
be the total amount assigned to branch \(y\). We determine a promoted large-demand index \(b_y\) and a residual small-demand amount \(r_y\) as follows:
\[
(b_y,r_y):=
\begin{cases}
    (0,T_y),
        & \text{if }T_y\delta_s<\emu/2,\\[2mm]
    \left(
        \ffloor{T_y\delta_s/\delta_b},0
    \right),
        & \text{if }\emu/2\leq T_y\delta_s<\emu,\\[2mm]
    \left(
        \ffloor{(T_y-\alpha_{x_y})\delta_s/\delta_b},
        \alpha_{x_y}
    \right),
        & \text{if }\emu\leq T_y\delta_s,
\end{cases}
\]
where, in the last case, we enumerate every possible choice of \(x_y\in h^{-1}(y)\). In other words, all but one of the bundles
assigned to \(y\) are promoted to a large demand, while the remaining bundle stays as the leftover bundle.
We retain the combination only if its residual bundles agree with the two child states, namely,
\[
    r_{v'}=\bar\dval'_s
    \qquad\text{and}\qquad
    r_{v''}=\bar\dval''_s.
\]
We then set \(\dval_s:=r_u\) and define
\[
    a:=a'+a''+\indic(b_v)
       +\sum_{y\in B_v}\indic(b_y).
\]
Finally, we add the tuple
\(
    (a,\dval_s,\bar\dval_s,c)
\)
to \(\total(v)\), where
\[
    c:=c'+c''
       +\cost(uv,a,\dval_s)
       +\cost(uv,-a,\bar\dval_s).
\]
\item[A facility is opened at \(v\).]
In this case, no demand is routed from \(v\) toward either child.
Therefore, for every pair of tuples
    $(a',\dval'_s,\bar\dval'_s,c')\in\total(v')$
    and
    $(a'',\dval''_s,\bar\dval''_s,c'')\in\total(v'')$,
with \(a',a''\in\asetplus\), and every
\(a\in\asetplus\) and
\(\bar\dval_s\in[0,\ieps n/2-1]\), we add to $total(v)$ the tuple
\[
    (-a,0,\bar\dval_s,c), \qquad \text{where } c:=f(v)+c'+c''+\cost(uv,a,\bar\dval_s).
\]
\end{description}

After considering both cases, we retain only the non-dominated tuples
in \(\total(v)\).

\subsubsection{Computing costs.}
We now specify how to compute $\cost(uv, a, \dval_s)$ for a single cable $(\mu,c)$.
As it corresponds to the minimum cost of installing demands $\max(a,\zeroes) \in \asetplus$, we assume w.l.o.g.\ that $a \in \asetplus$; we also assume that $\len(uv)=1$.
We remark that the values for $\cost(uv, a, \dval_s)$, $\len(uv)=1$ can be pre-computed for any $a \in \asetplus$,
and then simply adapted to $\dval_s$ and the length of the edge when needed in the algorithm.

We compute the costs for every $a \in \asetplus$ simultaneously by using dynamic program, $C[a]$ refers to the smallest bag to serve demands $a$ so each cable is assigned demand $(1+2\epsilon)\mu$.
Let $\aset' = \set{a' \in \asetplus : \sum_i i\cdot a'_i \leq \ieps[2](1+2\epsilon)}$ be the set of possible configurations for a cable.

$C$ is computed as follows: $C[\zeroes]=0$, $C[a] = \min\set[\big]{C[a-a']\cup \set{a'}: a' \in \aset', a-a' \in \asetplus}$, where $\min$ selects a smallest set.
During the algorithm, the cost of the cables is given $C[a]\cdot c$, and the leftover bundle given by $\dval_s$ is added to the cable with least demand, making use of resource augmentation.
If no cable exists (because $a=\zeroes$) but $\dval_s > 0$, use 1 cable.

\subsubsection{Analysis of the algorithm.}

Let $(I, \mathbf g, \mathbf Q)$ be the solution obtained from running the DP and recovering the best solution.

\begin{claim}[restate=claim:unsplit:1cfl:solution,name=*]
\label{claim:unsplit:1cfl:solution}
Let $(I^*,\mathbf{g^*}, \mathbf{Q^*})$ be a solution.
Then there is a modified solution $(I^*,\mathbf{g'}, \mathbf{Q^*})$ that opens the same facilities, uses the same cables for each edge, and routes large demands along the same paths, but groups small demands as specified in the dynamic program, so that the total rounded-down demand on each cable is at most $(1+2\epsilon)\mu$.
\end{claim}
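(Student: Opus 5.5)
We will transform $(I^*,\mathbf{g^*},\mathbf{Q^*})$ by rerouting only small demands, working bottom-up in the rooted tree. Large demands, facilities and cable bags stay fixed. The invariant we maintain at each vertex $v$ with parent $u$ is the following. On edge $uv$, in each direction, the small demands crossing it are organized into non-leftover bundles (each of rounded size at least $\emu/2$, and "promoted" to a large demand in the sense of the DP) plus at most one leftover bundle of rounded size below $\emu/2$. Moreover, on each cable of $Q^*_{uv}$, the rounded-down demand assigned to it is at most $(1+2\epsilon)\mu$. Since every small demand in the modified solution must still reach an open facility of $I^*$, each rerouting step will send bundles along paths already used by some small demand of the original solution, toward the same facilities. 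This guarantees feasibility of the flows.

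The key step is a local \emph{exchange argument} at a vertex $v$ where leftover bundles meet. By the inductive invariant, at most one leftover bundle arrives at $v$ from each of $v'$, $v''$, $u$, plus $v$'s own small demand. In $\mathbf{g^*}$ these small demands leave $v$ along various branches $y\in B_v$. We choose for each arriving leftover bundle a single branch $h(x)$ among those used by its constituents in the original solution, with no immediate return, as permitted by \Cref{lem:uncrossing:edges}-style reasoning. We then merge the bundles sent to the same branch exactly as in the DP's case analysis, promoting all but one to a large demand once the total reaches $\emu/2$.

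To bound the load, we account per cable. In the original solution, each cable carried original demand at most $\mu$. After regrouping, each non-leftover bundle is placed on the cable that carried the largest part of it originally. The key quantitative claim is that this moves at most $\emu$ of extra demand onto any cable. A bundle is promoted as soon as it exceeds $\emu/2$, and its constituents are each below $\emu/2$, so every bundle has size below $\emu$. The plan is to charge each bundle to a cable via a fractional assignment and round it, in the style of the rounding for generalized assignment: a fractional assignment of bundles to cables where each cable's fractional load is the original load at most $\mu$ can be rounded so each cable exceeds its fractional load by at most one bundle's size, below $\emu$. The single leftover bundle (size below $\emu/2$) is put on the least loaded cable, matching how $\cost(uv,a,\dval_s)$ is computed. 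Rounding down demands only decreases loads, by \Cref{obs:unsplit:rounding}. Together this gives rounded load at most $\mu+\emu+$ small slack $\le(1+2\epsilon)\mu$.

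The main obstacle will be making the per-cable overload bound rigorous simultaneously across all edges. A bundle's route is fixed once it is formed, so the cable choices on different edges along its path interact. We first try to argue edge by edge: on each edge, only the set of bundles crossing it matters, and the rounding argument is applied independently per edge. This is possible because cable assignment $h_{(u,w)}$ is chosen per edge in the cable-unsplittable model. The remaining subtlety is that choosing one branch per leftover bundle may send demand along an edge where its constituents originally did not travel. Here we bound the added load by the size of a single leftover bundle ($<\emu/2$) per edge direction, which is absorbed by the second $\emu$ of slack.
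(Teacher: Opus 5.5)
\paragraph{There is a genuine gap in how you route bundles.}
You send a leftover or promoted bundle along a branch used by one of its constituents in $\mathbf{g^*}$. That moves the other constituents' demand onto that branch, and the surplus then travels with the bundle all the way to its facility. Such surpluses are created at every vertex where bundles merge, and nothing stops them from piling up on a common downstream edge. For example, suppose many small demands meet at a vertex, half bound for one facility and half for another. Every bundle may legitimately pick the first branch, and the edges on that side then gain extra demand proportional to the number of bundles.

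So your closing claim, that the added load is at most one leftover bundle per edge direction, does not hold. Without the inequality ``small demand on $(v,w)$ after rerouting $\le$ small demand on $(v,w)$ in $\mathbf{g^*}$'', there is no fractional assignment of the bundles to the cables of $Q^*_{vw}$ with load at most $\mu$. Your GAP-style rounding therefore has nothing to round. The rule ``the cable that carried the largest part of it'' is also undefined on edges that none of the constituents used.

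Separately, a single bottom-up induction cannot establish your invariant for the leftover bundle arriving at $v$ from its parent $u$. That bundle is formed at ancestors, which a bottom-up pass has not yet processed.

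\paragraph{The missing idea is to route by flow budgets rather than by constituent paths.}
This is how the paper's proof works. At $v$, let $\gamma(x,y)$ be the small demand of $\mathbf{g^*}$ that enters $v$ from branch $x$ and leaves towards branch $y$. While at least two bundles from $x$ remain, send the smallest one to a branch $y$ whose residual $\gamma(x,y)$ is at least its size, and decrease that budget. This is always possible. The smallest bundle is at most half of the remaining total from $x$. By induction, that total is at most $\gamma(x,y_1)+\gamma(x,y_2)$ for the two other branches $y_1,y_2$, so the bundle fits in the larger of the two budgets.

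Hence the small demand on each edge direction never exceeds the original, except for the last bundles. These are sent on as at most two ``extra'' bundles per child edge. A second, top-down pass then applies the same rule to bundles arriving from the parent.

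Once edge totals are preserved, assign bundles on each edge greedily to the cable with the largest residual capacity. This overloads each cable by at most one bundle, which is less than $\epsilon\mu$. The extra bundles go on a cable that is not overloaded, which yields the bound $(1+2\epsilon)\mu$. Your rounding would also work at this point, but greedy suffices. Bundles never need to stay on their constituents' paths or reach the same facilities.
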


\thmToAppendix{claim:unsplit:1cfl:solution}

\begin{appendixproof}
\newcommand{\gdem}[3]{\ensuremath{\gamma_{#1}(#2,#3)}}

We will modify the given solution $(I^*,\mathbf{g^*}, \mathbf{Q^*})$ such that demands are grouped in a way that is compatible with the dynamic program, and the excess demand on each edge is less than $2\emu$ after rounding.

We will process the solution twice:
once from the bottom up to rearrange the demands from the children onto the parent,
and then a second one from the top down to rearrange the demands from the parent to the children.
We will change the solution at each vertex so that small demands are grouped further in a way that is consistent with the grouping at its children and parent.

We do not make any changes to large demands, only to bundles of small demands.
For this process to work, we require that the rearranged demands are less than $\emu$.
For this reason, it is important that small demands come in bundles of less than $\emu$, including the leftover bundle that totals less than $\emu/2$.
We consider only the small demand $s_v$ associated to each vertex.
We also assume that there is no facility at $v$, as otherwise all of the demand incoming to $v$ is served there and no demand goes out.

Let $v$ be a vertex with children $v', v''$ and parent $u$.
We define $\gdem{}{x}{y}$ to be the total amount of small demands served by a path that passes through $x$ and afterwards $y$, where $x, y \in \set{v', v'', u, v}$.
We will route small demands between vertices $v'$, $v''$, $u$ in such a way that the total demand on the edges $vv'$, $vv''$, $uv$ does not increase, with the exception of at most two extra bundles of demands, which are always routed from $v$ to its children $v'$ and $v''$.
These additional demands are then routed when considering $v'$ and $v''$ in the second stage, when routing demands from $v$, their parent.
As a reminder, there can be any number of bundles of demands, but there is only one bundle of ungrouped demand, of total value at most $\emu/2$.

At a first stage, we route demands from $v'$ and $v''$ as follows.
For any leaf $v$, we simply send its small demand $s_v$ to the parent $u$ as an ungrouped demand.
For an internal vertex $v$, we start by considering the bundles coming from $v'$:
while there are at least 2 (with total demand $\dval$ and $\dval' \leq \dval$), we can send the smallest towards either $u$ if $\dval' \leq \gdem{}{v'}{u}$ or $v''$ if $\dval' \leq \gdem{}{v'}{v''}$.
This must always be possible, since the demand of the small bundles going from $v'$ to $v$ is, by induction, at most the demand going from $v'$ to $v$, which is $\gdem{}{v'}{u}+\gdem{}{v'}{v''}$.
We then remove the routed bundle from consideration and subtract $\dval'$ from $\gdem{}{v'}{x}$ where $vx$ was the edge chosen to route the bundle.
This process is repeated until there is only one bundle left, and then we apply the same reasoning to send the ungrouped small demands either to $u$ or $v''$.
We then do the same process for $v''$, routing all but one of the bundles of small demands.

All that is left to do at this stage is to create a new bundle of small demands, if the amount of ungrouped small demand going to $u$ is at least $\emu/2$, and to route the last bundle of $v'$ and $v''$.
For the small demands,
we make a bundle with the ungrouped small demands routed to $u$ if the total is at least $\emu/2$ and less than $\emu$;
if the total is at least \emu,
we make a bundle using only two of them (say the small demands of $v'$ and $v''$), and keep the third as ungrouped.
For the last small bundles of $v'$ and $v''$, of demand $\dval'$, $\dval''$,
we route one of them to $u$ if that is still possible, considering the total demand into $u$ (routed earlier from $v'$ and $v''$); the remaining bundle(s) get routed to the respective sibling as an extra demand.
In other words, if $\dval' \leq \gdem{}{v'}{u}+\gdem{}{v''}{u}$ (w.l.o.g. $\dval'\leq \dval''$), we route the bundle of $v'$ to $u$, and add the bundle of $v''$ as an extra demand on $v'$;
otherwise we add the bundle of $v'$ as an extra demand of $v''$ and the bundle of $v''$ as an extra demand of $v'$.

Once the first stage has finished, we now have grouped the small demands going up the tree, and have not so far increased the demands on each edge, other than the single extra demand added to some edges $vv'$ and $vv''$.

For the second stage, we go through each vertex $v$ starting at the root of the tree and going down, with the purpose of routing the demands coming to $v$ from its parent, including two extra bundles of small demands
(one from the first stage, and one more which we may add in the second stage).
For the root, as there is no parent, there are no demands to route, and thus we are trivially done.
For any other vertex $v$ with parent $u$ and children $v'$, $v''$, we repeat the same process to route the small bundles of demands coming from $u$ to $v$ as we did for $v'$ and $v''$:
if there are at least two bundles, we route the smallest of them to either $v'$ or $v''$, remove it from consideration, and decrease the corresponding value $\gdem{}{u}{x}$, routing the ungrouped small demands in the same way.

As to the last bundle, we argue that either we can route it to $v'$ or $v''$, or it must be that in the first stage we routed the last bundle of either $v'$ or $v''$ towards $u$, and thus one of $v'$, $v''$ does not yet have an extra demand.
Indeed, if we consider the last bundles of each of $u$, $v'$ and $v''$, with value $\tilde \dval$, $\dval'$, $\dval''$, the sum of their demands must be at most the total demand going into $v$ in the solution, which equals the demand out of $v$,
that is
\[
    \tilde \dval + \dval' + \dval''
    \leq (\gdem{}{v'}{u} + \gdem{}{v''}{v'})
        + (\gdem{}{u}{v'} + \gdem{}{v''}{u})
        + (\gdem{}{v'}{v''} + \gdem{}{u}{v''}),
\]
which by an averaging argument implies that one of the last bundles can be routed to one of the other vertices.
The two remaining last bundles get routed to $v'$ and $v''$ as an extra demand each.
Thus, if the last bundle of $u$ is routed, $v'$ and $v''$ get an extra demand from $v''$ and $v'$, respectively;
if the last bundle of $v'$ is routed, the last bundle of $u$ is routed to $v''$ and the last bundle of $v''$ to $v'$; and analogously for $v''$.
The (at most) two extra demands sent from $u$ to $v$ are routed one each to $v'$ and $v''$, and so each of them gets at most two extra demands.

So far, we have argued that we can transform any solution so that small demands are grouped as considered in our dynamic program, and the capacity on each edge is exceeded by at most two bundles, which increases the demand by at most $2\emu$.
However, it is also necessary to argue that the same cables with resource augmentation can handle its assigned demands.

For any edge $(v,w)$ (in a single direction), consider the set of cables used by the given solution, and assign the large demands to the cables in the same way as in the solution;
we then assign the bundles of small demands greedily to the cable that has the most leftover capacity.

As we know that, with the exception of the two extra demands, the sum of demands on $(v,w)$ in the constructed solution is at most the total capacity of the cables in the given solution,
then it must be the case that when adding any bundle, there is still leftover capacity in one of the cables.
Thus, after adding every non-extra bundle of demands, the demand of each cable cannot exceed the capacity by more than $\emu$, the maximum size of a bundle.
Furthermore, there must be a cable where the demand does not exceed the capacity, as otherwise the total demand would exceed the total capacity.
The two extra demands are placed on one such cable.
Overall, the capacity of each cable is exceeded by at most $2\emu$, as desired.
\end{appendixproof}

\begin{claim}[restate=claim:unsplit:1cfl:feasible,name=*]
\label{claim:unsplit:1cfl:feasible}
$(I, \mathbf g, \mathbf Q)$ is a $5\epsilon$-resource augmented solution.
\end{claim}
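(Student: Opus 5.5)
The plan is to check the three conditions in the definition of an $\epsilon$-resource augmented solution one at a time. We use the fact that every tuple in $\total(v)$ comes from an explicit recursive construction. The first step is to unfold the DP backwards from the optimal tuple at the root. This recovers the open facilities $I$, which are the vertices where the ``facility opened'' case was selected. It also recovers, for each vertex $v$, a routing of every large demand and every bundle. Large demands follow the flow-balance vectors $a$. Bundles follow the assignments $h$ chosen at internal nodes, together with the promotion rule that merges leftover bundles into large demands.

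The second step is to show that this recovery defines, for each original vertex $v$, a single path from $v$ to a vertex of $I$, with flow $d_v$ along it. This gives condition~1. We argue by induction over the recursion. Every unit counted in $a$ or in $\dval_s,\bar\dval_s$ at an edge either continues through a vertex with no facility, since balance holds there by construction of $a$ and of the $T_y$, or is absorbed at a vertex with a facility. In the facility case the tuple has only downward entries $-a,\bar\dval_s$, and the children's tuples have $a',a''\in\asetplus$ and contribute nothing downward. A promoted bundle keeps its constituent original demands together, so each original demand stays on a single path. Because $h(x)\neq x$, no bundle returns along the edge it arrived on, so the paths are simple in the tree. This also gives cable-unsplittability, since each original demand belongs to exactly one large item or bundle on every edge.

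The third step is the capacity bound, and I expect it to be the main obstacle. On each directed edge, the DP pays $\cost(uv,a,\dval_s)$. This corresponds to a packing $C[a]$ in which each cable carries rounded large demand of total at most $(1+2\epsilon)\mu$, that is, $\sum_i i a'_i\delta_b\le(1+2\epsilon)\mu$. The leftover bundle, of size less than $\emu/2$, is then added to the least-loaded cable. I will first bound the rounded load on each cable by $(1+2\epsilon)\mu+\emu/2$. Actually I expect that adding the leftover to the least-loaded cable keeps the bound at $(1+2\epsilon)\mu$ up to this additive $\emu/2$ term, so I will simply carry it along. Next I convert rounded values back to original demands. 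By \Cref{obs:unsplit:rounding}(1), each large item of rounded value $b\delta_b$ has original value at most $(1+2\epsilon)b\delta_b$. A promoted bundle was rounded by a floor to a multiple of $\delta_b$, losing at most $\delta_b=\epsilon^2\mu$, and its constituent small demands lose in total at most $\epsilon^2\mu$ by \Cref{obs:unsplit:rounding}(2). Since the promoted bundle is at least $\emu/2$, it too is inflated by a factor of at most $1+O(\epsilon)$. The care needed here is to charge the additive errors multiplicatively rather than per item. Every large item or promoted bundle has rounded value at least $\emu/2-\epsilon^2\mu$, which bounds the number of items per cable by $O(1/\epsilon)$. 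However, using a multiplicative factor $(1+4\epsilon)$ per item avoids counting items at all. Combining the pieces gives the load bound
\[
(1+2\epsilon)\bigl((1+2\epsilon)\mu\bigr)+\emu/2+\epsilon^2\mu\le(1+5\epsilon)\mu
\]
for $\epsilon\le1/10$. This holds after checking $4\epsilon^2+\epsilon^2\le\epsilon/2$. It is this arithmetic, and in particular deciding whether the leftover bundle's $\emu/2$ fits inside the $5\epsilon$ slack, that I would verify most carefully.

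Finally, all cables on an edge carry demand in one direction only. This holds because upward and downward demands are packed separately, through $\cost(uv,a,\dval_s)+\cost(uv,-a,\bar\dval_s)$, and because $a$ records each size class in only one direction.
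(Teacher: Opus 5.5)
Your proposal is correct and follows essentially the same route as the paper: feasibility and cable-unsplittability come from the construction of the recursion, which you unfold in more detail than the paper's ``by construction'' remark. The capacity bound is obtained per cable by combining the $(1+2\epsilon)\mu$ packing limit, the leftover bundle of size below $\epsilon\mu/2$, the factor $(1+2\epsilon)$ for items of rounded size at least $\epsilon\mu/2$, and a single additive $\epsilon^2\mu$ for small-demand rounding, giving exactly the paper's $(1+2\epsilon)^2\mu+\epsilon\mu/2+\epsilon^2\mu\le(1+5\epsilon)\mu$.

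One caution: the ``$(1+4\epsilon)$ per item'' charging you mention in passing would overshoot the budget, giving roughly $(1+6.5\epsilon)\mu$. The $\delta_s$-rounding loss must instead be charged once per cable, by applying the second part of \Cref{obs:unsplit:rounding} to all small demands on that cable; this is what your final display, and the paper, actually do.
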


\thmToAppendix{claim:unsplit:1cfl:feasible}

\begin{appendixproof}
The solution is feasible by construction, as demands get served only at facilities and are not separated into multiple cables, due to the equations in the recursion, which only have one variable on the right hand side, and the construction of the cables, which takes demands as indivisible units.

The cost of a solution is also correctly computed: the cost of facilities is considered, and the cost of cables takes into account all of the demands going through the edge.

Finally, to see that the solution is $5\epsilon$-resource augmented, we argue that each cable is overloaded by $2\epsilon\mu$ by construction, and then by $2\epsilon\mu$ by the rounding of large demands.

Let a cable be assigned the demands given by a vector $a \in \asetplus$ with small demands $\lambda_s \in [0,\ieps n/2-1]$.
The rounded demand on the cable can be written as
\[
    \sum_i a_i \cdot i\delta_b + \lambda_s \delta_s = \sum_j \tilde d_j + \lambda_s \delta_s,
\]
where $\tilde d$ is a vector of demands corresponding to $a$, containing $a_i$
elements of value $i\delta_b$, for every $i \in [\ieps[2]]$.

Let $U \subseteq V$ be the set of vertices whose demands are carried by the cable, and let $U = U_s \uplus U_b$ be the partition of $U$ into small and large demands, respectively.

Of these values $\tilde d_j$, some correspond to the grouped small demands of a set $S_j \subseteq U_s$, and so we can write:
\[
\tilde d_j
\leq \sum_{v \in S_j} s_v \delta_s
\leq \tilde d_j + \epsilon^2\mu
\leq \tilde d_j (1+2\epsilon),
\]
where the first two inequalities come from the fact that we rounded down the total small demand of the bundle to the nearest multiple of $\delta_b = \epsilon^2\mu$, and the last from the fact that $\tilde d_j \geq \epsilon \mu/2$ and \Cref{obs:unsplit:rounding}.

Similarly, for the demands $\tilde d_j$ corresponding to the large demand of a vertex $u$, we have that $\tilde d_j = b_v \delta_b$ and thus
\(
\tilde d_j
\leq d_v
\leq \tilde d_j + \epsilon^2\mu
\leq \tilde d_j (1+2\epsilon),
\)
and for the small demands $s_v\delta_s \leq d_v \leq s_v\delta_s + \epsilon^2\mu/n$, as a consequence of rounding down to multiples of $\delta_s = \epsilon^2 \mu / n$.

Combining all of these facts, we get that
\begin{align*}
\sum_{v \in U} d_v
&= \sum_{v \in U_s} d_v + \sum_{v \in U_b} d_v \\
&\geq \sum_{v \in U_s} s_v\delta_s + \sum_{v \in U_b} b_v\delta_b \\
&\geq \sum_j \tilde d_j + \lambda_s \delta_s,
\end{align*}
since every demand $\tilde d_j$ corresponds to either a large demand or a bundle of small demands, the remaining of which are accounted for in $\lambda_s \delta_s$.

On the other hand, since $\sum_j \tilde d_j \leq \mu (1+2\epsilon)$, we get
\begin{align*}
\sum_{v \in U} d_v
&\leq \sum_{v \in U_s} \paren*{s_v\delta_s+\frac{\epsilon^2\mu}{n}} + \sum_{v \in U_b} b_v\delta_b(1+2\epsilon) \\
&\leq \sum_j \tilde d_j(1+2\epsilon) + \lambda_s \delta_s + n \frac{\epsilon^2\mu}{n} \\
&\leq \mu(1+2\epsilon)(1+2\epsilon) + \frac{\epsilon \mu}{2} + \epsilon^2 \mu \\
&\leq \mu(1+5\epsilon),
\end{align*}
where again we use that each $\tilde d_j$ corresponds to either a large demand or bundle of small demands, that the remaining small demands total at most $\epsilon\mu/2$, and that $\epsilon \leq 1/10$.
\end{appendixproof}

\begin{claim}[restate=claim:unsplit:1cfl:optimal,name=*]
\label{claim:unsplit:1cfl:optimal}
$(I, \mathbf g, \mathbf Q)$ has optimal cost for the \icfl instance,
i.e.\ its cost is at most the cost of an optimum solution not using resource augmentation.
\end{claim}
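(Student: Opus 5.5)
The plan is to show that an optimum solution $(I^*,\mathbf{g^*},\mathbf{Q^*})$ without resource augmentation corresponds to a state sequence of the dynamic program whose computed cost is at most its cost. Since the DP keeps, for every vertex and every state, a minimum-cost tuple, the recovered solution then costs at most $\opt$.

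First, by \Cref{lem:uncrossing:edges:unsplit}, we assume the optimum has no two equal-valued large demands crossing on any edge. We then apply \Cref{claim:unsplit:1cfl:solution} to obtain $(I^*,\mathbf{g'},\mathbf{Q^*})$. This solution has the same facilities and the same cables, so it has the same cost. Its small demands are grouped into bundles consistently with the DP's merge rules, and every cable carries rounded-down demand at most $(1+2\epsilon)\mu$.

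For each vertex $v$ with parent $u$, we read off from this modified solution:
\begin{itemize}
\item the vector $a^*(v)\in\aset$ counting large demands and promoted bundles of each rounded size $i\delta_b$ crossing $uv$, positive for upward and negative for downward;
\item the residual leftover-bundle amounts $\dval_s^*(v)$ and $\bar\dval_s^*(v)$.
\end{itemize}
Let $c^*(v)$ be the solution's cost restricted to $G_v$ and the edge $uv$.

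We then prove by induction on the height of $T_v$ that $\total(v)$ contains a tuple $(a^*(v),\dval^*_s(v),\bar\dval^*_s(v),c)$ with $c\le c^*(v)$, mirroring \Cref{lemma:tree:dp_correctness_exists,lemma:tree:dp_correctness_optimal}.

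\begin{itemize}
\item \emph{Leaf without a facility.} The state is $(\indic(b_v),s_v,0)$.
\item \emph{Leaf with a facility.} Everything from the parent is enumerated.
\item \emph{Internal node with a facility.} By \Cref{lem:uncrossing:facilities}, no demand leaves $v$. So the children's states lie in $\asetplus$ and we may take the tuples given by induction.
\item \emph{Internal node without a facility.} The routing of leftover bundles in the modified solution defines a map $h\colon D_v\to B_v$ with $h(x)\neq x$, together with the choice $x_y$ of which bundle remains leftover. The DP enumerates both. Flow conservation for large demands, together with promotion of merged bundles into $\indic(b_y)$, gives exactly $a^*(v)=a'+a''+\indic(b_v)+\sum_y\indic(b_y)$. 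The residuals match the children's $\bar\dval_s$ by construction.
\end{itemize}
In each case, nondomination keeps a tuple with the same state and no larger cost.

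It remains to check the edge cost: $\cost(uv,a^*,\dval^*_s)+\cost(uv,-a^*,\bar\dval^*_s)\le\len(uv)\cdot c\cdot|Q^*_{uv}|$. In each direction, the cables of $Q^*$ with their assignment from \Cref{claim:unsplit:1cfl:solution} give a feasible configuration. Each cable's large demands and bundles form some $a'\in\aset'$, since the rounded load is at most $(1+2\epsilon)\mu$. Hence $C[a^*]$ uses at most as many cables as $Q^*$ does in that direction. The leftover bundle is placed on the least-loaded cable, which the cost oracle's tolerance accommodates. If $a^*=\zeroes$ but $\dval_s>0$, then $Q^*$ already has at least one cable in that direction.

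The main obstacle is this last step. The oracle's tolerance must be the same one \Cref{claim:unsplit:1cfl:solution} guarantees. In particular, the leftover bundle and the two extra bundles must be absorbed without requiring more cables than $Q^*$. I would handle it by showing that the extra bundles are already counted among the promoted bundles in $a^*$, so that the greedy placement in the claim's proof certifies each cable's content as a member of $\aset'$.
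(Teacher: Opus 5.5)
Your proposal follows the paper's own argument --- apply \Cref{claim:unsplit:1cfl:solution} to an optimum and conclude by optimal substructure of the dynamic program --- with the induction and the edge-by-edge comparison of $C[a^*]$ against $Q^*$ written out, and the step you flag as the main obstacle is already settled by that claim: every cable's rounded load, extra bundles included (these are promoted bundles, hence counted in $a^*$), is at most $(1+2\epsilon)\mu$, so each cable's content lies in $\aset'$. One adjustment is needed: invoke \Cref{lem:uncrossing:edges:unsplit} \emph{after} the modification and for items of equal rounded size $i\delta_b$ (promoted bundles included), as the remark before the dynamic program intends, because uncrossing equal original values beforehand does not make $a^*(v)$ a well-defined signed vector.
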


\thmToAppendix{claim:unsplit:1cfl:optimal}

\begin{appendixproof}
By \Cref{claim:unsplit:1cfl:solution}, there is a solution with the required properties and optimal cost that uses the rounded-down demands and has total demand on each cable at most $(1+2\epsilon)\mu$.
Thus, the proof follows by optimal substructure, as any solutions to a subproblem that have the same demand profile are functionally equivalent.
\end{appendixproof}

\begin{claim}[restate=claim:unsplit:1cfl:time,name=*]
\label{claim:unsplit:1cfl:time}
The algorithm to compute $(I, \mathbf g, \mathbf Q)$ runs in time $n^{O(\ieps[2])}$.
\end{claim}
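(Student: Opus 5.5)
The bound follows from counting DP states and multiplying by the cost of building each state and of the auxiliary cable DP.

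\textbf{State count.} A tuple in $\total(v)$ (after pruning dominated entries) is determined by $(a,\dval_s,\bar\dval_s)$, since only the cheapest cost is kept for each such triple. The vector $a\in\aset$ has at most $\ieps[2]$ free coordinates, each in $[-n,n]$, so $|\aset|\le(2n+1)^{\ieps[2]}$. Each of $\dval_s,\bar\dval_s$ ranges over $[0,\ieps n/2-1]$. Hence
\[
|\total(v)|\le(2n+1)^{\ieps[2]}\cdot(\ieps n)^2=n^{O(\ieps[2])}.
\]
The binarization of \Cref{lem:unsplit:binary} (or the analogous transformation used in the tree section) adds only $O(n)$ vertices, so this count is unaffected.

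\textbf{Precomputing the cable costs.} The table $C[a]$ has $|\asetplus|\le(n+1)^{\ieps[2]}$ entries. Each entry minimizes over $a'\in\aset'$, and $|\aset'|\le|\asetplus|$, so the whole table takes $n^{O(\ieps[2])}$ time. For an edge $uv$, $\cost(uv,a,\dval_s)$ is then read off as $\len(uv)\cdot c\cdot C[\max(a,\zeroes)]$, with the leftover bundle $\dval_s$ placed on the least-loaded cable, or on one new cable if $a=\zeroes$. This is polynomial per query.

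\textbf{Leaf and facility transitions.} At a leaf, the algorithm enumerates $a\in\asetplus$ and $\bar\dval_s$, which is $n^{O(\ieps[2])}$ work. When a facility is opened at an internal node, it enumerates pairs of child tuples together with $(a,\bar\dval_s)$. This gives $(n^{O(\ieps[2])})^3=n^{O(\ieps[2])}$ combinations, each processed in polynomial time.

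\textbf{Non-facility transition.} This is the step that needs the most care. It enumerates pairs of child tuples ($n^{O(\ieps[2])}$ choices) and a value $\bar\dval_s$ ($O(\ieps n)$ choices). It also enumerates the map $h\colon D_v\to B_v$, of which there are at most $3^4$, and the choice of $x_y$ for each $y\in B_v$, at most $4^3$ in total. Both are constants independent of $n$ and $\epsilon$.

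For each combination, the algorithm computes $T_y$, $(b_y,r_y)$ and $a$, and checks that $r_{v'}=\bar\dval'_s$ and $r_{v''}=\bar\dval''_s$. All of this is arithmetic on vectors of length $\ieps[2]$, so it takes $\poly(n,\ieps)$ time.

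One detail must be checked: the resulting $a$ must stay in $\aset$, or else it is discarded. The promoted index $b_y$ satisfies $b_y\le\ieps[2]$, since $T_y\delta_s<\ieps n\delta_s\cdot O(1)$ and the leftover bundles being merged are each below $\emu/2$. Coordinates that leave $[-n,n]$ are simply dropped, since an optimal solution never has more than $n$ demands of one class on an edge. So each node also costs $n^{O(\ieps[2])}$.

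\textbf{Total.} Pruning dominated tuples is linear in the number of generated tuples, using a table indexed by $(a,\dval_s,\bar\dval_s)$. Recovering the final solution by back-pointers is also linear. Summing over the $O(n)$ nodes gives
\[
O(n)\cdot n^{O(\ieps[2])}=n^{O(\ieps[2])}.
\]
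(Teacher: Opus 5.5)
Your proposal is correct and follows the same route as the paper. The paper only notes that $|\mathcal{A}| = n^{O(\epsilon^{-2})}$ and that the table size, the enumeration of combinations in the recursive rules, and the precomputation of installation costs are all polynomial in $|\mathcal{A}|$ and $n$; you spell out details it leaves implicit, namely the $O(\epsilon^{-1}n)$ range of $\lambda_s,\bar\lambda_s$, the constant number of choices for $h$ and $x_y$, and the check that the promoted indices $b_y$ stay in range.
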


\thmToAppendix{claim:unsplit:1cfl:time}

\begin{appendixproof}
We remark that $\aset$ has size $M = (2n)^{\ieps[2]} = n^{O(\ieps[2])}$.
The size of the dynamic program, the running time of trying all possible combinations for the recursive rules, and the time to compute the installation costs for a given configuration are all polynomial in $M$ and $n$, and thus the running time is $n^{O(\ieps[2])}$.
\end{appendixproof}

\begin{proof}[Proof of \Cref{thm:unsplit:ptas-ra} for a Single Cable Type (k=1)]
Consider the solution $(I, \mathbf g, \mathbf Q)$ produced by the dynamic program for the single cable type case.
By \Cref{claim:unsplit:1cfl:solution}, any feasible solution can be transformed so that small demands are grouped according to the DP without increasing the total demand on any cable by more than $2\epsilon \mu$.
\Cref{claim:unsplit:1cfl:feasible} then guarantees that $(I, \mathbf g, \mathbf Q)$ is feasible and constitutes a $5\epsilon$-resource augmented solution.
\Cref{claim:unsplit:1cfl:optimal} ensures that the DP computes an optimal-cost solution among all solutions with the same demand grouping, which implies optimality under resource augmentation.
Finally, \Cref{claim:unsplit:1cfl:time} establishes that the DP runs in time $n^{O(\ieps[2])}$.

Hence, for $k=1$, the dynamic program yields an $\epsilon$-resource augmented solution of optimal cost within the claimed running time.
\end{proof}

\subsection{\rakcfl on Trees for a Constant Number of Cables}
\label{sec:unsplit:k}
\appendixSubsection{RAkCFL on Trees for a Constant Number of Cables}

In this section, we generalize the results of \Cref{sec:unsplit:k1} to a constant number of cables.
We will emphasize the differences with the case of $k=1$ both in the algorithm and the analysis , which are mostly details on how to store the necessary information about the solutions and how to compute the cost.

Let $(\mu_1, c_1), (\mu_2, c_2), \ldots, (\mu_k, c_k)$ be the cables in increasing order of capacity
(if two cables have the same capacity, we take one with lowest cost and discard the others).
We now consider large demands to be demands that are not too small compared to the smallest cable, that is, at least $\epsilon\mu_1/2$.
Demands that are below that threshold are considered small, and rounded down to the nearest multiple of $\delta_s = \epsilon^2\mu_1/n$.

\newcommand{\bvi}[2]{\ensuremath{b^{(#2)}_{#1}}}

Large demands are rounded down to the nearest multiple of $\set{\delta_1, \ldots, \delta_k}$ where $\delta_i = \epsilon^2\mu_i$,
that is, when rounding $d_v$, we consider the quantities $\delta_1\ffloor{d_v/\delta_1}$, $\delta_2\ffloor{d_v/\delta_2}$, \ldots, $\delta_k\ffloor{d_v/\delta_k}$ and round $d_v$ to the closest of these quantities (which is also the largest).
This method has the advantage of providing an approximation that is within $\delta_i$ for any cable $(\mu_i, c_i)$ thus allowing its use in any cable.
The number of possibilities for small demands remains $\ieps n/2-1$.

Formally, we take $s_v = \ffloor{d_v/\delta_s}$  if $d_v$ is small (at most $\emu/2$) and $s_v = 0$ otherwise,
and define $\bvi{v}{i} = \ffloor{d_v/\delta_i}$ if $\delta_i\ffloor{d_v/\delta_i}$ is the closest approximation for $d_v$ such that $d_v \geq \epsilon\mu_i/2$ and $d_v \leq \mu_i$, with $\bvi{v}{j} = 0$ for $j \neq i$.
Note that, for each vertex $v \in V$, only (at most) one of the values among $s_v$ and $\bvi{v}{i}$ is non-zero, by definition.
Additionally, the number of rounded-down possibilities for large demands is now at most $k \ieps[2]$, as for each cable there are $\ieps[2]$ possible values up to its capacity;

\subsubsection{Dynamic program.}
A set of demands is again represented by a vector $a \in \aset$ and two small demands $\dval_s, \bar\dval_s \in [0,\ieps n/2-1]$, though the definition changes to
$\aset := \paren[\big]{\set{0}^{\ieps/2} \times [-n,n]^{\ieps[2]-\ieps/2}}^k$,
accounting for the $k$ cables.
As before, we also define $\asetplus$ to be the subset of $\aset$ where every value is non-negative.

The set $\total(v)$ is computed recursively following a similar structure.
In this section, we use $\indic(\bvi{v}{\cdot})$ to refer to the concatenation of the indicator vectors $\bvi{v}{1},\bvi{v}{2},\ldots,\bvi{v}{k}$.

If $v \in V$ is a leaf, we add to $\total(v)$:
\begin{itemize}
    \item the tuple $(\indic(\bvi{v}{\cdot}), s_v, 0, \cost(uv, \indic(\bvi{v}{\cdot}), s_v))$,
    \item a tuple $(-a, 0, \bar \dval_s, f(v)+\cost(uv, a, \bar \dval_s))$ for each $a \in \asetplus, \bar\dval_s \in [0, \ieps n/2-1]$.
\end{itemize}

For an internal node $v \in V$ with children $v', v'' \in V$, and for every pair of tuples $(a', \dval'_s, \bar\dval'_s, c') \in \total(v')$, $(a'', \dval''_s, \bar\dval''_s, c'') \in \total(v'')$, we add to $\total(v)$
the tuples corresponding to correct combinations with $(a, \dval_s, \bar\dval_s, c)$, where essentially
$a'+a''+ \indic(b_v) = a$, using the same routing-and-promotion procedure for small demands as
in the single-cable case.

When a new large demand corresponding to a bundle is created, it is rounded as with initial demands (and will often be rounded as a multiple of $\delta_1$ unless $\mu_1$ and $\mu_2$ are close).

For the case of opening a facility in $v$, we still consider any tuples $(a', \dval'_s, 0, c') \in \total(v')$, $(a'', \dval''_s, 0, c'') \in \total(v'')$, $a',a'' \in \asetplus$, as well as any $a \in \asetplus$, $\bar\dval_s \in [0,\ieps n/2-1]$, and add the tuple $(-a, 0, \bar\dval_s, c)$, with cost $c=f(v)+c'+c''+\cost(uv,a,\bar\dval_s)$ to $\total(v)$.

\subsubsection{Computing costs.}
We now specify how to compute $\cost(uv, a, \dval_s)$, for $a \in \asetplus$, $\dval_s \geq 0$.
The general method is similar, since the number of combinations for cable configurations is bounded.

We first compute, for each $i \in [k]$, the set $\aset'_i$ of vectors of capacity at most $\mu_i (1+2\epsilon)$,
\[
\aset'_i = \set{a' \in \asetplus : \sum_{\iota\in [k]} \sum_{j \in [\ieps[2]]} j \delta_\iota \cdot a'_{\iota,j} \leq \mu_i(1+2\epsilon)}.
\]

The sets $\aset'_i$ can be computed explicitly by enumerating all vectors in $\asetplus$. Indeed, $\asetplus$ contains at most
\(
    (n+1)^{k(\ieps[2]-\ieps/2)}
    = n^{O(k\ieps[2])}
\)
vectors.
For each vector $a'\in\asetplus$, we evaluate the left-hand side of the above inequality in $O(k\ieps[2])$ time and include $a'$ in $\aset'_i$ if the inequality is satisfied.
Therefore, the sets $\aset'_i$, for $i\in[k]$, can be computed in time $n^{O(k\ieps[2])}$.

We will compute the table $C$, where $C[a]$ refers to the minimum-cost bag of cables such that each cable can be assigned demand at most $(1+2\epsilon)$ times its capacity, with large demands given by $a$.

To compute $C$, we start with $C[\zeroes]=0$, and then recursively compute
\[
C[a]
= \argmin c\paren*{\set[\big]{
    C[a-a']\cup \set{(\mu_i,c_i)}: i \in [k], a' \in \aset'_i, a-a' \in \asetplus}}
\]

As in the single-cable case, the residual small bundle represented by $\dval_s$ is assigned to any cable in $C[a]$. If $a=\zeroes$ and $\dval_s>0$, we install a cable of minimum cost. This adds less than $\epsilon\mu_1/2\leq\epsilon\mu_i/2$ demand to the selected cable and is accounted for in the resource-augmentation analysis.

\subsubsection{Analysis.}
The analysis follows similarly to the single-cable case.
Let $(I, \mathbf g, \mathbf Q)$ be the solution obtained from running the DP and recovering the best solution.

\begin{claim}[restate=claim:unsplit:kcfl:feasible,name=*]
\label{claim:unsplit:kcfl:feasible}
$(I, \mathbf g, \mathbf Q)$ is a $5\epsilon$-resource augmented solution.
\end{claim}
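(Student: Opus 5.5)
The proof mirrors that of \Cref{claim:unsplit:1cfl:feasible}. We first check structural feasibility, then bound the load of each cable in the bag $C[a]$ separately, using the rounding unit attached to that cable.

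Structural feasibility follows from the recursion. Every large demand or bundle is treated as an indivisible entry of $a$, and facility rules only absorb demand at open facilities. Flow balance $a = a'+a''+\indic(\bvi{v}{\cdot})+\sum_y \indic(b_y)$ guarantees that each $g_v$ is a flow from $v$ to $I$ of value $d_v$, supported on a single path. Cost accounting is also correct, since each edge pays $\cost(uv,a,\dval_s)+\cost(uv,-a,\bar\dval_s)$. Since each cable in a bag $C[a]$ is chosen together with a configuration $a'\in\aset'_i$, cable-unsplittability and unidirectionality hold by construction.

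For the capacity bound, fix a cable $(\mu_i,c_i)$ carrying configuration $a'\in\aset'_i$ plus possibly the residual small bundle $\dval_s\delta_s$. By definition of $\aset'_i$, the rounded large load satisfies $\sum_j \tilde d_j \le (1+2\epsilon)\mu_i$.

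The key step is to show that each rounded value $\tilde d_j$, whether an original large demand or a promoted bundle rounded to a multiple of some $\delta_\iota$, satisfies $\tilde d_j \le d \le (1+2\epsilon)\tilde d_j$ with respect to its true demand $d$. The rounding rule chooses $\delta_\iota\ffloor{d/\delta_\iota}$ for some $\iota$ with $d\ge \epsilon\mu_\iota/2$, so the error is at most $\delta_\iota=\epsilon^2\mu_\iota \le 2\epsilon d$. This gives the relative bound, in the spirit of \Cref{obs:unsplit:rounding}, independently of which cable the demand ends up on. I expect this to be the main obstacle. Two points need care here. First, we must verify that the rounding choice, the largest of the candidates, really stays within $2\epsilon$ relative error, which reduces to restricting to indices $\iota$ with $\epsilon\mu_\iota/2\le d$. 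Second, for bundles the true total additionally includes the small-demand rounding error. For that error we use part~2 of \Cref{obs:unsplit:rounding} with $\delta_s=\epsilon^2\mu_1/n$: over all small demands on the cable, the total error is at most $\epsilon^2\mu_1\le\epsilon^2\mu_i$.

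Summing, the true load on the cable is at most
\[
(1+2\epsilon)^2\mu_i + \tfrac{\epsilon\mu_1}{2} + \epsilon^2\mu_1 \le (1+2\epsilon)^2\mu_i + \tfrac{\epsilon\mu_i}{2}+\epsilon^2\mu_i \le (1+5\epsilon)\mu_i,
\]
using $\mu_1\le\mu_i$ and $\epsilon\le 1/10$, exactly as in the single-cable computation. When $a=\zeroes$ and only a residual bundle is present, a single cheapest cable is installed and its load is below $\epsilon\mu_1/2+\epsilon^2\mu_1 \le \mu_i$, so the bound holds trivially. This establishes $5\epsilon$-resource augmentation.
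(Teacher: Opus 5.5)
Your proposal follows the paper's proof essentially line by line. The shared steps are the per-cable bound from $\mathcal{A}'_i$, a $(1+2\epsilon)$ relative rounding error for every large demand or promoted bundle regardless of which cable it sits on, a total small-demand rounding error of at most $n\delta_s=\epsilon^2\mu_1$, and a residual bundle below $\epsilon\mu_1/2$. These are summed and simplified with $\mu_1\le\mu_i$ and $\epsilon\le 1/10$.

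One step needs a small correction. The inequality $\delta_\iota\le 2\epsilon d$ only gives $d\le \tilde d_j/(1-2\epsilon)$. That is too weak for the constant $5\epsilon$: at $\epsilon=1/10$ it yields about $1.56\,\mu_i>1.5\,\mu_i$. To get the $(1+2\epsilon)^2$ you actually use, you need $\delta_\iota\le 2\epsilon\tilde d_j$. This holds because $\epsilon\mu_\iota/2=\delta_\iota\cdot\epsilon^{-1}/2$ is itself a multiple of $\delta_\iota$ (as $\epsilon^{-1}$ is even), so rounding $d\ge\epsilon\mu_\iota/2$ down to a multiple of $\delta_\iota$ keeps $\tilde d_j\ge\epsilon\mu_\iota/2$. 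This is exactly the paper's remark that $j\delta_i\ge\epsilon\mu_i/2$.
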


\thmToAppendix{claim:unsplit:kcfl:feasible}

\begin{appendixproof}
The solution is feasible by construction and the cost of the solution is correctly computed by the dynamic program.
We argue that the solution is $5\epsilon$-resource augmented, as each cable is overloaded by $2\epsilon\mu$ by construction, and then by $2\epsilon\mu$ by the rounding of large demands.

Let a cable of any type $(\mu,c)$ be assigned the demands given by a vector $a \in \asetplus$ with small demands $\lambda_s \in [0,\ieps n/2-1]$.
The rounded demand on the cable can be written as
\[
    \sum_{i\in [k]} \sum_{j \in [\ieps[2]]} j \delta_i \cdot a_{i,j}
        + \lambda_s \delta_s
        = \sum_j \tilde d_j + \lambda_s \delta_s,
\]
where $\tilde d$ is a vector of demands corresponding to $a$, containing $a_{i,j}$
elements of value $j\delta_i$, for every $i \in [k], j \in [\ieps[2]]$.
Let $U \subseteq V$ be the set of vertices whose demands are carried by the cable, and let $U = U_s \uplus U_b$ be the partition of $U$ into small and large demands, respectively.

We remark that for any large demand assigned rounded to size $j\delta_i$, we know that $j \delta_i \leq d_v \leq (j+1)\delta_i \leq j \delta_i (1+ 2\epsilon)$, as $j\delta_i \geq \epsilon\mu_i/2$.

Of these values $\tilde d_j$, some correspond to the grouped small demands of a set $S_j \subseteq U_s$, and so we can write:
\[
\tilde d_j
\leq \sum_{v \in S_j} s_v \delta_s
\leq \tilde d_j + \epsilon^2\mu_1
\leq \tilde d_j (1+2\epsilon),
\]
as we round down to the nearest multiple of $\delta_1 = \epsilon^2\mu_1$,
and $\tilde d_j \geq \epsilon \mu_1/2$.

Similarly, for the demands $\tilde d_j$ corresponding to the large demand of a vertex $u$, we have that
\(
\tilde d_j
\leq d_v
\leq \tilde d_j (1+2\epsilon),
\)
and for the small demands $s_v\delta_s \leq d_v \leq s_v\delta_s + \epsilon^2\mu_1/n$, as a consequence of rounding down to multiples of $\delta_s = \epsilon^2 \mu_1 / n$.

Combining all of these facts, we get that
\begin{align*}
\sum_{v \in U} d_v
&= \sum_{v \in U_s} d_v + \sum_{v \in U_b} d_v \\
&\geq \sum_j \tilde d_j + \lambda_s \delta_s,
\end{align*}

On the other hand, since $\sum_j \tilde d_j \leq \mu (1+2\epsilon)$, we get
\begin{align*}
\sum_{v \in U} d_v
&\leq \sum_{v \in U_s} \paren*{s_v\delta_s+\frac{\epsilon^2\mu}{n}} + \sum_{v \in U_b} \tilde d_v(1+2\epsilon) \\
&\leq \sum_j \tilde d_j(1+2\epsilon) + \lambda_s \delta_s + n \frac{\epsilon^2\mu}{n} \\
&\leq \mu(1+2\epsilon)(1+2\epsilon) + \frac{\epsilon \mu}{2} + \epsilon^2 \mu \\
&\leq \mu(1+5\epsilon),
\end{align*}
where again we use that each $\tilde d_j$ corresponds to either a large demand or bundle of small demands, that the remaining small demands total at most $\epsilon\mu/2$, and that $\epsilon \leq 1/10$.
\end{appendixproof}

\begin{claim}[restate=claim:unsplit:kcfl:optimal,name=*]
\label{claim:unsplit:kcfl:optimal}
$(I, \mathbf g, \mathbf Q)$ has optimal cost for the \kcfl instance,
i.e.\ its cost is at most the cost of an optimum solution not using resource augmentation.
\end{claim}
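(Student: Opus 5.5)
The plan mirrors the single-cable argument of \Cref{claim:unsplit:1cfl:optimal}. First, fix an optimum solution $(I^*,\mathbf{g^*},\mathbf{Q^*})$ of the \kcfl instance without resource augmentation. Then build a $k$-cable analogue of \Cref{claim:unsplit:1cfl:solution}: a modified solution $(I^*,\mathbf{g'},\mathbf{Q^*})$ with the same facilities and the same bags $Q^*_e$ on every edge. In it, large demands follow the same paths, and small demands are grouped into bundles exactly as prescribed by the routing-and-promotion rules of the dynamic program. Under the rounded-down demands, each cable $(\mu_i,c_i)\in Q^*_e$ then carries at most $(1+2\epsilon)\mu_i$.

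The two-stage regrouping from the proof of \Cref{claim:unsplit:1cfl:solution} carries over verbatim, with $\mu$ replaced by $\mu_1$. It is a bottom-up pass routing all but one bundle from each child, followed by a top-down pass distributing bundles from the parent together with at most two extra bundles per edge. The argument never referred to cable types, only to the total small demand on each edge-direction and to bundle sizes below $\epsilon\mu_1$. So on every edge-direction the total demand is at most the optimum's demand plus two extra bundles, i.e.\ plus less than $2\epsilon\mu_1$.

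Next I reassign to cables. Keep each large demand on the cable the optimum uses. Assign the non-extra bundles greedily to the cable with the largest residual capacity, measured relative to its own capacity $\mu_i$. Since each bundle has size below $\epsilon\mu_1\le\epsilon\mu_i$, no cable is exceeded by more than $\epsilon\mu_i$. As the total fits in the total capacity, some cable stays within capacity, and the two extra bundles (less than $2\epsilon\mu_1\le 2\epsilon\mu_i$ in total) are placed there. Rounding large demands down, and rounding promoted bundles as in the DP, only decreases loads. The resulting large-demand profile on each cable is therefore a vector in $\aset'_i$, and the whole edge is described by a decomposition that the recursion for $C[a]$ considers. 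Hence $c(C[a])\le c(Q^*_e)$ for the resulting $a$, and the residual small bundle $\dval_s$ is absorbed as in the cost computation.

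Finally, I conclude by induction over the tree, exactly as in \Cref{lemma:tree:dp_correctness_exists,lemma:tree:dp_correctness_optimal}. For every vertex $v$, the profile $(a,\dval_s,\bar\dval_s)$ of the modified solution on the edge to its parent appears in $\total(v)$ with cost at most the modified solution's cost inside $G_v$ plus edge $uv$. This cost equals the optimum's, since facilities and bags are unchanged, and \Cref{lem:uncrossing:edges:unsplit} justifies storing a signed $a$. The main obstacle I expect is the consistency between the regrouping and the DP. Specifically, I must check that each promoted bundle's rounding (to the best of $\delta_1,\dots,\delta_k$) produces an index recorded in $\aset$ and compatible with the $h$-assignments and leftover choices $x_y$ enumerated by the DP. I would first try to argue this by showing that the promoted value is at least $\epsilon\mu_1/2$ and below $\epsilon\mu_1\le\mu_1$, so it is always rounded to a multiple of $\delta_1$ with index in $[\ieps/2,\ieps[2]]$.
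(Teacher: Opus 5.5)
Your proposal is correct and is essentially the paper's own argument. The paper also reapplies \Cref{claim:unsplit:1cfl:solution} with $\mu_1$ in place of $\mu$, obtains load at most $\mu_i+2\epsilon\mu_1\le(1+2\epsilon)\mu_i$ on every cable, and concludes by optimal substructure; your greedy reassignment, the $\mathcal{A}'_i$/$C[a]$ step and the induction only make explicit what the paper leaves implicit.

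One small inaccuracy in your last step: a promoted bundle of value $d\in[\epsilon\mu_1/2,\epsilon\mu_1)$ need not be rounded to a multiple of $\delta_1$, since if some $\mu_i<2\mu_1$ the rounding rule may pick $\delta_i$. Nothing breaks, though, because any admissible $i$ gives an index $\lfloor d/\delta_i\rfloor\in[\epsilon^{-1}/2,\epsilon^{-2}]$ in block $i$, so the bundle is still recorded in $\mathcal{A}$.
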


\thmToAppendix{claim:unsplit:kcfl:optimal}

\begin{appendixproof}
\Cref{claim:unsplit:1cfl:solution} applies for multiple-cable instances, as it only concerns itself with small demands and thus can be applied for cable $(\mu_1, c_1)$.
Thus, there is a solution with optimal cost that uses the rounded-down demands and has total demand on each cable $(\mu,c)$ of $\mu +2\epsilon\mu_1 \leq (1+2\epsilon)\mu$.
Therefore, the proof follows by optimal substructure, as any solutions to a subproblem that have the same demand profile are functionally equivalent.
\end{appendixproof}

\begin{claim}[restate=claim:unsplit:kcfl:time,name=*]
\label{claim:unsplit:kcfl:time}
The algorithm to compute $(I, \mathbf g, \mathbf Q)$ runs in time $n^{O(\ieps[2]k)}$.
\end{claim}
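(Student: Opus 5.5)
The plan is to bound the size of every object the algorithm manipulates by $M := |\aset| = (2n+1)^{k(\ieps[2]-\ieps/2)} = n^{O(k\ieps[2])}$. Then each phase (rounding, the cost table $C$, and the dynamic program over the tree) will take time polynomial in $M$ and $n$, as in \Cref{claim:unsplit:1cfl:time}. Rounding the demands costs $O(nk)$ time, and making the tree binary via the earlier lemma costs $O(n)$ time, so both are negligible.

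\textbf{Cost table.} As argued in the text, the sets $\aset'_i$ for $i\in[k]$ are computed by enumerating $\asetplus$ in time $n^{O(k\ieps[2])}$. The table $C$ has one entry per $a\in\asetplus$, so at most $M$ entries. Each entry is filled by minimizing over pairs $(i,a')$ with $i\in[k]$ and $a'\in\aset'_i$, which is at most $kM$ candidates. Since every cable configuration with $a'\neq\zeroes$ strictly decreases $\sum a$, the recursion is well-founded. Processing the entries in increasing order of $\sum_{\iota,j} a_{\iota,j}$ therefore gives total time $O(kM^2)$. From there, $\cost(uv,a,\dval_s)$ is an $O(1)$ lookup after scaling by $\len(uv)$, plus placing the residual bundle.

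\textbf{DP size and transitions.} Each $\total(v)$ keeps only non-dominated tuples, and domination requires identical $(a,\dval_s,\bar\dval_s)$. Hence $|\total(v)|\le M\cdot(\ieps n/2)^2 = n^{O(k\ieps[2])}$. For a leaf, the algorithm enumerates $a\in\asetplus$ and $\bar\dval_s$, which is within this bound. For an internal node without a facility, it enumerates:
\begin{itemize}
\item pairs of child tuples, at most $|\total(v')|\cdot|\total(v'')|$ of them;
\item values $\bar\dval_s$, at most $O(\ieps n)$;
\item maps $h\colon D_v\to B_v$, at most $3^4$;
\item choices of $x_y$, at most $4^3$.
\end{itemize}
Each candidate is combined in $O(k\ieps[2])$ time, plus an $O(1)$ cost lookup. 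For a facility at $v$, the enumeration is over pairs of child tuples times $a\in\asetplus$ times $\bar\dval_s$. This can be reduced by first taking the minimum over the children independently, but even without that reduction the count is $n^{O(k\ieps[2])}$. The non-dominated filter is implemented by keeping the best cost per key in a table indexed by $(a,\dval_s,\bar\dval_s)$, which takes linear time in the number of candidates. Multiplying by $n$ vertices, and noting that recovering the solution by backtracking costs the same, gives $n^{O(k\ieps[2])}$ overall.

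\textbf{Main obstacle.} The delicate point is confirming that the index ranges really are bounded as claimed after promotion. A promoted bundle must land in a valid coordinate of $a$, with index at most $\ieps[2]$ for the relevant $\delta_i$; this holds because a promoted bundle has size below $\emu_1$. The entries of $a$ must also stay within $[-n,n]$; this holds because each coordinate counts distinct demands or bundles crossing an edge, and there are at most $n$ of these. Likewise, $\dval_s$ and $\bar\dval_s$ stay below $\ieps n/2$ by the leftover threshold. Once these invariants are stated, any tuple violating them can be discarded without loss, and the counting above goes through.
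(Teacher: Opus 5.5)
Your proposal is correct and follows the same route as the paper, which simply observes that $|\mathcal{A}| = n^{O(k\epsilon^{-2})}$. It then notes that the DP table sizes, the enumeration of the recursive combination rules, and the computation of the cable-cost table $C$ are all polynomial in $|\mathcal{A}|$ and $n$. Your version makes that counting explicit (the $O(kM^2)$ bound for $C$, the constant number of maps $h$ and choices $x_y$, and the range invariants on $a$, $\lambda_s$, $\bar\lambda_s$), which the paper leaves implicit.
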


\thmToAppendix{claim:unsplit:kcfl:time}

\begin{appendixproof}
We remark that $\aset$ has size $M = (2n)^{\ieps[2]k} = n^{O(\ieps[2]k)}$.
The size of the dynamic program, the running time of trying all possible combinations for the recursive rules, and the time to compute the installation costs for a given configuration are all polynomial in $M$ and $n$, and thus the running time is $n^{O(\ieps[2])k}$.
\end{appendixproof}

\ifdefined\appendixSection

\section{Deferred Proofs}
\label{sec:proofs}

\prepareAppendixSection
\flushchapterappendix

\fi

\bibliography{ref.bib}
\end{document}